\documentclass[11pt]{article}
\usepackage[utf8]{inputenc}
\usepackage[margin=1in]{geometry}
\usepackage[T1]{fontenc} 
\usepackage{graphicx}
\usepackage{mathpazo}
\usepackage{booktabs} 
\usepackage{multirow} 
\usepackage{multicol} 
\usepackage{color-edits} 
\addauthor[Grigoris]{gv}{teal}

\usepackage[dvipsnames]{xcolor}
\definecolor{niceRed}{RGB}{190,38,38}
\definecolor{blueGrotto}{HTML}{059DC0}
\definecolor{royalBlue}{HTML}{057DCD}
\definecolor{navyBlue}{HTML}{0B579C}
\definecolor{limeGreen}{HTML}{81B622}
\definecolor{nicePurple}{HTML}{9c27b0}
\definecolor{lightRoyalBlue}{HTML}{def2ff} 
 \definecolor{gold}{HTML}{ffa300}

\usepackage{natbib}

\usepackage[utf8]{inputenc}

\usepackage{amsmath}
\usepackage{amssymb}
\usepackage{amsthm}
\usepackage{bbm}
\usepackage{blkarray}
\usepackage{color}
\usepackage{enumerate}
\usepackage{float} 
\usepackage{subcaption}
\usepackage{tikz}
\usetikzlibrary{arrows.meta, positioning, calc, shapes, fit}
\usepackage{hyperref}
\usepackage[capitalise,noabbrev,nameinlink]{cleveref} 
\usepackage{xcolor}
\definecolor{DrexelBlue}{HTML}{003058}
\definecolor{commentblue}{HTML}{2A6EC7}

\usepackage{hyperref}
\hypersetup{
  colorlinks = true,  
  urlcolor = {commentblue},
  linkcolor = {DrexelBlue}, 
  citecolor = {YellowOrange!85!black}
}
\usepackage{mathtools}
\usepackage[inline]{enumitem} 
\usepackage{tcolorbox}
\usepackage{nicefrac}
\usepackage{dsfont}
\usepackage[framemethod=TikZ]{mdframed}
\usepackage{todonotes}
\usepackage{changepage} 
\usepackage{pifont} 
\usepackage{thm-restate} 
\usepackage{centernot}

\tikzset{
  myNodeFlex/.style={
    draw,
    rectangle,
    rounded corners,
    text centered,
    minimum height=1.5em,
  }
}

\tikzset{
  myNode/.style={
    draw,
    rectangle,
    rounded corners,
    text centered,
    minimum height=1.5em,
    minimum width=3cm,
    text width=5cm,    
  }
}

\tikzset{
  myNodeNarrow/.style={
    draw,
    rectangle,
    rounded corners,
    text centered,
    minimum height=1.5em,
    minimum width=1cm,
  }
}

\tikzset{
  myNodeWide/.style={
    draw,
    rectangle,
    rounded corners,
    text centered,
    minimum height=1.5em,
    minimum width=6cm,
  }
}

\usepackage[linesnumbered,noend,ruled,noline]{algorithm2e} 
\SetAlFnt{\small}
\SetAlCapFnt{\small}
\SetAlCapNameFnt{\small}
\SetAlCapHSkip{0pt}
\IncMargin{-\parindent}
\SetArgSty{textnormal}
 
\SetCommentSty{mycommfont}
 \usepackage{enumitem}

\usepackage{soul} 

\theoremstyle{plain} 
\newtheorem{theorem}{Theorem}[section]
\newtheorem{corollary}[theorem]{Corollary}
\newtheorem{proposition}[theorem]{Proposition}
\newtheorem{lemma}[theorem]{Lemma}

\newtheorem{definition}{Definition}

\newtheorem*{definition*}{Definition}

\theoremstyle{definition} 

\theoremstyle{remark} 

\newtheorem{remark}{Remark}

\AfterEndEnvironment{definition}{\noindent\ignorespaces}
\AfterEndEnvironment{infdefinition}{\noindent\ignorespaces}
\AfterEndEnvironment{example}{\noindent\ignorespaces}
\AfterEndEnvironment{assumption}{\noindent\ignorespaces}
\AfterEndEnvironment{lemma}{\noindent\ignorespaces}
\AfterEndEnvironment{theorem}{\noindent\ignorespaces}
\AfterEndEnvironment{proposition}{\noindent\ignorespaces}
\AfterEndEnvironment{fact}{\noindent\ignorespaces}
\AfterEndEnvironment{question}{\noindent\ignorespaces}
\AfterEndEnvironment{corollary}{\noindent\ignorespaces}
\AfterEndEnvironment{model}{\noindent\ignorespaces}
\AfterEndEnvironment{remark}{\noindent\ignorespaces}
\AfterEndEnvironment{proof}{\noindent\ignorespaces}
\AfterEndEnvironment{fact}{\noindent\ignorespaces}
\AfterEndEnvironment{minftheorem}{\noindent\ignorespaces}
\AfterEndEnvironment{inftheorem}{\noindent\ignorespaces}
\AfterEndEnvironment{maintheorem}{\noindent\ignorespaces}
\AfterEndEnvironment{restatable}{\noindent\ignorespaces}
\AfterEndEnvironment{observation}{\noindent\ignorespaces}

\crefname{section}{Section}{Sections}
\crefname{theorem}{Theorem}{Theorems}
\crefname{theorem*}{Theorem}{Theorems}
\crefname{inftheorem}{Informal Theorem}{Informal Theorems}
\crefname{assumption}{Assumption}{Assumptions}
\crefname{lemma}{Lemma}{Lemmas}
\crefname{definition}{Definition}{Definitions}
\crefname{infdefinition}{Informal Definition}{Informal Definitions}
\crefname{conjecture}{Conjecture}{Conjectures}
\crefname{corollary}{Corollary}{Corollaries}
\crefname{construction}{Construction}{Constructions}
\crefname{conjecture}{Conjecture}{Conjectures}
\crefname{claim}{Claim}{Claims}
\crefname{observation}{Observation}{Observations}
\crefname{proposition}{Proposition}{Propositions}
\crefname{fact}{Fact}{Facts}
\crefname{question}{Question}{Questions}
\crefname{problem}{Problem}{Problems}
\crefname{remark}{Remark}{Remarks}
\crefname{example}{Example}{Examples}
\crefname{equation}{Equation}{Equations}
\crefname{appendix}{Appendix}{Appendices}
\crefname{algorithm}{Algorithm}{Algorithms}
\crefname{model}{Model}{Models}
\crefname{figure}{Figure}{Figures}
\crefname{condition}{Condition}{Conditions}
\crefname{algocf}{algorithm}{algorithms}
\Crefname{algocf}{Algorithm}{Algorithms}

\usepackage{etoolbox}

\BeforeBeginEnvironment{subenvironment}{\white{.}\\ \vspace{-10mm}}
\AfterEndEnvironment{subenvironment}{\vspace{4mm}}

\newcommand{\eat}[1]{}

\makeatletter
\newcommand{\tagnum}[2]{%
    \refstepcounter{equation}%
    \tag{#1) \ (\theequation}%
    \protected@write \@auxout {}{%
        \string \newlabel {#2}{{\theequation}{\thepage}{}{equation.\theequation}{}}%
    }%
}
\makeatother

\newcommand{\white}[1]{\textcolor{white}{#1}}

\newcommand{\N}{\mathbb{N}}

\newcommand{\poly}{\mathrm{poly}}

\newcommand{\eps}{\varepsilon}
\renewcommand{\epsilon}{\varepsilon}

\makeatletter
\newcommand*{\tran}{{\mathpalette\@tran{}}}
\newcommand*{\@tran}[2]{\raisebox{\depth}{$\m@th#1\intercal$}}
\makeatother

\renewcommand{\tilde}{\widetilde}

\def\<{\langle}
\def\>{\rangle}

\DeclareMathAlphabet{\mathpzc}{OT1}{pzc}{m}{it}

\newcommand{\customcal}[1]{\euscr{#1}}

\newcommand{\cC}{\customcal{C}}

\newcommand{\cS}{\customcal{S}}
\newcommand{\cT}{\customcal{T}}

\DeclareMathAlphabet{\mathdutchcal}{U}{dutchcal}{m}{n}
\SetMathAlphabet{\mathdutchcal}{bold}{U}{dutchcal}{b}{n}
\DeclareMathAlphabet{\mathdutchbcal}{U}{dutchcal}{b}{n}
 
\DeclareMathAlphabet\urwscr{U}{urwchancal}{b}{n}%
\DeclareMathAlphabet\rsfscr{U}{rsfso}{m}{n}
\DeclareMathAlphabet\euscr{U}{eus}{m}{n}
\DeclareFontEncoding{LS2}{}{}
\DeclareFontSubstitution{LS2}{stix}{m}{n}
\DeclareMathAlphabet\stixcal{LS2}{stixcal}{m} {n}

\renewcommand{\paragraph}[1]{\bigskip \noindent\textbf{#1}~~}

        \usepackage{tikz}
\usetikzlibrary{patterns}

\renewcommand{\hat}{\widehat}
\renewcommand{\tilde}{\widetilde}

\DeclareMathOperator{\KL}{D_{KL}}

\newcommand{\TV}{d_{\mathrm{TV}}}
\newcommand{\Prob}{\mathbb{P}}
\newcommand{\1}{\mathbf{1}}

\newcommand{\COND}{\mathsf{COND}}

\newcommand{\unifzero}{\mathsf{UNIF0}}
\newcommand{\fail}{\mathsf{FAIL}}
\newcommand{\bits}{\{-1,+1\}}

\newcommand{\co}{\texttt{CO}}
\newcommand{\sellers}{\mathcal{S}}

\usepackage{array}
\newcolumntype{L}[1]{>{\raggedright\let\newline\\\arraybackslash\hspace{0pt}}m{#1}}
\newcolumntype{C}[1]{>{\centering\let\newline\\\arraybackslash\hspace{0pt}}m{#1}}
\newcolumntype{R}[1]{>{\raggedleft\let\newline\\\arraybackslash\hspace{0pt}}m{#1}}

\newmdenv[
    backgroundcolor=lightgray!10, 
    roundcorner=5pt,            
    linecolor=black,             
    linewidth=1pt,               
    innertopmargin=0pt,         
    innerbottommargin=0pt,      
    innerleftmargin=5pt,        
    innerrightmargin=5pt,       
    skipabove=2pt,              
    skipbelow=0pt               
]{curvybox}

\usepackage{newpxtext}
\usepackage{newpxmath}

\title{Learning Distributions from Multiple Data Providers}

\author{%
\begin{tabular}{cc}
\begin{minipage}[t]{0.4\textwidth}\centering
Jon Kleinberg\\
\small Cornell University\\
\small\href{mailto:kleinberg@cornell.edu}{kleinberg@cornell.edu}
\end{minipage}
&
\begin{minipage}[t]{0.4\textwidth}\centering
Amin Saberi\\
\small Stanford University\\
\small\href{mailto:saberi@stanford.edu}{saberi@stanford.edu}
\end{minipage}
\\[8ex]
\begin{minipage}[t]{0.4\textwidth}\centering
Xizhi Tan\\
\small Stanford University\\
\small\href{mailto:xizhi@stanford.edu}{xizhi@stanford.edu}
\end{minipage}
&
\begin{minipage}[t]{0.4\textwidth}\centering
Grigoris Velegkas\\
\small Google Research\\
\small\href{mailto:gvelegkas@google.com}{gvelegkas@google.com}
\end{minipage}
\end{tabular}%
}
\date{}

\begin{document}

\maketitle

\begin{abstract}
Motivated by learning from heterogeneous and overlapping data providers, we study a stylized model of distribution learning from
restricted conditional samples. The goal is to learn an unknown distribution
$p$ on a finite domain $[n]$. The learner is given a fixed family of queryable sets
$\mathcal S \subseteq 2^{[n]}$, and each query to
$S \in \mathcal S$ returns an independent sample from the conditional
distribution $p(\cdot \mid S)$.

Learnability is governed by the \emph{co-occurrence graph} associated with
$\mathcal S$: two domain elements are adjacent if they appear together in some
queryable set. Pointwise consistency is achievable  when this graph is
connected on the target support. PAC learning requires more: it is possible
 when the co-occurrence graph is complete.

The optimal sample complexity of PAC learning ranges from nearly linear to quadratic.
Every query family with complete co-occurrence graph admits sample complexity
$\widetilde O(n^2/\epsilon^2)$, and this bound is tight in the worst case. On
the other hand, if $[n] \in \sellers$ ordinary sampling improves the bound to $\Theta(n/\epsilon^2)$, and this cannot be improved further even if every set is queryable. More generally, we identify
\emph{hierarchical comparability} as a sufficient structural condition on
$\mathcal S$ under which the optimal complexity is nearly linear,
$\widetilde \Theta(n/\epsilon^2)$, with pairwise query families as a canonical
example. Finally, the full range of polynomial rates between linear and
quadratic is attainable: for every $\alpha \in (1,2)$, there exists a query
family with optimal PAC rate $\widetilde \Theta(n^\alpha/\epsilon^2)$.
\end{abstract}

\section{Introduction}

Modern generative models have benefited enormously from broad public and
web-scale data,
but high-quality, human-generated data may
become a limiting resource for continued model scaling \citep{VillalobosEtAl2024}.
As model developers look beyond broad web crawling, curated and provider-controlled data sources become increasingly important. The central question is not simply how much data to acquire, but from which
sources and how their coverage patterns affect learning.

This challenge stems from two inherent phenomena: \emph{coverage} and \emph{redundancy}. A provider specialized in a narrow domain offers deep localized coverage, but does not reveal much about how that domain should be weighted within the global distribution. Conversely, querying multiple broad providers often introduces wasteful redundancy if they have a lot of overlap.

To isolate these issues, we study perhaps the simplest possible version of
generative modeling: learning an unknown discrete distribution $p$ over a finite
domain $[n]$. The elements of $[n]$ may be viewed as document clusters, or more
generally as latent content types. A data provider is represented by a subset
$S \subseteq [n]$. When queried, the provider returns an independent sample from
the conditional distribution $p(\cdot \mid S)$. The learner has access only to a
fixed collection of such subsets, denoted by $\mathcal S$, and its goal is to
output a distribution close to $p$ in total variation distance.

Our framework raises two fundamental questions. The first is qualitative: what
overlap structure is necessary and sufficient for learning, either pointwise or
in the PAC sense? The second is quantitative: once learning is possible, what is
the optimal sample complexity? In particular, does the same structure that
determines learnability also determine the rate of learning, or can different
query families with the same qualitative learnability behavior have different
sample complexities? Since our focus is on high-dimensional domains, we track
the dependence on both the domain size $n$ and the target accuracy $\epsilon$.

This is a restricted variant of the conditional-sampling model, commonly
denoted by $\COND$. Prior work on $\COND$ 
typically assumes unrestricted, or otherwise well-structured, access to
conditional samples, and uses this access to accelerate distribution testing,
frequency estimation, and related tasks~\citep{CanonneRonServedio2015,ChakrabortyFischerGoldhirshMatsliah2016,
Canonne2020Survey}.
In contrast, because we are motivated by settings in which data providers might offer sources that overlap in complex and arbitrary ways, our goal is to understand the learning landscape induced by an arbitrary fixed query family $\mathcal S$, with a focus on learning in total variation distance.

\subsection{Our Results}\label{sec:ourresult}

Our results give a structural theory of distribution learning in total variation
distance from an arbitrary query family $\mathcal S$. They divide into two
parts: qualitative learnability, which identifies when recovery is possible, and
quantitative sample complexity, which determines how the geometry of
$\mathcal S$ controls the optimal rate.

\paragraph{Characterization of learnability.}
The qualitative results, summarized in \cref{tab:market_capacity}, give exact
combinatorial characterizations of PAC learnability and pointwise consistency.
The central object is the \emph{co-occurrence graph}\footnote{In the hypergraph
theory literature, this object appears under several names, including
2-section, clique graph, representing graph, primal graph, and Gaifman
graph, see, e.g.,\citep{Bretto2013Hypergraph}.} induced by the query family $\mathcal S$ on a
target support $U$; see \cref{def:cograph}. Its vertices are the elements of
$U$, and two vertices $x,y \in U$ are adjacent exactly when some queryable set
$S \in \mathcal S$ contains both. This graph is denoted by
$\co(\mathcal S,U)$.

\begin{table}[h]
\centering
\resizebox{0.95\linewidth}{!}{
\renewcommand{\arraystretch}{1.5}
\begin{tabular}{@{}p{6cm}p{9.5cm}@{}}
\toprule
\textbf{Learning guarantee} & \textbf{Necessary and sufficient condition for target support $U$} \\ \midrule
{PAC Learnability of $\mathcal P_U$} & $U$ induces a \textbf{complete} (clique) $\co(\sellers, U)$ \\ 
{Pointwise Consistency on $\mathcal P_U^+$} & $U$ induces a \textbf{connected} $\co(\sellers, U)$ \\ 
{Pointwise Consistency on $\mathcal P_U$} & $U$ induces a \textbf{complete} $\co(\sellers, U)$ \\ 
\bottomrule
\end{tabular}
}
\caption{ Learnability characterization of a query family $\sellers$. Given $U \subseteq [n]$,
$\mathcal P_U^+$ denotes the set of distributions supported on $U$ and $\mathcal P_U$ the set of distributions supported on any subset of $U$.}
\label{tab:market_capacity}
\end{table}

We first consider PAC learnability (\cref{def:uniform}), which requires that for any error target $\epsilon$ there is a single sample complexity bound that holds simultaneously over all distributions in the target class. We show that this is possible if and only if the co-occurrence graph on $U$ is complete (\cref{thm:market_pac}). (Note that a complete co-occurrence graph is consistent with a wide range of possible underlying hypergraph structures.)
 We then consider pointwise consistency (\cref{def:universal}), where the learning rate is allowed to be distribution-dependent. Here, the requirement on $\co(\sellers,U)$ is significantly weaker: assuming the learner knows the exact support of the distribution (the target class is $\mathcal{P}_U^+$), connectivity of the co-occurrence graph is both necessary and sufficient (\cref{thm:market_pointwise}).\footnote{If the exact support is unknown and allows for zero-mass elements (the class $\mathcal{P}_U$), the potential for zero-mass ``bridge'' points breaks this connectivity guarantee, and pointwise consistency reverts to requiring a complete graph.}

 \paragraph{Optimal sample complexity.} 
Having established when learning is possible, the next question is the optimal
sample complexity of PAC learning. The resulting landscape is summarized in
\cref{fig:complexity_landscape}. As an ideal benchmark, with maximal query power
(that is when $\sellers$ consists of all subsets of $[n]$) the optimal bound
is $\Theta(n/\epsilon^2)$ (\cref{thm:full-cond-main}).\footnote{While this
upper bound can be trivially achieved by repeatedly querying the full domain
$[n]$, such direct access may be precluded by
$\sellers$.}

Completeness of the co-occurrence graph characterizes PAC learnability, but it
does not determine the optimal rate. In contrast to classical settings such as
binary classification, where the qualitative condition for learnability
(finite VC dimension) also pins down the quantitative sample complexity
\citep{shalev2014understanding}, complete query families can exhibit much
larger complexity. In particular, there exist families $\sellers$ that induce
complete co-occurrence graphs for which any PAC learner requires
$\Omega(n^2/\epsilon^2)$ samples (\cref{thm:quadratic-lower-main}). Thus,
relative to full conditional access, restricted query families can
incur an additional factor of $n$, even when PAC learning is possible. This is
the true worst-case degradation: completeness alone guarantees a general
$\tilde{O}(n^2/\epsilon^2)$ sample complexity bound
(\cref{thm:generic-upper-main}).

The gap between the nearly linear benchmark and the quadratic worst case raises
a natural structural question: which properties of $\sellers$ allow for better
bounds? Favorable local structure can lead to nearly linear sample complexity.
In particular, \emph{hierarchical comparability} (\cref{def:HC}) is a
sufficient condition under which the optimal complexity is
$\tilde{\Theta}(n/\epsilon^2)$. This condition requires that the domain admit a
recursive, tree-like partition in which internal nodes have appropriate
``local queries''. 
The pure pairwise-query setting, closely related to PCOND and to
Bradley--Terry--Luce comparisons \citep{BradleyTerry1952}, where \(S\) consists of every pair
\(\{x,y\}\), serves as a canonical example.

Finally, linear and quadratic dependence on $n$ are not the only possible
rates. For every $\gamma \in (0,1)$, there exist families $\sellers$ that induce
complete co-occurrence graphs and have optimal sample complexity
$\tilde{\Theta}(n^{1+\gamma}/\epsilon^2)$
(\cref{thm:intermediate-rates}).

\begin{figure}[h]
\centering
\resizebox{\linewidth}{!}{
\begin{tikzpicture}[
    >={Stealth[scale=1.2]},
    font=\small,
    label node/.style={align=center, text width=4.5cm}
]

\def\xlin{2.5}
\def\xquad{10.5}

\fill[pattern=north east lines, pattern color=gray!100] (\xlin, -0.4) rectangle (\xquad, 0.4);

\draw[->, thick] (-2, 0) -- (12, 0);

\node[anchor=east, align=right, font=\bfseries] at (-0.2, 1) {Upper Bounds\\(Algorithms)};
\node[anchor=east, align=right, font=\bfseries] at (-0.2, -1) {Lower Bounds\\(Hardness)};

\draw[thick] (\xlin, -1.2) -- (\xlin, 1.2);
\filldraw[black] (\xlin, 0) circle (1.5pt) node[below left, font=\normalsize] {$\tilde{\Theta}\left(\frac{n}{\eps^2}\right)$};
\node[anchor=south, label node] at (\xlin, 1.2) {i.i.d. sample from $[n]$\\ \textbf{or}\\ Hierarchical comparability \\
(\cref{thm:tree-local-main})};
\node[anchor=north, label node] at (\xlin, -1.2) {Full conditional access \\ (\cref{thm:full-cond-main})};

\draw[thick] (\xquad, -1.2) -- (\xquad, 1.2);
\filldraw[black] (\xquad, 0) circle (1.5pt) node[below right, font=\normalsize] {$\tilde{\Theta}\left(\frac{n^2}{\eps^2}\right)$};
\node[anchor=south, label node] at (\xquad, 1.2) {General complete \\ query family
(\cref{thm:generic-upper-main})};
\node[anchor=north, label node] at (\xquad, -1.2) {Redundancy-dominated\\query family (\cref{thm:quadratic-lower-main})};

\node[anchor=north, align=center] (poly) at (6.5, -0.8) {Continuum of optimal rates\\$\tilde{\Theta}\left(\frac{n^{1+\gamma}}{\eps^2}\right)$ (\cref{thm:intermediate-rates})};
\draw[->, thick] (poly) -- (6.5, -0.35);

\end{tikzpicture}
}
\caption{Sample complexity landscape illustrating the spectrum from linear to quadratic bounds based on the structural properties of the query family.}
\label{fig:complexity_landscape}
\end{figure}

\paragraph{Takeaways.} We believe our results yield conceptual insights that extend beyond our stylized setting and capture natural phenomena in learning from heterogeneous data sources.
\begin{itemize}[leftmargin=*, topsep=0pt]
\item \textbf{Learnability does not determine sample complexity.}
The co-occurrence graph gives a sharp qualitative characterization of
learnability, but it does not determine the quantitative rate. Among PAC
learnable query families, the optimal sample complexity can range from
$\widetilde{\Theta}(n/\epsilon^2)$ to
$\widetilde{\Theta}(n^2/\epsilon^2)$, with intermediate polynomial rates also
possible. Thus, once enough overlap exists to make learning possible, finer
structural properties of $\sellers$ still govern the cost of learning.

\item \textbf{Redundant overlap can dilute signal.}
The quadratic lower bound is witnessed by query families in which every
queryable set contains a common high-mass element $M$ together with only a
constant number of other elements. When the target distribution assigns
constant mass to $M$ and spreads the remaining mass roughly uniformly over the
rest of the domain, most conditional samples reveal only the common element.
The overlap is therefore abundant but uninformative: it repeatedly exposes the
same dominant content while obscuring the low-mass distinctions needed for
learning.

\item \textbf{Hierarchical structure enables targeted comparisons.}
Near-linear rates become possible when the query family supports recursive
localization. Hierarchical comparability formalizes this principle: the domain
can be partitioned in a tree-like way, and the query family contains local
queries that let the learner compare and refine mass estimates within
successively smaller regions. This structure turns overlap into useful
comparisons rather than redundant sampling.
\end{itemize}

\subsection{Related Work}\label{sec:relatedwork}
We now overview several areas that are related to our work.

\paragraph{Conditional sampling.}
The conditional sampling model was introduced in distribution testing by \cite{CanonneRonServedio2015} and independently by \cite{ChakrabortyFischerGoldhirshMatsliah2016}. In the full $\COND$ model, the algorithm may query arbitrary subsets and receives samples from the target conditioned on the queried set. This access can yield polylogarithmic or even dimension-independent bounds for several testing, evaluation, and mass-estimation tasks; see the survey of \cite{Canonne2020Survey} and recent work of \cite{AdarFischerLevi2025}.
In particular, \cite{ChakrabortyFischerGoldhirshMatsliah2016,AdarFischerLevi2025} study the related problem of learning a distribution up to permutations (for well-structured $\sellers$) and obtain $\poly(\log(n))$ bounds. Note that this does \emph{not} contradict our lower bound, since this is an easier task than learning in TV distance.
Several variants of restricted conditional sampling have been studied, including non-adaptive conditional sampling \citep{KamathTzamos2019}, interval or subcube, and pairwise comparison models. Our contribution is not another algorithm for a given restriction, but a structural theory for arbitrary fixed query families.

\paragraph{Data mixtures and data markets.}
Our motivation is related to recent work on data selection for large-scale
model training and on data markets. In language-model pretraining, several
works study how the mixture proportions of different data domains affect model
performance \citep{xie2023doremi,ye2025datamixing}. Work on
data markets and data acquisition studies how training data should be procured
or valued when it is supplied by multiple strategic or heterogeneous providers
\citep{lu2024dataacquisition,huang2022skill}. Our model
isolates a basic statistical question that underlies these settings: when data
sources expose different overlapping parts of an unknown population, how does
the overlap pattern determine whether the global distribution can be learned,
and at what sample cost?

\paragraph{Luce, Plackett--Luce, and comparison models.}
On a positive support, the conditional law
\(p(x\mid S)=p_x/\sum_{y\in S}p_y\) is the Luce choice rule on menu \(S\)
\cite{Luce1959}. Thus our model is a sampling-based version of stochastic choice
from an arbitrary collection of menus. \cite{AlosFerrerMihm2025}
characterize Luce rationalizability for arbitrary menu collections, including
settings with zero choice probabilities, and discuss implications for
falsification, identification, and prediction. Their setting is structural and
passive, whereas we study adaptive sampling and finite-sample learning
guarantees.

The pairwise special case coincides with the Bradley--Terry--Luce comparison model \citep{BradleyTerry1952,Luce1959}. A large literature studies passive or designed estimation from pairwise comparisons, including topology-dependent minimax rates \citep{ShahBalakrishnanBradleyParekhRamchandranWainwright2016}, Rank Centrality \citep{NOS17}, and spectral ranking methods \citep{MaystreGrossglauser2015,AgarwalPatilAgarwal2018}. These works typically estimate latent quality scores or rankings under a given
comparison graph, often under regularity assumptions such as positive weights,
bounded dynamic range, or fixed sampling designs. By contrast, our loss is total variation distance to the normalized distribution, and our main question is how the allowed conditional queries determine both learnability and labeled distribution-learning complexity.

\paragraph{Local sampling and active subset-wise feedback.}
\cite{FotakisKalavasisTzamos2022} study perfect sampling from pairwise comparisons in a local-sampling model where comparison sets are drawn from an exogenous design. Our generic complete-graph upper bound uses a related Markov-chain idea, but the query family is chosen adaptively by the learner and the goal is PAC learning rather than exact sampling. Active subset-wise Plackett--Luce feedback has also been studied for ranking and best-item objectives \citep{SahaGopalan2019}; those objectives differ from learning the entire labeled distribution in TV.

\paragraph{Biased sampling and stochastic choice.}
Our model is also related to classical biased-sampling and selection-bias models.
In that literature, one observes samples from known biased versions of an unknown
distribution, often with density proportional to \(w_i(x)\,dP(x)\) for known weight
functions \(w_i\). Querying a provider set \(S\) in our model corresponds to the
special case \(w_S(x)=\mathbf{1}\{x\in S\}\). \cite{Var85} introduced
nonparametric estimation for such selection-bias models, and
\cite{GVW88} developed large-sample theory for the
NPMLE in multi-sample biased-sampling models, including identifiability,
consistency, and asymptotic efficiency results. In stochastic-choice language,
the conditional law \(p(x\mid S)=p_x/\sum_{y\in S}p_y\) is the Luce or
multinomial-logit choice rule on menu \(S\)~\cite{Luce1959}; recent work also studies
Luce rationalizability for arbitrary collections of menus~\cite{AlosFerrerMihm2025}. These
works are closest to our qualitative identifiability results. Our focus is
different: we study adaptive conditional-query access to an arbitrary fixed
family, distribution-free PAC learning in total variation, and finite-sample
minimax rates.

\paragraph{Multi-distribution learning.}
Multi-distribution learning was introduced by \cite{BHPQ17} under the name collaborative PAC learning. This line of work studies supervised learning, with a focus on binary classification, from \(k\) labeled distributions, with the goal of achieving small classification error \emph{simultaneously} across them. \cite{HJZ22} established optimal agnostic rates for finite hypothesis classes, while \cite{ZZC+24} and \cite{Pen24} obtained nearly optimal rates for classes of finite VC dimension. More recently, \cite{HSJ26} studied multi-distribution learning under bounded label noise and showed that the fast rates available in the single-distribution setting need not extend to multiple distributions. In this literature, the different distributions are part of the learning objective: the learner is evaluated by its classification error on each distribution or by its worst-case error across them. Our setting has a different objective. There is a single target distribution \(p\), and the provider-specific conditional distributions \(p(\cdot\mid S)\) describe only the learner's restricted access to \(p\). The learner is not evaluated separately on these conditional distributions; rather, it must recover their relative normalizations and reconstruct the single global distribution \(p\) in total variation distance. Moreover, the focus of our setting is distribution learning whereas the focus of that line of work is mostly binary classification.

\section{Preliminaries}\label{sec:prelim}
Let $[n] = \{1, \dots, n\}$ denote the finite universe of elements (e.g., a vocabulary of tokens). 
The objective to learn is a discrete probability distribution $p$ over $[n]$, belonging to the probability simplex
\begin{equation*}
    \Delta([n]) := \left\{ p \in \mathbb{R}_{\ge 0}^n : \sum_{x=1}^n p_x = 1 \right\}.
\end{equation*}

We model the heterogeneous data providers as a fixed query family $\mathcal{S} = \{S_1, \dots, S_k\}$, where each $S_i \subseteq [n]$ is non-empty. Each $S_i$ represents the part of the domain the $i$-th data provider can generate samples from. For any set $S \subseteq [n]$, we denote its total probability mass under the target distribution as $p(S) := \sum_{x \in S} p_x$.

The learner may not be able to
sample directly from the global distribution $p$. Instead,
it can only query provider sets $S\in\sellers$. The family $\mathcal S$ is fixed in advance, but at each round the learner may
adaptively choose which set $S\in\mathcal S$ to query. The sets in
$\mathcal S$ are the atomic query options: after selecting $S$, the learner
receives one sample from $p(\cdot\mid S)$ and cannot impose an additional
condition $T\subsetneq S$, unless $T$ is itself an allowed set in
$\mathcal S$.
We use the UNIF0 convention for
zero-mass queries\footnote{To handle edge cases where a queried provider's set has zero mass under the target distribution ($p(S) = 0$), we adopt the standard UNIF0 convention used in the conditional sampling literature, where the oracle returns a uniformly random point from the queried set ($Q_{p,S} := u_S$). An alternative is the FAIL convention, where the oracle returns a failure symbol. The two conventions lead to the same structural criteria and sample bounds in all cases, so we proceed with UNIF0 for clarity.}. Thus, for every nonempty $S\subseteq[n]$, define the
one-query response law $Q_{p,S}$ on $[n]$ by
\[
    Q_{p,S}(x)
    :=
    \begin{cases}
        \dfrac{p_x}{p(S)}\,\mathbf 1\{x\in S\},
        & p(S)>0, \\[1.25em]
        \dfrac{1}{|S|}\,\mathbf 1\{x\in S\},
        & p(S)=0.
    \end{cases}
\]
Equivalently, if $p(S)>0$, then $Q_{p,S}=p(\cdot\mid S)$, while if
$p(S)=0$, then $Q_{p,S}=u_S$, the uniform distribution on $S$.
We denote the family of one-query response laws accessible to the learner as $\text{Obs}_{\mathcal{S}}(p) := (Q_{p,S})_{S \in \mathcal{S}}$.

Throughout this work, all probability spaces are finite.  For two distributions $P,Q$ on the same finite set $\Omega$, let
\[
  \TV(P,Q):=\frac12\sum_{\omega\in\Omega}|P(\omega)-Q(\omega)|
\]
be the total variation distance and
\[
  \KL(P\|Q):=\sum_{\omega\in\Omega}P(\omega)
  \log\frac{P(\omega)}{Q(\omega)}
\]
the Kullback--Leibler divergence, with the usual convention that the summand is $0$ when $P(\omega)=0$ and the divergence is $+\infty$ if there exists $\omega$ with $Q(\omega) = 0, P(\omega) > 0.$ All logarithms are natural.  If $V,X$ are finite random variables, $H(V)$ denotes entropy and $I(V;X)$ denotes mutual information.
If $V$ and $X$ are random variables taking values in finite sets $\mathcal{V}$ and $\mathcal{X}$ with joint distribution $P_{V,X}$ and marginals $P_V$ and $P_X$, the Shannon entropy of $V$ is
\[
  H(V) := \sum_{v \in \mathcal{V}} P_V(v) \log \frac{1}{P_V(v)},
\]
and their mutual information is defined as
\[
  I(V;X) := \KL(P_{V,X} \| P_V \otimes P_X),
\]
where $P_V \otimes P_X$ denotes the product distribution $(P_V \otimes P_X)(v,x) := P_V(v)\cdot P_X(x)$.
Moreover, for two strings $\sigma,\tau\in\mathcal A^m$ over a common alphabet $\mathcal A$,
their Hamming distance is
\[
  d_H(\sigma,\tau)
  :=
  \bigl|\{j\in[m]: \sigma_j\neq \tau_j\}\bigr|
  =
  \sum_{j=1}^m \mathbf 1\{\sigma_j\neq \tau_j\}.
\]

We will use the following elementary Bernoulli concentration bound.  If $B_1,\ldots,B_N$ are independent random variables in $\{0,1\}$ with common mean $\theta$, then for every $u>0$,
\[
  \Prob\left[\left|\frac1N\sum_{t=1}^N B_t-\theta\right|\ge u\right]
  \le 2\exp(-2Nu^2).
\]

\paragraph{Notions of learnability.}
Throughout our work we study two notions of learnability for a target class $\mathcal{P} \subseteq \Delta([n])$ and query family $\sellers$. Given $U \subseteq [n]$, we denote \[\mathcal P_U^+
    :=
    \left\{
        p\in\Delta([n]) : p_x>0 \text{ iff } x\in U
    \right\} \quad \text{and} \quad \mathcal P_U
    :=
    \left\{
        p\in\Delta([n]) : p_x=0 \text{ for all } x\notin U
    \right\},\] i.e., $\mathcal{P}_{U}^{+}$ consists of distributions with support exactly $U$, while $\mathcal{P}_{U}$ allows any support $ \subseteq U$. 
    
    We begin with the standard PAC framework, requiring a single sample bound that holds simultaneously for the entire class.

\begin{definition}[PAC learnability]\label{def:uniform}
Fix a target class $\mathcal{P} \subseteq \Delta([n])$. We say that $\mathcal{S}$ PAC learns $\mathcal{P}$ if there exists a sample-complexity function $m: (0,1)^2 \to \N$ such that for every $\epsilon, \delta \in (0,1)$ there is an adaptive randomized learner which, for every target distribution $p \in \mathcal{P}$ makes at most $m(\epsilon, \delta)$ queries and outputs $\hat{p}$ satisfying 
\begin{equation*}
    \mathbb{P}[d_{TV}(\hat{p}, p) \le \epsilon] \ge 1 - \delta \,,
\end{equation*}
where the probability is taken with respect to the randomness of the samples and the internal randomness of the learner.
The optimal PAC query complexity is denoted by
$q^*_{\sellers,\mathcal{P}}(\eps,\delta).$
\end{definition}

We also consider the weaker notion of pointwise consistency, which allows the required number of samples to achieve some given error bound to depend on the specific target distribution.

\begin{definition}[Pointwise consistency]\label{def:universal}
Fix a target class $\mathcal{P} \subseteq \Delta([n])$. We say that $\mathcal{S}$ pointwise consistently learns $\mathcal{P}$ if there exists a sequence of adaptive randomized learners $(A_t)_{t\ge 1}$, where $A_t$ makes $t$ queries on $\mathcal{S}$, such that for every $p \in \mathcal{P}$ and every $\epsilon, \delta \in (0,1)$ there exists $t_0 = t_0(p, \epsilon, \delta)$ such that for all $t \geq t_0$ 
\begin{equation*}
    \mathbb{P}_p[d_{TV}(A_t, p) \le \epsilon] \ge 1 - \delta \,. 
\end{equation*}
\end{definition}

Note that PAC learnability implies pointwise consistency.

\begin{proposition}\label{prop:pac_to_pointwise}
If $\sellers$ PAC learns a class $\mathcal{P}$, then $\sellers$ pointwise consistently learns $\mathcal{P}$.
\end{proposition}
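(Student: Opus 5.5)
The plan is to build the sequence $(A_t)_{t\ge1}$ from the PAC learners by letting the accuracy and confidence targets shrink as the budget $t$ grows. Fix the sample-complexity function $m$ from \cref{def:uniform}; for each pair $(\epsilon,\delta)$ let $L_{\epsilon,\delta}$ be the corresponding PAC learner, which uses at most $m(\epsilon,\delta)$ queries. Take the diagonal sequence $\epsilon_k=\delta_k=1/(k+1)$ for $k\ge 1$, and set
\[
k(t):=\max\{k\ge 1:\ m(\epsilon_j,\delta_j)\le t \text{ for all } j\le k\},
\]
with $k(t):=0$ if no such $k$ exists.

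The learner $A_t$ runs $L_{\epsilon_{k(t)},\delta_{k(t)}}$ when $k(t)\ge1$ and otherwise outputs an arbitrary fixed distribution (say uniform). In either case it pads with arbitrary extra queries, e.g.\ repeatedly querying some fixed $S\in\sellers$ and discarding the answers, so that it makes exactly $t$ queries. This is legitimate because the PAC learner makes at most $m(\epsilon_{k(t)},\delta_{k(t)})\le t$ queries, and extra queries do not affect its output.

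Next I check that $k(t)\to\infty$. The function $k(t)$ is nondecreasing in $t$. For any $K$, once $t\ge \max_{j\le K} m(\epsilon_j,\delta_j)$ we get $k(t)\ge K$, and this maximum is finite because it is taken over finitely many values.

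Now fix $p\in\mathcal P$ and $\epsilon,\delta\in(0,1)$, and pick $K$ with $1/(K+1)\le\min(\epsilon,\delta)$. Let $t_0:=\max_{j\le K} m(\epsilon_j,\delta_j)$. For every $t\ge t_0$ we have $k(t)\ge K$, so $\epsilon_{k(t)}\le\epsilon$ and $\delta_{k(t)}\le\delta$. The PAC guarantee then gives
\[
\Prob_p[\TV(A_t,p)\le\epsilon]\;\ge\;\Prob_p[\TV(A_t,p)\le\epsilon_{k(t)}]\;\ge\;1-\delta_{k(t)}\;\ge\;1-\delta.
\]
This $t_0$ in fact does not depend on $p$, which is stronger than \cref{def:universal} requires.

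There is no real obstacle. The only point needing care is that $m$ need not be monotone, which is why $k(t)$ is defined as a prefix condition over all $j\le k$ (equivalently, one could replace $m$ by the running maximum $\max_{j\le k}m(\epsilon_j,\delta_j)$). The other point is the "exactly $t$ queries" requirement, which the padding handles.
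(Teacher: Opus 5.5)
Your proof is correct and takes essentially the same approach as the paper: a diagonal sequence of PAC learners whose accuracy and confidence shrink to zero, with non-monotonicity of $m$ handled by a running maximum or prefix condition. One small fix is needed: as written, $k(t)$ is undefined when $m(\epsilon_j,\delta_j)\le t$ for every $j$ (for example, a degenerate class learnable with a bounded number of queries), so add the cap $k\le t$, just as the paper restricts to $s\le k$.
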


\begin{proof}
For each $s\ge 1$, let $B_s$ be a PAC learner with parameters
$(\varepsilon_s,\delta_s)=(1/s,1/s)$, and let $N_s$ be its query bound. Define
$M_s:=\max_{1\le j\le s}N_j.$
Given budget $k$, let
$s(k):=\max\{s\le k:M_s\le k\},$
if this set is nonempty; otherwise output an arbitrary fixed distribution. If
$s(k)$ is defined, let $A_k$ run $B_{s(k)}$. Then $A_k$ uses at most
$M_{s(k)}\le k$ queries. Also $s(k)\to\infty$, since for every fixed $s$,
we have $s(k)\ge s$ whenever $k\ge \max\{s,M_s\}$.
Fix $p\in\mathcal P$ and $\varepsilon,\delta\in(0,1)$. Choose $s_0$ with
$1/s_0\le \min\{\varepsilon,\delta\}$. For all large enough $k$,
$s(k)\ge s_0$, and hence
\[
  \Pr_p[\TV(A_k,p)\le \varepsilon]
  \ge
  \Pr_p[\TV(B_{s(k)},p)\le 1/s(k)]
  \ge
  1-\frac1{s(k)}
  \ge
  1-\delta. \qedhere
\]
\end{proof}

We defer several technical lemmas that are useful for our results to 
\cref{app:tech-lemmas}.

\section{Learnability and Query Families}\label{sec:learnability}

We first characterize the qualitative limits of learning from a restricted collection of providers. In other words, we answer the following basic question: given a fixed query family $\sellers$, which distribution classes $\mathcal{P}$ can be learned from conditional queries to $\sellers$?

The answer is governed by how the available sets $S \in \sellers$ overlap on the target support. Informally, if two elements $x, y \in [n]$ appear together in some queryable set $S$, conditional samples from $S$ directly reveal their relative probability mass. If the target domain is connected via a chain of such overlaps, these local comparisons can be propagated to calibrate the entire distribution. We formalize this overlap structure through the \emph{co-occurrence graph}.

As we will show, the topology of this graph characterizes learnability. For pointwise consistency on a known support, mere \emph{connectivity} of the graph is necessary and sufficient. However, for PAC learning, where a single finite-sample bound must hold even for distributions that place arbitrarily small mass on intermediate bridge points, connectivity fails, and \emph{completeness} of the graph is required.

\begin{definition}[Co-occurrence graph]\label{def:cograph}
    For a given subset $U \subseteq [n]$ and query family $\sellers$, the co-occurrence graph $\co(\mathcal{S}, U)$ is defined as the graph with vertex set $U$ and edge set
    \begin{equation*}
        E\left(\co(\sellers, U)\right) := \big\{ \{x, y\} \subseteq U : \exists S \in \sellers \text{ with } x, y \in S \big\} \text{.}
    \end{equation*}
\end{definition}
In words, an edge $\{x,y\}$ indicates that some provider set contains both $x$ and $y$, and
therefore allows the learner to compare their relative masses through 
conditional queries to $S$. 
We say that $\co(\sellers, U)$ is \emph{connected} if there exists a path between every pair of elements of $U$. Furthermore, $\co(\sellers, U)$ is \emph{complete} if there exists an edge between every pair of distinct elements of $U$, i.e., every possible pair co-occurs within the domain of at least one provider.

\subsection{Pointwise Consistency: Connected Support}

We first consider the ability of a query family to pointwise consistently learn a target distribution. In this regime, the learner knows the exact support of the target distribution, and the sample complexity is allowed to depend on the specific target. Our main result here shows that 
connectedness of the co-occurrence graph is necessary and sufficient. Before providing the main proof, we give some intuition for the result. 

If $\co(\sellers,U)$ is connected, the learner fixes a spanning tree of this
graph. Each tree edge $\{u,v\}$ lies in some set $S\in\sellers$, and
conditional samples from $S$ can estimate the ratio $p_v/p_u$. By
multiplying these estimated ratios along paths from a root $r$, the learner
estimates $p_x/p_r$ for every $x\in U$. Conversely, suppose $\co(\sellers,U)$ is disconnected, with connected
components $C_1,\dots,C_m$, $m\ge2$. Then, no provider set $S$ can ``compare'' points $x,y$ from
different components. Hence, the learner can estimate masses \emph{within} each
component, but it cannot estimate the relative total weights of the components. 

\begin{theorem}
\label{thm:market_pointwise}
Fix a nonempty set $U\subseteq[n]$. Then,
 $\sellers$ pointwise consistently learns
$\mathcal P_U^+$ if and only if the co-occurrence graph
$\co(\sellers,U)$ is connected. 
\end{theorem}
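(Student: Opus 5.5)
We prove the two directions separately. The sufficiency direction follows the spanning-tree idea above, and the necessity direction uses an indistinguishability argument between two distributions in $\mathcal P_U^+$.

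\textbf{Sufficiency.} Assume $\co(\sellers,U)$ is connected. Fix a root $r\in U$ and a spanning tree $T$ of $\co(\sellers,U)$. For each tree edge $\{u,v\}$, fix some $S_{uv}\in\sellers$ containing both $u$ and $v$.

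The learner $A_t$ splits its budget of $t$ queries evenly among the $|U|-1$ tree edges. For each edge it queries $S_{uv}$ about $t/(|U|-1)$ times and records the counts $N_u$ and $N_v$ of returned samples equal to $u$ and to $v$. Since $p\in\mathcal P_U^+$, we have $p(S_{uv})\ge p_u>0$, so each sample equals $u$ with probability $p_u/p(S_{uv})>0$, and similarly for $v$. By the Bernoulli concentration bound in \cref{sec:prelim}, both empirical frequencies converge in probability to these positive values. Hence the ratio estimate $\hat\rho_{uv}:=N_v/N_u$ (set to $1$ if $N_u=0$) converges in probability to $p_v/p_u$.

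For each $x\in U$, define $\hat w_x$ as the product of the estimated ratios along the tree path from $r$ to $x$, with $\hat w_r=1$. The learner outputs $\hat p_x=\hat w_x/\sum_{y\in U}\hat w_y$ for $x\in U$ and $\hat p_x=0$ otherwise; this uses the known support $U$. The map from the finitely many ratios $(p_v/p_u)_{\{u,v\}\in T}$ to $p$ is continuous at the true values, since all of them are positive and the normalizer is at least $1$. The continuous mapping theorem then gives $\TV(\hat p,p)\to 0$ in probability. This is exactly pointwise consistency, with $t_0$ depending on $p$ through $\min_{x\in U}p_x$ and through the masses $p(S_{uv})$.

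\textbf{Necessity.} Suppose $\co(\sellers,U)$ is disconnected, with connected components $C_1,\dots,C_m$ and $m\ge2$. Pick any $p\in\mathcal P_U^+$. Define $p'$ by scaling $p$ on $C_1$ by $\lambda$ and on $U\setminus C_1$ by $\mu$, where $\lambda\ne\mu$ are chosen so that $p'$ is normalized and $p'(C_1)\ne p(C_1)$. For example, take $p'(C_1)=p(C_1)/2$. Then $p'\in\mathcal P_U^+$ and $\TV(p,p')=|p(C_1)-p'(C_1)|=:2\eta>0$.

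Next we show that $Q_{p,S}=Q_{p',S}$ for every $S\in\sellers$. The set $S\cap U$ lies within a single component: any two elements of $S\cap U$ co-occur in $S$ and are therefore adjacent. If $S\cap U=\emptyset$, both laws are $u_S$. Otherwise $p'$ restricted to $S$ is a constant multiple of $p$ restricted to $S$, and both masses are positive, so the conditionals coincide.

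Consequently, for any adaptive randomized learner, the full transcript of queries and responses has the same law under $p$ and under $p'$, and hence so does the output. Take $\epsilon<\eta$. The events $\{\TV(A_t,p)\le\epsilon\}$ and $\{\TV(A_t,p')\le\epsilon\}$ are disjoint by the triangle inequality. So for every $t$, at least one of them has probability at most $1/2$ under this common law. Hence, for $\delta<1/2$, the requirement fails for infinitely many $t$ for at least one of $p$ and $p'$, so no $t_0$ can exist for it.

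\textbf{Main obstacle.} The only delicate point is that pointwise consistency quantifies per distribution, so the necessity argument must handle a fixed learner sequence. The disjointness argument above handles this: for each $t$ it singles out at least one failing target, and since there are only two targets, one of them fails for infinitely many $t$. The remaining steps are routine.
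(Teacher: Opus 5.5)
Your proposal is correct and follows essentially the same route as the paper. For sufficiency, you estimate edge ratios $p_v/p_u$ from a witness set for each spanning-tree edge, multiply them along root-to-vertex paths, and normalize. For necessity, you rescale components so that every allowed query, which meets at most one component of $U$, has an unchanged response law.

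The differences are cosmetic: you use convergence in probability with a fallback for $N_u=0$, where the paper uses lower truncation and almost-sure coupling. Your necessity step is more explicit than the paper's ``the map is not injective,'' since you show via disjoint $\epsilon$-balls that one of $p,p'$ must fail for infinitely many $t$. Treat the trivial case $|U|=1$ separately, since your budget split divides by $|U|-1$.
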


\begin{proof}
We prove the two directions separately.

\paragraph{Sufficiency.}
Assume first that $\co(\sellers,U)$ is connected. If $|U|=1$, the learner
outputs the point mass on the unique element of $U$, so assume $|U|\ge 2$.

Choose a spanning tree $\mathcal T$ of $\co(\sellers,U)$ and root it at some
vertex $r\in U$. Orient every tree edge away from $r$. For each directed tree
edge $e=(u\to v)$, choose one provider set
$
    S_e\in\sellers
$ and $
    u,v\in S_e .
$
Such a set exists by the definition of the co-occurrence graph.
Let
\[
    E_{\mathcal T}:=|\mathrm{Edges}(\mathcal T)|=|U|-1.
\]
We now define a sequence of learners $(A_t)_{t\ge 1}$ satisfying the query
counting convention in Definition~\ref{def:universal}: learner $A_t$ makes
exactly $t$ queries.

Given the total budget $t$, let
\[
    k_t:=\left\lfloor \frac{t}{E_{\mathcal T}}\right\rfloor .
\]
The learner queries each tree-edge provider set $S_e$ exactly $k_t$ times.
This uses $E_{\mathcal T}k_t\le t$
queries. 

For a directed tree edge $e=(u\to v)$ and endpoint $y\in\{u,v\}$, let $N_e^{(t)}(y)$
be the number of times the useful $k_t$ samples from $S_e$ return $y$, and define
\[
    \widehat\theta_e^{(t)}(y)
    :=
    \frac{N_e^{(t)}(y)}{k_t},
    \qquad y\in\{u,v\},
\]
whenever $k_t\ge 1$. If $k_t=0$, define the estimate arbitrarily; this affects
only finitely many values of $t$ and is irrelevant for consistency. For
$k_t\ge1$, define the lower-truncated frequencies
\[
    \widetilde\theta_e^{(t)}(y)
    :=
    \max\left\{\widehat\theta_e^{(t)}(y),\frac1{k_t}\right\},
    \qquad y\in\{u,v\}.
\]
and also let $\widehat\rho_r^{(t)}:=1.$
For $x\in U\setminus\{r\}$, let
\[
    r=v_0,v_1,\dots,v_\ell=x
\]
be the unique path from $r$ to $x$ in $\mathcal T$. Writing
$e_i=(v_{i-1}\to v_i)$ for the directed edges on this path, define
\[
    \widehat\rho_x^{(t)}
    :=
    \prod_{i=1}^{\ell}
    \frac{\widetilde\theta_{e_i}^{(t)}(v_i)}
         {\widetilde\theta_{e_i}^{(t)}(v_{i-1})}.
\]
Finally, define the normalized values
\[
    \widehat Z_t:=\sum_{z\in U}\widehat\rho_z^{(t)},
    \qquad
    \widehat p_x^{(t)}
    :=
    \frac{\widehat\rho_x^{(t)}}{\widehat Z_t}
    \quad (x\in U),
\]
and set $\widehat p_x^{(t)}:=0$ for $x\notin U$. The learner $A_t$ outputs
$\widehat p^{(t)}$.

Fix any target distribution $p\in\mathcal P_U^+$. Since $p$ has exact support
$U$, every point in $U$ has positive mass. In particular, $p_r>0.$
Moreover, for every tree edge $e=(u\to v)$, the chosen set $S_e$ contains both
$u$ and $v$, so $p(S_e)\ge p_u+p_v>0.$
Thus, every useful query to $S_e$ returns an ordinary conditional sample from
$Q_{p,S_e}=p(\cdot\mid S_e)$.

For $y\in\{u,v\}$, write
\[
    \theta_e(y)
    :=
    p(y\mid S_e)
    =
    \frac{p_y}{p(S_e)}.
\]
Because $p_y>0$ and $p(S_e)>0$, we have $\theta_e(y)>0$.

Couple the learners across stages by imagining, for each tree edge $e$, an
infinite i.i.d. stream of samples from $Q_{p,S_e}$, with learner $A_t$ using
the first $k_t$ samples from each stream. Since $k_t\to\infty$, we have that
\[
    \widehat\theta_e^{(t)}(y)
    \longrightarrow
    \theta_e(y)
    \qquad\text{almost surely}
\]
for every tree edge $e$ and every endpoint $y$ of $e$.

The lower truncation by $1/k_t$ does not change the limit. Indeed,
$1/k_t\to0$, while $\theta_e(y)>0$. Hence, almost surely, for all sufficiently
large $t$,
\[
    \widehat\theta_e^{(t)}(y)>\frac{\theta_e(y)}2
    \qquad\text{and}\qquad
    \frac1{k_t}<\frac{\theta_e(y)}2.
\]
Therefore, eventually, $\widetilde\theta_e^{(t)}(y) = \widehat\theta_e^{(t)}(y),$
and so
$\widetilde\theta_e^{(t)}(y)
    \longrightarrow
    \theta_e(y)
    \text{ almost surely}.$

Consequently, for every directed tree edge $e=(u\to v)$,
\[
    \frac{\widetilde\theta_e^{(t)}(v)}
         {\widetilde\theta_e^{(t)}(u)}
    \longrightarrow
    \frac{\theta_e(v)}{\theta_e(u)}
    =
    \frac{p(v\mid S_e)}{p(u\mid S_e)}
    =
    \frac{p_v}{p_u}
    \qquad\text{almost surely}.
\]
Multiplying these convergent edge ratios along the unique path from $r$ to
$x$ gives
\[
    \widehat\rho_x^{(t)}
    \longrightarrow
    \prod_{i=1}^{\ell}\frac{p_{v_i}}{p_{v_{i-1}}}
    =
    \frac{p_x}{p_r}
    \qquad\text{almost surely}.
\]
For the root, this same result holds because
\[
    \widehat\rho_r^{(t)}=1=\frac{p_r}{p_r}.
\]
Thus, if we define the true ratio vector
\[
    \rho_x:=\frac{p_x}{p_r},
    \qquad x\in U,
\]
then
$
    \widehat\rho_x^{(t)}\to \rho_x
    \text{ almost surely for every }x\in U.
$
Since $U$ is finite,
\[
    \widehat Z_t
    =
    \sum_{z\in U}\widehat\rho_z^{(t)}
    \longrightarrow
    \sum_{z\in U}\rho_z
    =
    \sum_{z\in U}\frac{p_z}{p_r}
    =
    \frac1{p_r}
    \qquad\text{almost surely}.
\]
This is where the root mass is recovered: even though we set
$\widehat\rho_r^{(t)}=1$, the normalized estimate satisfies
\[
    \widehat p_r^{(t)}
    =
    \frac{1}{\widehat Z_t}
    \longrightarrow
    p_r
    \qquad\text{almost surely}.
\]
For every $x\in U$, we therefore obtain
\[
    \widehat p_x^{(t)}
    =
    \frac{\widehat\rho_x^{(t)}}{\widehat Z_t}
    \longrightarrow
    \frac{p_x/p_r}{1/p_r}
    =
    p_x
    \qquad\text{almost surely}.
\]
For $x\notin U$, both $\widehat p_x^{(t)}$ and $p_x$ are identically zero.
Hence
\[
    d_{TV}\bigl(\widehat p^{(t)},p\bigr)
    \longrightarrow 0
    \qquad\text{almost surely}.
\]

\paragraph{Necessity.}
If $\co(\sellers, U)$ is disconnected, write its connected components as $C_1,\ldots,C_r$ with $r\ge2$. Choose any $p\in\mathcal P_U^+$ and positive scalars $a_1,\ldots,a_r$ not all equal. Define
\[
 q_x=\frac{a_jp_x}{\sum_{\ell=1}^r a_\ell p(C_\ell)}
 \qquad\text{for }x\in C_j.
\]
Then $q\in\mathcal P_U^+$ and $q\ne p$. Every allowed query $S$ intersects $U$ in at most one connected component; otherwise it would create an edge between two components. Therefore conditioning on $S$ cancels the common multiplicative factor $a_j$, and $q(\cdot\mid S)=p(\cdot\mid S)$ for every informative query. Hence the map is not injective.
\end{proof}

\begin{remark}[Unknown support requires completeness] \label{rem:unknown_support}
If the exact support is unknown (the target class is $\mathcal{P}_U$), pointwise consistency requires $\co(\sellers,U)$ to be complete. If two elements never co-occur directly, connecting them requires intermediate bridge points. If the distribution assigns zero mass to these bridges, the path is  ``broken,'' making their relative weights unobservable. Indeed, suppose $\co(\sellers,U)$ is not complete. Then there exist distinct elements $a,b\in U$ that never co-occur in any queryable set $S\in\sellers$. The unknown-support class $\mathcal{P}_U$ contains the two-point family of distributions parameterized by $\theta$: $p^\theta = \theta\delta_a + (1-\theta)\delta_b,  \theta\in(0,1).$
Because $a$ and $b$ never appear together in any allowed query set, every query $S \in \sellers$ intersects $\{a, b\}$ in at most one element. 

Consequently, if the learner queries a set $S$ containing $a$ (which implies $b \notin S$), the conditional distribution $p^\theta(\cdot \mid S)$ assigns probability $1$ to $a$, regardless of $\theta$. Symmetrically, if $S$ contains $b$, the conditional distribution assigns probability $1$ to $b$, again regardless of $\theta$. If $S$ contains neither $a$ nor $b$, the probability of the set is zero. In all valid cases, the response law of every allowed query in $\sellers$ is  independent of $\theta$. Because the samples provide no information about $\theta$, the relative weight between $a$ and $b$ is unobservable, making it impossible to pointwise consistently learn the distribution. 
\end{remark}

\subsection{PAC Learnability: Complete Support}

For PAC learning, the learner must achieve a guaranteed error bound within a fixed sample size, simultaneously for all distributions in the class. As foreshadowed in \cref{rem:unknown_support}, mere connectivity is insufficient to provide such a uniform guarantee. Even if the distribution has full support on $U$, it might assign arbitrarily small positive mass to intermediate bridge points. At any fixed sample size, these low-mass bridges become statistically invisible, breaking the propagation of local ratios. Thus, PAC learnability requires the co-occurrence graph to be complete.

\begin{theorem}
\label{thm:market_pac}
Fix a nonempty set $U\subseteq[n]$. Then $\sellers$ PAC learns $\mathcal P_U$ if and only if the co-occurrence graph $\co(\sellers,U)$ is complete.
\end{theorem}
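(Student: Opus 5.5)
We prove the two directions separately. Necessity is inherited from \cref{rem:unknown_support} and \cref{prop:pac_to_pointwise}. Sufficiency uses a uniform-in-$p$ estimator based on pairwise comparisons.

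\textbf{Necessity.} Suppose $\co(\sellers,U)$ is not complete. By \cref{prop:pac_to_pointwise}, PAC learnability of $\mathcal P_U$ would imply pointwise consistency on $\mathcal P_U$. \cref{rem:unknown_support} shows this is impossible. We can also argue directly. Pick $a,b\in U$ that never co-occur, and take $p^{\theta}=\theta\delta_a+(1-\theta)\delta_b$ with $\theta\in\{1/4,3/4\}$. Every response law $Q_{p^\theta,S}$ is independent of $\theta$: it is a point mass at $a$ or at $b$, or $u_S$ under UNIF0. Hence the transcript distributions coincide for the two values of $\theta$, while $\TV(p^{1/4},p^{3/4})=1/2$. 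So no learner achieves $\epsilon<1/4$ with $\delta<1/2$.

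\textbf{Sufficiency.} Assume $\co(\sellers,U)$ is complete. For every pair $x\neq y$ in $U$, fix $S_{xy}\in\sellers$ containing both. If $|U|=1$ the answer is trivial, so assume $|U|\ge2$. Querying $S_{xy}$ a total of $N$ times, let $N_{xy}(x),N_{xy}(y)$ be the counts and $W_{xy}=N_{xy}(x)+N_{xy}(y)$. Conditioned on $W_{xy}$, and when $p_x+p_y>0$, the count $N_{xy}(x)$ is Binomial$(W_{xy},\,p_x/(p_x+p_y))$. This holds whatever else $S_{xy}$ contains, and the UNIF0 case only matters when $p_x=p_y=0$. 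The difficulty is that $W_{xy}$ can be tiny when $p(S_{xy})$ is dominated by other elements, so we cannot hope to estimate every ratio well. The plan is a Markov-chain / ratio-consistency argument. Consider the heaviest element $m$, with $p_m\ge1/|U|$. We want to estimate $p_x/p_m$ for each $x$ to additive accuracy $\epsilon/|U|$ in a suitable sense, then normalize.

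The key observation: the pair $(x,m)$ yields good information only if $S_{xy}$ is not dominated by other mass. Even so, for a single pair, $p(x\mid S)$ and $p(m\mid S)$ are individually estimable to additive error $\eta$ from $O(\log(n^2/\delta)/\eta^2)$ samples, uniformly over $p$ by the Bernoulli bound. The estimator I would use is
\[
\hat r_x=\frac{\hat\theta_{xm}(x)}{\max\{\hat\theta_{xm}(m),\,\tau\}} \quad\text{or the analogous reverse ratio,}
\]
combined through a Markov chain on $U$ with transition rates $\hat\theta_{xy}(y)$. Its stationary distribution is exactly proportional to $p$ when the estimates are exact, since detailed balance holds: $p_x\,\theta_{xy}(y)\propto p_xp_y/p(S_{xy})$ is symmetric in $x,y$. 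I therefore output the stationary distribution of the empirical chain with uniform off-diagonal scaling $1/|U|$.

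\textbf{Main obstacle.} The hard step is the perturbation bound: showing that additive $\eta$ errors in all $\binom{|U|}{2}$ transition entries move the stationary distribution by at most $\mathrm{poly}(n)\eta$ in TV, uniformly over $p$. This must hold even when some $p(S_{xy})$ are tiny relative to their members or the chain is nearly reducible. I would argue it via the mass flowing into the heavy element $m$. Every state $x$ transitions to $m$ with rate $\theta_{xm}(m)/|U|\ge p_m/(|U|\,p(S_{xm}))\ge 1/|U|^2$, so the chain has a Doeblin minorization with constant $1/|U|^2$ uniformly in $p$. The perturbation is also non-negligible only where the true rates are. By a standard coupling or perturbation bound for chains with a Doeblin constant $\beta$, the stationary error is at most $O(|U|\eta/\beta)=O(|U|^3\eta)$. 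Choosing $\eta=\epsilon/|U|^3$ gives a finite $m(\epsilon,\delta)$ depending only on $n,\epsilon,\delta$, which is all PAC learnability requires. The sharper $\tilde O(n^2/\epsilon^2)$ rate is deferred to \cref{thm:generic-upper-main}.
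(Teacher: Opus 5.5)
Your proposal is correct. The necessity argument is the paper's $|U|=2$ construction (equivalently \cref{rem:unknown_support}), and it suffices here because $\mathcal P_U$ contains $\theta\delta_a+(1-\theta)\delta_b$. The paper's extra $|U|\ge3$ argument with $p^\eta,q^\eta$ and \cref{lem:hybrid} proves the stronger fact that even $\mathcal P_U^+$ is not PAC learnable, which this statement does not need.

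Sufficiency is where you genuinely differ. The paper never estimates transition probabilities. \cref{alg:mc_sampling} simulates the chain exactly, using one adaptive query per step. It uses the minorization $K(x,\cdot)\ge p(\cdot)/(m-1)$ to mix in $O(m\log(1/\eps))$ steps, then applies \cref{lem:empirical} to the restarted terminal states. This gives $\tilde O(m^2/\eps^2)$ with no perturbation analysis.

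You instead estimate every witness law non-adaptively, form the plug-in kernel $\hat K$, and output a stationary law of $\hat K$. This buys a non-adaptive learner, at the cost of a worse polynomial and a perturbation step. That step, which you leave to a ``standard bound,'' takes one line. For any stationary $\hat\pi$ of $\hat K$ we have $\hat\pi-p=\hat\pi(\hat K-K)+(\hat\pi-p)K$. Since $\hat\pi-p$ has zero total mass and $K(x,\cdot)\ge\beta\nu$ for all $x$, we get $\|(\hat\pi-p)K\|_1\le(1-\beta)\|\hat\pi-p\|_1$. Hence $\TV(\hat\pi,p)\le\beta^{-1}\max_x\TV(\hat K(x,\cdot),K(x,\cdot))\le\eta/\beta$, because with your $1/|U|$ scaling each row moves by less than $\eta$ in total variation. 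Only the true chain's minorization enters, so reducibility of $\hat K$ or non-uniqueness of its stationary law does no harm.

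Your heavy-point minorization $K(x,\cdot)\ge|U|^{-2}\delta_m$ is valid; check $x=m$ separately via $K(m,m)\ge 1/|U|$. Minorizing by $p$ itself, as the paper does, gives $\beta=1/|U|$ under your scaling and a total of $O(|U|^4\log(|U|/\delta)/\eps^2)$ queries. State explicitly that the witnesses are symmetric, $S_{xy}=S_{yx}$, since detailed balance needs this. The ratio estimator $\hat r_x$ with threshold $\tau$ plays no role and can be dropped.
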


\begin{proof}We prove the two directions separately.

\paragraph{Sufficiency.} If $\co(\sellers, U)$ is complete, PAC learnability follows from the learner in \Cref{alg:mc_sampling}, proved in \Cref{thm:generic-upper-main}.

\paragraph{Necessity.} Choose distinct $a,b\in U$ that never co-occur in an allowed query set.
First suppose $U=\{a,b\}$. Consider
\[
  p=\frac23\delta_a+\frac13\delta_b,
  \qquad
  q=\frac13\delta_a+\frac23\delta_b .
\]
Every allowed query contains at most one of $a,b$. If a query contains exactly
one of them, then under both targets the conditional response is deterministic
on that point. If it contains neither, then the queried set has zero mass under
both targets, so the response law is target-independent under either zero-mass
convention. Hence every adaptive strategy has exactly the same transcript law
under $p$ and $q$. Since
\[
  \TV(p,q)=\frac13,
\]
PAC learning is impossible.

Now assume $m:=|U|\ge 3$. Let
\[
  R:=U\setminus\{a,b\}.
\]
Choose $\eta>0$ small enough that
\[
  \lambda:=1-(m-2)\eta\ge \frac12.
\]
Define $p^\eta,q^\eta\in\mathcal P_U^+.$ by
\[
  p^\eta_a=\frac{2\lambda}{3},
  \qquad
  p^\eta_b=\frac{\lambda}{3},
  \qquad
  p^\eta_z=\eta \quad (z\in R),
\]
and
\[
  q^\eta_a=\frac{\lambda}{3},
  \qquad
  q^\eta_b=\frac{2\lambda}{3},
  \qquad
  q^\eta_z=\eta \quad (z\in R).
\]
These are valid distributions with exact support $U$, because
\[
  \frac{2\lambda}{3}+\frac{\lambda}{3}+(m-2)\eta
  =
  \lambda+(m-2)\eta
  =
  1,
\]
and all masses are positive. Moreover,
\[
  \TV(p^\eta,q^\eta)
  =
  \frac12\left(
    \left|\frac{2\lambda}{3}-\frac{\lambda}{3}\right|
    +
    \left|\frac{\lambda}{3}-\frac{2\lambda}{3}\right|
  \right)
  =
  \frac{\lambda}{3}
  \ge
  \frac16.
\]

We next show that every one-query response law is close under these two
targets. Fix an allowed query set $S\in\cS$.
If $S$ contains neither $a$ nor $b$, then the restrictions of $p^\eta$ and
$q^\eta$ to $S$ are identical. Thus, if $S$ has positive mass, the conditional
response laws are identical; if $S$ has zero mass, the response law is again
target-independent under either zero-mass convention.

By the choice of $a,b$, no allowed query contains both $a$ and $b$. It remains
to consider the case where $S$ contains exactly one of them. Suppose first that
$a\in S$ and $b\notin S$; the case $b\in S$ and $a\notin S$ is symmetric. Let
\[
  c_S:=\sum_{z\in S\cap R}\eta
      =\eta |S\cap R|
      \le (m-2)\eta .
\]
Define
\[
  A:=\frac{2\lambda}{3},
  \qquad
  B:=\frac{\lambda}{3}.
\]
Then,
\[
  p^\eta(S)=A+c_S,
  \qquad
  q^\eta(S)=B+c_S.
\]
The conditional response laws on $S$ are therefore the normalized laws
\[
  Q^\star_{p^\eta,S}
  =
  \frac{A}{A+c_S}\delta_a
  +
  \sum_{z\in S\cap R}\frac{\eta}{A+c_S}\delta_z,
\qquad
\text{and}
\qquad
  Q^\star_{q^\eta,S}
  =
  \frac{B}{B+c_S}\delta_a
  +
  \sum_{z\in S\cap R}\frac{\eta}{B+c_S}\delta_z.
\]
All points outside $U$ have zero probability under both laws. 

We now compute their total variation distance. Since $A>B$, we have
\[
  \frac{A}{A+c_S}-\frac{B}{B+c_S}
  =
  \frac{c_S(A-B)}{(A+c_S)(B+c_S)}.
\]
For each $z\in S\cap R$,
\[
  \frac{\eta}{B+c_S}-\frac{\eta}{A+c_S}
  =
  \frac{\eta(A-B)}{(A+c_S)(B+c_S)}.
\]
Summing over $z\in S\cap R$ gives
\[
  \sum_{z\in S\cap R}
  \left(
    \frac{\eta}{B+c_S}-\frac{\eta}{A+c_S}
  \right)
  =
  \frac{c_S(A-B)}{(A+c_S)(B+c_S)}.
\]
Therefore
\[
\begin{aligned}
  \TV(Q^\star_{p^\eta,S},Q^\star_{q^\eta,S})
  &=
  \frac12
  \left[
    \frac{c_S(A-B)}{(A+c_S)(B+c_S)}
    +
    \frac{c_S(A-B)}{(A+c_S)(B+c_S)}
  \right]  \\
  &=
  \frac{c_S(A-B)}{(A+c_S)(B+c_S)}.
\end{aligned}
\]
Substituting $A=2\lambda/3$ and $B=\lambda/3$ yields
\[
  \TV(Q^\star_{p^\eta,S},Q^\star_{q^\eta,S})
  =
  \frac{c_S(\lambda/3)}
       {(2\lambda/3+c_S)(\lambda/3+c_S)}.
\]
Since $c_S\ge 0$ and $\lambda\ge 1/2$,
\[
  (2\lambda/3+c_S)(\lambda/3+c_S)
  \ge
  \frac{2\lambda}{3}\cdot\frac{\lambda}{3}
  =
  \frac{2\lambda^2}{9}.
\]
Thus
\[
  \TV(Q^\star_{p^\eta,S},Q^\star_{q^\eta,S})
  \le
  \frac{c_S(\lambda/3)}{2\lambda^2/9}
  =
  \frac{3c_S}{2\lambda}
  \le
  3c_S
  \le
  3(m-2)\eta.
\]
 Hence
\[
  \sup_{S\in\cS}
  \TV(Q^\star_{p^\eta,S},Q^\star_{q^\eta,S})
  \le
  3(m-2)\eta .
\]

Suppose, for contradiction, that a uniform PAC learner exists for
$\mathcal P_U^+.$ with accuracy $\varepsilon_0:=1/24$
and confidence $2/3$. Let $T$ be its worst-case query bound at parameters
$(\varepsilon_0,1/3)$. Choose $\eta>0$ small enough that, in addition to
$\lambda\ge 1/2$,
\[
  3T(m-2)\eta<\frac16.
\]
By \Cref{lem:hybrid}, the transcript laws under targets $p^\eta$
and $q^\eta$ have total variation distance less than $1/6$.

On the other hand, the PAC guarantee gives
\[
  \Pr_{p^\eta}\!\left[
    \TV(\widehat p,p^\eta)\le \varepsilon_0
  \right]
  \ge \frac23
\qquad \text{and} \qquad
  \Pr_{q^\eta}\!\left[
    \TV(\widehat p,q^\eta)\le \varepsilon_0
  \right]
  \ge \frac23.
\]
Because
\[
  \TV(p^\eta,q^\eta)\ge \frac16
  >
  2\varepsilon_0
  =
  \frac1{12},
\]
the two balls
\[
  \{\mu:\TV(\mu,p^\eta)\le \varepsilon_0\},
  \qquad
  \{\mu:\TV(\mu,q^\eta)\le \varepsilon_0\}
\]
are disjoint. Therefore, for the event
$
  E:=\{\TV(\widehat p,p^\eta)\le \varepsilon_0\},
$
we have
\[
  \Pr_{p^\eta}(E)\ge \frac23,
  \qquad
  \Pr_{q^\eta}(E)\le \frac13.
\]
Thus, the two transcript laws must differ by at least $1/3$ in total variation,
contradicting the upper bound $<1/6$ we established before.
Hence, no uniform PAC learner exists when $\co(\sellers,U)$ is not complete.
\end{proof}

\section{Query Family Structure and Sample Complexity}\label{sec:sample_complexity}

The previous section established that for uniform PAC learning on an unknown support, qualitative learnability is characterized by the co-occurrence graph: learning is possible if and only if $\co(\sellers,U)$ is complete. We now ask whether this qualitative topological condition also governs the quantitative sample complexity. 

The answer turns out to be negative. Completeness captures \emph{whether} pairs can be calibrated somewhere, but it does not characterize \emph{how} ``cleanly'' those comparisons are exposed. If co-occurrences only happen inside broad, hub-dominated sets, the relevant signal is heavily diluted, leading to worst-case sample complexities. 
Conversely, when the query family supports clean, localized comparisons, the learner can recursively decompose the domain to isolate subcategories and achieve near-ideal efficiency; we formalize this property as hierarchical comparability.

We explore this surprisingly rich landscape in the remainder of the section. We first establish the ideal benchmark: maximal query power achieves the ordinary empirical rate of $\Theta(n/\eps^2)$ (\Cref{sec:full-cond}). Moving to restricted families, we show that completeness alone guarantees a $\widetilde O(n^2/\eps^2)$ sample complexity via a simulated Markov chain (\Cref{sec:generic-quadratic-upper-bound}). We prove this quadratic bound is worst-case optimal by constructing a hub-dominated family that requires $\Omega(n^2/\eps^2)$ queries (\Cref{thm:quadratic-lower-main}). However, we show that query families satisfying hierarchical comparability bypass this worst case to recover the near-linear $\widetilde O(n/\eps^2)$ rate (\Cref{sec:hierarchical}). Next, we bridge these extremes, proving the existence of complete query families whose optimal sample complexities tightly span the continuum $\widetilde{\Theta}(n^{1+\gamma}/\eps^2)$ for every $\gamma\in(0,1)$ (\Cref{sec:continuum}). Finally, we study the convergence rates of pointwise-consistent algorithms (\Cref{sec:pointwise-rates}).

\subsection{The Ideal Benchmark: Full Conditional Access}\label{sec:full-cond}
We first show the best possible rate when the learner has maximum query power. Folklore results show that by only querying $[n]$, the learner achieves TV error $\varepsilon$ using $O(n/\varepsilon^2)$ samples (see, e.g., \citep{canonne2020short}). Thus, it is natural to ask whether the additional conditional querying power gives it an advantage.
We show that this is not the case. 
\begin{theorem}
\label{thm:full-cond-main}
Let $\sellers=2^{[n]}\setminus\{\emptyset\}$. Then, for every sufficiently small $\epsilon$
it holds that
\[
 q^*_{\sellers, \mathcal P^+_{[n]}}(\eps,1/3)=\Omega\!\left(\frac{n}{\eps^2}\right).
\]
\end{theorem}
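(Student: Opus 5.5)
We will prove the lower bound with a standard Fano/Assouad-type packing argument. The key is to choose a packing on which every conditional query, whatever set it conditions on, reveals at most about as much information as an ordinary sample from $p$.

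\textbf{Construction.} Assume $n$ is even and pair up the domain into $n/2$ pairs $\{2i-1,2i\}$. For a sign vector $\sigma\in\bits^{n/2}$ define
\[
  p^\sigma_{2i-1}=\frac{1+c\,\sigma_i\eps}{n},\qquad p^\sigma_{2i}=\frac{1-c\,\sigma_i\eps}{n},
\]
with $c$ a small constant. Every $p^\sigma$ lies in $\mathcal P^+_{[n]}$, and all masses lie in $[1/(2n),2/n]$. By Gilbert--Varshamov we extract a subfamily $\Sigma\subseteq\bits^{n/2}$ with $|\Sigma|\ge 2^{n/16}$ and pairwise Hamming distance at least $n/8$. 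Then $\TV(p^\sigma,p^\tau)=\frac{2c\eps}{n}d_H(\sigma,\tau)\ge c\eps/4$. Choosing $c=8\cdot 3$ (or rescaling $\eps$, which is why the statement says ``sufficiently small $\eps$'') makes the packing $3\eps$-separated, so a PAC learner with accuracy $\eps$ identifies $\sigma$ with probability at least $2/3$.

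\textbf{Per-query information bound.} This is the main step. Let $V$ be uniform on $\Sigma$ and let $X$ be the adaptive transcript. By the chain rule, each round contributes at most
\[
  \max_S\, \mathbb E_{V}\,\KL\bigl(Q_{p^V,S}\,\|\,\bar Q_S\bigr)\le \max_S\max_{\sigma}\KL\bigl(Q_{p^\sigma,S}\,\|\,Q_{p^{\mathbf 0},S}\bigr),
\]
where $p^{\mathbf 0}$ is uniform; this holds because mutual information is at most the KL divergence to any fixed reference. All $p(S)>0$ here, so no UNIF0 issue arises. Bound the KL by the $\chi^2$ divergence. Write $p^\sigma_x=(1+\delta_x)/n$ with $|\delta_x|\le c\eps$, and set $\bar\delta_S=\frac1{|S|}\sum_{x\in S}\delta_x$. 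The conditional law on $S$ is $(1+\delta_x)/(|S|(1+\bar\delta_S))$, so
\[
  \chi^2=\frac{1}{|S|}\sum_{x\in S}\Bigl(\frac{\delta_x-\bar\delta_S}{1+\bar\delta_S}\Bigr)^2\le 4c^2\eps^2\cdot O(1).
\]
Hence each query, whichever set $S$ it conditions on, yields $O(\eps^2)$ nats, uniformly in $S$. The important point is that conditioning cannot amplify the relative perturbation, because the perturbation is multiplicatively bounded pointwise. This is exactly why conditional power does not help.

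\textbf{Conclusion.} After $T$ queries, $I(V;X)\le C\eps^2T$. Fano's inequality requires $I(V;X)\ge \frac23\log|\Sigma|-\log 2=\Omega(n)$, so $T=\Omega(n/\eps^2)$.

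The step I expect to need care is the adaptivity in the chain rule, namely bounding $I(V;X_t\mid X_{<t})$ when $S_t$ depends on the past. I would handle it by conditioning on the history, noting that $S_t$ is then fixed, and applying the uniform-in-$S$ $\chi^2$ bound against the fixed reference $Q_{p^{\mathbf 0},S_t}$; the earlier technical lemmas in \cref{app:tech-lemmas} likely cover this step. Odd $n$ is handled by giving one leftover point mass $1/n$ and renormalizing.
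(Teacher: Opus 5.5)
Your proposal is correct and follows essentially the same route as the paper. Both use the paired $\pm$ perturbation packing with a Hamming-separated code, bound the per-query KL against the uniform law on the queried set by the $\chi^2$ divergence (the paper's step $x\log x\le (x-1)+(x-1)^2$ is exactly this bound), handle adaptivity via the chain rule and the reference-KL lemma after conditioning on the history, and conclude with Fano. The only slip is cosmetic: you call $c$ ``small'' and then take $c=24$, which simply means you need $\eps\le 1/48$ for the masses to stay in $[1/(2n),2/n]$.
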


\begin{proof}
Without loss of generality, we assume that $n$ is even. Let $m=n/2$ and partition $[n]$ into pairs $P_j=\{2j-1,2j\}$. For $\sigma\in\bits^m$ and $\alpha\in(0,1/2]$, define
\[
 p^\sigma_{2j-1}=\frac{1+\alpha\sigma_j}{n},
 \qquad
 p^\sigma_{2j}=\frac{1-\alpha\sigma_j}{n}.
\]
Let $\cC\subseteq\bits^m$ be the code from \Cref{lem:code}. If $\sigma,\tau\in\cC$ are distinct, then
\[
 \TV(p^\sigma,p^\tau)
 =
 \frac{2\alpha}{n}d_H(\sigma,\tau)
 \ge \frac{\alpha}{4}.
\]
Set $\alpha=16\eps$, so for $\eps\le1/32$ all code distributions are full support and pairwise $4\eps$-separated.

Fix any nonempty query set $S\subseteq[n]$. For $i\in[n]$, define $\tau_i(\sigma)=\sigma_j$ if $i=2j-1$ and $\tau_i(\sigma)=-\sigma_j$ if $i=2j$. Let $P_{\sigma,S}$ be the conditional response distribution under $p^\sigma$ and let $U_S$ be uniform on $S$. For $i\in S$,
\[
 \frac{P_{\sigma,S}(i)}{U_S(i)}
 =
 \frac{1+\alpha \tau_i(\sigma)}
 {1+\alpha \overline\tau_S(\sigma)},
 \qquad
 \overline\tau_S(\sigma)=\frac1{|S|}\sum_{i\in S}\tau_i(\sigma).
\]
Since $|\tau_i|\le1$, $|\overline\tau_S|\le1$, and $\alpha\le1/2$,
\[
 \left|\frac{P_{\sigma,S}(i)}{U_S(i)}-1\right|\le 4\alpha.
\]
Using $x\log x\le (x-1)+(x-1)^2$ and the fact that the first-order term integrates to zero,
\[
 \KL(P_{\sigma,S}\|U_S)\le 16\alpha^2.
\]

Now let $\Sigma$ be uniform on $\cC$ and let $\Pi$ be the full transcript of any adaptive $q$-query learner, including its internal randomness. Conditional on any past history, the next query set is fixed. By \Cref{lem:mi-ref}, the mutual information between $\Sigma$ and the next response, conditional on that history, is at most $16\alpha^2$. The chain rule gives
\[
 I(\Sigma;\Pi)\le 16q\alpha^2.
\]

If the learner outputs $\widehat p$ with $\TV(\widehat p,p^\Sigma)\le\eps$ with probability at least $2/3$ for every codeword, then nearest-neighbor decoding recovers $\Sigma$ with probability at least $2/3$ because the code distributions are $4\eps$-separated. By \Cref{lem:fano},
\[
 I(\Sigma;\Pi)\ge c_1 m
\]
for a universal constant $c_1>0$ and all sufficiently large $m$. Therefore
\[
 q\ge c_2\frac{m}{\alpha^2}
 =
 \Omega\!\left(\frac{n}{\eps^2}\right). \qedhere
\]
\end{proof}

\subsection{Worst-Case Complete Query Families}\label{sec:generic-quadratic-upper-bound}
We now ask what quantitative rate can be guaranteed from the qualitative PAC condition alone. Suppose the query family $\sellers$ induces a complete co-occurrence graph on the target support $U$. Does this topological condition automatically yield the ordinary-sampling rate $\widetilde O(|U|/\eps^2)$?

We answer this negatively. While completeness is sufficient for PAC learnability, in the worst case it guarantees only a quadratic rate. The upper bound below establishes a generic $\widetilde O(|U|^2/\eps^2)$ learner for any complete query family; the subsequent lower bound (\Cref{thm:quadratic-lower-main}) proves this dependence is unavoidable without further structural assumptions.

\paragraph{Sampling via a simulated Markov chain.}
The algorithm achieves the quadratic bound by using conditional queries to simulate a reversible Markov chain whose stationary distribution is the unknown target $p$. Because $\co(\sellers,U)$ is complete, every pair $x,y\in U$ has at least one witness set $W_{xy}\in\sellers$ containing both points. Crucially, we symmetrize this choice so that $W_{xy} = W_{yx}$.

From a current state $x$, the random walk proposes a new state $y$ uniformly from $U\setminus\{x\}$, queries the witness set $W_{xy}$, and transitions to $y$ only if the oracle returns $y$. This process requires just one query per step and visits states in proportion to their true masses without the learner needing to estimate all conditional probabilities. The chain mixes in $O(|U|\log(1/\eps))$ steps. By independently restarting the walk and recording the terminal states, the learner obtains approximate independent samples from $p$; the empirical distribution of these samples gives the claimed PAC bound. The formal algorithm is described in \Cref{alg:mc_sampling}.

\begin{algorithm}[ht]
\SetAlgoLined
\DontPrintSemicolon
\KwIn{Universe $U$ of size $m\ge 2$, query access to witness sets $W_{xy}\in\sellers$, target accuracy $\eps>0$, confidence $\delta>0$.
Burn-in steps $B = \lceil (m-1)\log(2/\eps) \rceil$, sample count $M = \Theta((m+\log(1/\delta))/\eps^2)$.}
Initialize an empty multiset $S \gets \emptyset$\;
Fix an arbitrary starting state $x_0\in U$.\;
\For{$i = 1$ \KwTo $M$}{
Initialize $x\leftarrow x_0$.\;
    \For{$t = 1$ \KwTo $B$}{
        Draw a candidate state $y$ uniformly at random from $U \setminus \{x\}$\;
        Query $W_{xy}$ and receive a sample $z\sim Q_{p,W_{xy}}$\;
        \eIf{$z = y$}{
            $x \gets y$ \tcc*{Accept the proposed move}
        }{
            $x \gets x$ \tcc*{Reject and remain at current state}
        }
    }
    $S \gets S \cup \{x\}$ \tcc*{Record the terminal state}
}
\Return empirical distribution $\widehat{p}$ of $S$\;
\caption{Markov Chain Sampling for Target Distribution $p$}
\label{alg:mc_sampling}
\end{algorithm}

\begin{theorem}[Complete co-occurrence graph quadratic learner]
\label{thm:generic-upper-main}
Let $U\subseteq[n]$ be nonempty with $|U| = m$ and suppose $\co(\sellers,U)$ is complete. Then, $\sellers$ PAC learns $\mathcal P_U$ with query complexity
\[
 q^*_{\sellers,\mathcal P_U}(\eps,\delta)
 \le
 C\,\frac{m(m+\log(1/\delta))\log(2/\eps)}{\eps^2}
\]
for an absolute constant $C$. 
\end{theorem}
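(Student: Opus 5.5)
The plan is to analyze \Cref{alg:mc_sampling} as an exact simulation of a lazy Metropolis-type chain on $U$ with stationary law $p$, bound its mixing time from the fixed start $x_0$, and then combine with the standard empirical-distribution bound.

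\textbf{Step 1: the transition kernel.} From state $x$ the walk proposes $y\neq x$ with probability $1/(m-1)$ and then accepts with probability $Q_{p,W_{xy}}(y)$. When $p(W_{xy})>0$ this acceptance probability equals $p_y/p(W_{xy})$, so
\[
 P(x,y)=\frac{1}{m-1}\cdot\frac{p_y}{p(W_{xy})}, \qquad y\neq x .
\]
Because the witness sets are symmetric, $W_{xy}=W_{yx}$, and therefore $p_xP(x,y)=\frac{p_xp_y}{(m-1)p(W_{xy})}=p_yP(y,x)$. Hence the chain is reversible with respect to $p$.

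Zero-mass states need separate handling under UNIF0. If $p_x=0$ and $p_y>0$, the walk moves $x\to y$ with probability $p_y/((m-1)p(W_{xy}))$. A move from a zero-mass state into a zero-mass $y$ is accepted with probability $1/|W_{xy}|$ only when $p(W_{xy})=0$. Moves into zero-mass states from positive-mass states are never accepted. So the positive-mass support $U^+$ is closed, and $p$ is stationary.

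\textbf{Step 2: mixing.} Here I would use a coupling (or a Doeblin minorization) rather than spectral methods. The key observation is that $p(W_{xy})\le 1$, so $P(x,y)\ge p_y/(m-1)$ for every $y\neq x$. Together with the holding probability, this gives $P(x,\cdot)\ge \frac{1}{m-1}\,p(\cdot)$ restricted off the diagonal.

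To get a clean minorization, note that $P(x,x)\ge p_x/(m-1)$ as well. Indeed, the rejection probability is at least $\sum_{y\neq x}\frac{1}{m-1}\bigl(1-\frac{p_y}{p(W_{xy})}\bigr)\ge \frac{1}{m-1}\sum_{y\neq x}\frac{p_x}{p(W_{xy})}$, and since $p(W_{xy})\le 1$ this sum is at least $p_x$. Thus $P(x,\cdot)\ge \frac{1}{m-1}p(\cdot)$ entirely, which is a Doeblin condition with constant $1/(m-1)$.

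It follows that $\TV(P^B(x_0,\cdot),p)\le (1-\tfrac{1}{m-1})^B\le e^{-B/(m-1)}\le \eps/2$ for $B=\lceil (m-1)\log(2/\eps)\rceil$. For zero-mass $x$ with $p(W_{xy})=0$ the bound $P(x,y)\ge p_y/(m-1)$ holds trivially since $p_y=0$, and for $p_y>0$ it holds as computed, so the minorization holds everywhere. I expect this to be the main point to get right, especially the diagonal term and the UNIF0 edge cases.

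\textbf{Step 3: statistics.} Each of the $M$ restarts is independent and yields a sample from $\tilde p:=P^B(x_0,\cdot)$ with $\TV(\tilde p,p)\le\eps/2$. By the folklore bound for the empirical distribution over an $m$-element support, $M=\Theta((m+\log(1/\delta))/\eps^2)$ samples give $\TV(\hat p,\tilde p)\le\eps/2$ with probability at least $1-\delta$. This can be shown via McDiarmid plus $\mathbb E\,\TV\le\tfrac12\sqrt{m/M}$, or via a union bound over the $2^m$ subsets with Hoeffding.

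The triangle inequality then gives $\TV(\hat p,p)\le\eps$. The total number of queries is $MB=O\bigl(m(m+\log(1/\delta))\log(2/\eps)/\eps^2\bigr)$, as claimed. When $m=1$ the output is trivial.
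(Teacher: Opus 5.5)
Your proposal is correct and follows the paper's proof essentially step for step. Symmetric witnesses give detailed balance, and the pointwise minorization $P(x,\cdot)\ge \frac{1}{m-1}p(\cdot)$ (including the diagonal bound $P(x,x)\ge p_x$ for $p_x>0$) gives Doeblin mixing within $B=\lceil (m-1)\log(2/\eps)\rceil$ steps; the empirical-distribution lemma and the triangle inequality then finish, exactly as in the paper. One small tightening for Step 1: say explicitly that when $p(W_{xy})=0$ both sides of detailed balance vanish because $p_x=p_y=0$, since this, and not closedness of the positive-mass support, is what yields stationarity of $p$.
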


\begin{proof}
Assume that $\co(\sellers,U)$ is complete. We show that $\sellers$ PAC learns
\[
    \mathcal P_U
    =
    \{p\in\Delta([n]) : p_x=0 \text{ for all } x\notin U\}.
\]

Let $m:=|U|$. If $m=1$, the learner outputs the point mass on the unique
element of $U$, so assume $m\ge 2$. Since $\co(\sellers,U)$ is complete, for
every unordered pair $\{x,y\}\subseteq U$, fix a symmetric witness set
\[
    W_{xy}=W_{yx}\in\sellers
    \qquad\text{such that}\qquad
    x,y\in W_{xy}.
\]

Fix an arbitrary target $p\in\mathcal P_U$. The learner simulates a Markov chain
on the state space $U$. From current state $x\in U$, it chooses a challenger
$y\in U\setminus\{x\}$ uniformly at random, queries $W_{xy}$, and moves to
$y$ if and only if the oracle response is $y$. Otherwise it stays at $x$.

Equivalently, for $x\neq y$,
\[
    K(x,y)
    =
    \frac{1}{m-1}Q_{p,W_{xy}}(y),
\]
where $Q_{p,W_{xy}}$ is the one-query response law from the preliminaries. The
diagonal probability is
\[
    K(x,x)
    =
    1-\sum_{y\in U\setminus\{x\}}K(x,y).
\]
This definition is valid for every $p\in\mathcal P_U$, including targets whose
support is a strict subset of $U$, because $Q_{p,S}$ is defined even when
$p(S)=0$.

We first show that $p$ is stationary for $K$. It suffices to verify detailed
balance. Fix distinct $x,y\in U$. If $p(W_{xy})=0$, then since
$x,y\in W_{xy}$ and $p$ is nonnegative, we have
\[
    p_x=p_y=0.
\]
Therefore
\[
    p_xK(x,y)=p_yK(y,x)=0.
\]
If $p(W_{xy})>0$, then the zero-mass convention is irrelevant and
\[
    Q_{p,W_{xy}}(y)=\frac{p_y}{p(W_{xy})},
    \qquad
    Q_{p,W_{xy}}(x)=\frac{p_x}{p(W_{xy})}.
\]
Thus
\[
    p_xK(x,y)
    =
    \frac{1}{m-1}\frac{p_xp_y}{p(W_{xy})}
    =
    p_yK(y,x).
\]
Hence detailed balance holds for every pair $x\neq y$, and so $p$ is
stationary for $K$.

Next we prove a uniform minorization. We claim that for every $x\in U$,
\[
    K(x,\cdot)
    \ge
    \frac{1}{m-1}p(\cdot)
\]
pointwise on $U$.

First fix $y\neq x$. If $p(W_{xy})>0$, then
\[
    K(x,y)
    =
    \frac{1}{m-1}\frac{p_y}{p(W_{xy})}
    \ge
    \frac{p_y}{m-1},
\]
because $p(W_{xy})\le 1$. If $p(W_{xy})=0$, then $p_y=0$, so the same
inequality holds trivially.

It remains to check the diagonal coordinate. If $p_x=0$, then
\[
    K(x,x)\ge 0=\frac{p_x}{m-1}.
\]
If $p_x>0$, then for every challenger $y\neq x$, the set $W_{xy}$ has positive
mass because it contains $x$. Therefore
\[
    1-Q_{p,W_{xy}}(y)
    \ge
    Q_{p,W_{xy}}(x)
    =
    \frac{p_x}{p(W_{xy})}
    \ge
    p_x.
\]
Averaging over the uniformly chosen challenger gives
\[
    K(x,x)
    =
    \frac{1}{m-1}
    \sum_{y\neq x}
    \bigl(1-Q_{p,W_{xy}}(y)\bigr)
    \ge
    p_x
    \ge
    \frac{p_x}{m-1}.
\]
This proves the minorization.

The minorization implies rapid mixing (\cref{lem:doeblin-minorization}). If $m=2$, then
\[
    K(x,\cdot)\ge p(\cdot)
\]
for every $x\in U$, and since both sides have total mass one, this implies
$K(x,\cdot)=p(\cdot)$ for every $x$. Thus one step already has distribution
$p$.

For $m>2$, define $\Pi_p$ to be the rank-one kernel whose every row is $p$.
The minorization gives the Doeblin decomposition
\[
    K
    =
    \frac{1}{m-1}\Pi_p
    +
    \left(1-\frac{1}{m-1}\right)R,
\]
where $R$ is a Markov kernel. Consequently, for every starting distribution
$\mu$ on $U$,
\[
    d_{TV}(\mu K^t,p)
    \le
    \left(1-\frac{1}{m-1}\right)^t
    \le
    e^{-t/(m-1)}.
\]

Set
\[
    B
    :=
    \left\lceil
        (m-1)\log\frac{2}{\eps}
    \right\rceil .
\]
Starting from any fixed state in $U$ and running the chain for $B$ steps
produces a terminal distribution $\mu_B$ satisfying
\[
    d_{TV}(\mu_B,p)\le \frac{\eps}{2}.
\]

The learner independently restarts this simulated chain $M$ times, each time
running it for $B$ steps, and records the terminal states. These terminal
states are i.i.d. samples from $\mu_B$. By the empirical distribution lemma (\cref{lem:empirical}),
for a sufficiently large universal constant $C$, if
\[
    M
    \ge
    C\frac{m+\log(1/\delta)}{\eps^2},
\]
then with probability at least $1-\delta$,
\[
    d_{TV}(\widehat p,\mu_B)\le \frac{\eps}{2}.
\]
On this event, the triangle inequality gives
\[
    d_{TV}(\widehat p,p)
    \le
    d_{TV}(\widehat p,\mu_B)+d_{TV}(\mu_B,p)
    \le
    \eps.
\]

Each approximate sample costs $B$ provider queries, so the total number of
queries is
\[
    MB
    =
    O\!\left(
        \frac{
            m(m+\log(1/\delta))\log(2/\eps)
        }{\eps^2}
    \right).
\]
Thus $\sellers$ PAC learns $\mathcal P_U$ whenever $\co(\sellers,U)$ is
complete.
\end{proof}

\paragraph{Why the quadratic dependence can be necessary.}
The previous upper bound raises a natural question: is the quadratic dependence on $|U|$ an artifact of this Markov chain construction, or is it necessary for some
$\sellers$ that induce complete co-occurrence graphs? The next theorem shows that the quadratic bound is
intrinsic in the worst case. Intuitively, this is because completeness shows that every pair co-occurs
somewhere, but it does not show how informative those co-occurrences are. The hardness arises from \emph{hub dilution}. We construct a query family where every allowed set $S \in \sellers$ contains a central ``hub'' element alongside a few ``light'' elements, ensuring the co-occurrence graph remains complete. In our hard family of distributions, the hub is assigned constant mass, while the light elements receive mass $\Theta(1/n)$. Consequently, any conditional query returns the uninformative hub except for an $O(1/n)$-fraction of the time. To observe a light element and gain information about its true mass, the learner must waste $\Theta(n)$ samples per query. Since the learner must calibrate $\Theta(n)$ such light elements, this dilution forces the overall sample complexity to $\Omega(n^2/\eps^2)$. The proof relies on Fano's inequality and coding-theoretic tools to make this intuition formal.

\begin{theorem}[Quadratic complexity]
\label{thm:quadratic-lower-main}
There exist absolute constants $c>0$ and $n_0\in\mathbb N$ such that for every
$n\ge n_0$, there exists a query family $\sellers$ on $[n]$ with complete
co-occurrence graph such that, for all $0<\eps\le c$,
\[
  q^\star_{\sellers,\mathcal P^+_{[n]}}(\eps,1/3)
  \ge
  c\,\frac{n^2}{\eps^2}.
\]
\end{theorem}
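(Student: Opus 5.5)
Following the hub-dilution intuition above, I would build the family around a single hub element $M:=n$ and light elements $L:=[n-1]$. I take $\sellers:=\{\{M,x,y\}: x\ne y\in L\}$. Every pair of light elements co-occurs in some set, and every light element co-occurs with $M$, so $\co(\sellers,[n])$ is complete. The hard instances reuse the pairing construction from the proof of \Cref{thm:full-cond-main}, but restricted to $L$. Let $m:=\lfloor (n-1)/2\rfloor$ and pair up light elements as $P_j=\{a_j,b_j\}$ (if $n-1$ is odd, one extra light element gets the fixed mass $1/(2(n-1))$). For $\sigma\in\cC\subseteq\bits^m$, the code of \Cref{lem:code}, set
\[
 p^\sigma_M=\tfrac12,\qquad p^\sigma_{a_j}=\frac{1+\alpha\sigma_j}{2(n-1)},\qquad p^\sigma_{b_j}=\frac{1-\alpha\sigma_j}{2(n-1)}.
\]
Each $p^\sigma$ has full support. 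As in the earlier proof, $\TV(p^\sigma,p^\tau)=\frac{\alpha}{n-1}d_H(\sigma,\tau)=\Omega(\alpha)$ for distinct codewords. So choosing $\alpha=C\eps$ for a suitable constant $C$ makes the codewords $4\eps$-separated, provided $\eps\le c$ keeps $\alpha\le 1/2$.

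The core step is a per-query KL bound of order $\alpha^2/n$, which is a factor $1/n$ smaller than the $\alpha^2$ in \Cref{thm:full-cond-main}. Fix $S=\{M,x,y\}$ and use as reference law $R_S$ the response distribution of the unperturbed target ($\alpha=0$). That law puts mass $w:=\frac{1}{2(n-1)}/(\frac12+\frac{2}{2(n-1)})=\Theta(1/n)$ on each of $x,y$ and the rest on $M$. Under $p^\sigma$, $p^\sigma(S)=\frac12+\frac{2+\alpha(\tau_x+\tau_y)}{2(n-1)}$ with $\tau_x,\tau_y\in\{\pm1,0\}$ the signed perturbations. Hence $P_{\sigma,S}(x)=w(1+O(\alpha))$, the same holds for $y$, and $P_{\sigma,S}(M)=R_S(M)(1+O(\alpha/n))$. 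Bounding KL by $\chi^2$ gives
\[
\KL(P_{\sigma,S}\|R_S)\le\chi^2(P_{\sigma,S}\|R_S)=\sum_{z\in S}\frac{(P_{\sigma,S}(z)-R_S(z))^2}{R_S(z)}\le 2\cdot\frac{(O(\alpha w))^2}{w}+\frac{O(\alpha/n)^2}{\Theta(1)}=O\!\left(\frac{\alpha^2}{n}\right).
\]
By \Cref{lem:mi-ref} applied conditionally on the history, each adaptive query contributes at most $O(\alpha^2/n)$ mutual information, and the chain rule yields $I(\Sigma;\Pi)\le O(q\alpha^2/n)$.

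To finish, suppose the learner achieves accuracy $\eps$ with probability $2/3$. Then nearest-neighbor decoding recovers $\Sigma$ with probability $2/3$, and \Cref{lem:fano} gives $I(\Sigma;\Pi)\ge c_1 m=\Omega(n)$. Combining the two bounds gives $q=\Omega(n^2/\alpha^2)=\Omega(n^2/\eps^2)$ for $n\ge n_0$ and $\eps\le c$.

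The main obstacle I expect is the KL step: it must be uniform over all allowed $S$ and exploit the $\Theta(1/n)$ light mass. This works because the relative perturbation on light points is only $O(\alpha)$ while their mass is $\Theta(1/n)$. The hub term contributes only $O(\alpha^2/n^2)$, so it is not a concern. The care needed is that the normalizer $p^\sigma(S)$ varies with $\sigma$; I handle this by comparing to the fixed reference $R_S$, which does not depend on $\sigma$.
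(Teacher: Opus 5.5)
Your proposal is correct and follows the same strategy as the paper: a hub of mass $1/2$, paired $\pm\alpha$ perturbations of the $\Theta(1/n)$ light masses indexed by the code of \Cref{lem:code}, a per-query KL bound of order $\alpha^2/n$ against a $\sigma$-independent reference fed into \Cref{lem:mi-ref}, and Fano. The only difference is the query family. The paper uses $S_{jk}=\{0,a_j,b_j,a_k,b_k\}$, which contain whole pairs, so $p^\sigma(S_{jk})$ does not depend on $\sigma$ and the KL has a closed form. Your triples $\{M,x,y\}$ let the normalizer vary with $\sigma$, and you correctly absorb this by comparing to the unperturbed law and bounding KL by $\chi^2$.
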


\begin{proof}
Let $n=2m+1$ and define
\[
 [n]=\{0\}\cup\{a_1,b_1,\ldots,a_m,b_m\}.
\]
Define the allowed query family
\[
 \cS=\{S_{jk}:1\le j<k\le m\},
 \qquad
 S_{jk}=\{0,a_j,b_j,a_k,b_k\}.
\]
The co-occurrence graph is complete: vertices from different pairs co-occur in the corresponding $S_{jk}$, vertices within the same pair co-occur in any $S_{jk}$ with $k\ne j$, and the hub $0$ co-occurs with every other vertex.

For $\sigma\in\bits^m$ and $\alpha\in(0,1/2]$, define
\[
 p^\sigma(0)=\frac12,
 \qquad
 p^\sigma(a_j)=\frac{1+\alpha\sigma_j}{4m},
 \qquad
 p^\sigma(b_j)=\frac{1-\alpha\sigma_j}{4m}.
\]
Let $\cC\subseteq\bits^m$ be the code from \Cref{lem:code}. If $\sigma,\tau\in\cC$ are distinct, then each differing coordinate contributes $\alpha/(2m)$ to TV distance, so
\[
 \TV(p^\sigma,p^\tau)
 =
 \frac{\alpha}{2m}d_H(\sigma,\tau)
 \ge \frac{\alpha}{8}.
\]
Set $\alpha=32\eps$, assuming $\eps\le1/64$. Then, the code distributions are $4\eps$-separated.

Fix a query $S_{jk}$. Let $Q_{jk}$ be the baseline distribution on $S_{jk}$ with
\[
 Q_{jk}(0)=\frac{m}{m+2},
 \qquad
 Q_{jk}(a_j)=Q_{jk}(b_j)=Q_{jk}(a_k)=Q_{jk}(b_k)=\frac{1}{2(m+2)}.
\]
The conditional law under $p^\sigma$ is
\[
 P^\sigma_{jk}(0)=\frac{m}{m+2},
\]
\[
 P^\sigma_{jk}(a_j)=\frac{1+\alpha\sigma_j}{2(m+2)},\quad
 P^\sigma_{jk}(b_j)=\frac{1-\alpha\sigma_j}{2(m+2)},
\]
and similarly for the pair $k$. Therefore
\[
 \KL(P^\sigma_{jk}\|Q_{jk})
 =
 \frac{(1+\alpha)\log(1+\alpha)+(1-\alpha)\log(1-\alpha)}{m+2}.
\]
For $|\alpha|\le1/2$, the numerator is at most $2\alpha^2$, hence
\[
 \KL(P^\sigma_{jk}\|Q_{jk})\le \frac{2\alpha^2}{m+2}.
\]

Let $\Sigma$ be uniform on $\cC$ and let $\Pi$ be the transcript of any adaptive $q$-query learner. Conditional on any history, the next query is some fixed $S_{jk}$. By \Cref{lem:mi-ref}, the conditional mutual information between $\Sigma$ and the next response is at most $2\alpha^2/(m+2)$. Therefore,
\[
 I(\Sigma;\Pi)\le \frac{2q\alpha^2}{m+2}.
\]

A learner that succeeds to TV error $\eps$ with probability at least $2/3$ permits nearest-neighbor decoding of $\Sigma$ with error probability at most $1/3$. By Fano,
\[
 I(\Sigma;\Pi)\ge c_0 m
\]
for a universal $c_0>0$. Combining the two bounds yields
\[
 q\ge c_1\frac{m(m+2)}{\alpha^2}
 =
 \Omega\!\left(\frac{n^2}{\eps^2}\right).
\]

\begin{figure}[ht]
\centering
\begin{tikzpicture}

    \coordinate (H)  at (0, 0);
    \coordinate (aj) at (-0.8, 3);
    \coordinate (bj) at (0.8, 3);
    \coordinate (ak) at (-3, -1);
    \coordinate (bk) at (-1.5, -2.5);
    \coordinate (al) at (3, -1);
    \coordinate (bl) at (1.5, -2.5);

    \draw[line width=45pt, color=yellow!45, opacity=0.7, line cap=round, line join=round] 
        (aj) -- (bj) -- (H) -- (bk) -- (ak) -- (aj);

    \draw[line width=45pt, color=magenta!25, opacity=0.7, line cap=round, line join=round] 
        (aj) -- (bj) -- (H) -- (bl) -- (al) -- (aj);

    \node[circle, fill=blue!60, text=white, minimum size=12mm, font=\large\bfseries] at (H) {$0$};
    \node[font=\footnotesize\bfseries, text=blue!80!black] at (0, -0.9) {Hub (Mass 1/2)};

    \node[circle, fill=orange!80, text=white, minimum size=9mm, font=\bfseries] at (aj) {$a_j$};
    \node[circle, fill=orange!80, text=white, minimum size=9mm, font=\bfseries] at (bj) {$b_j$};
    \node[font=\footnotesize\bfseries, text=orange!90!black] at (0, 3.8) {Pair $j$};

    \node[circle, fill=teal!70, text=white, minimum size=9mm, font=\bfseries] at (ak) {$a_k$};
    \node[circle, fill=teal!70, text=white, minimum size=9mm, font=\bfseries] at (bk) {$b_k$};
    \node[font=\footnotesize\bfseries, text=teal!90!black] at (-3, -2.2) {Pair $k$};

    \node[circle, fill=purple!60, text=white, minimum size=9mm, font=\bfseries] at (al) {$a_\ell$};
    \node[circle, fill=purple!60, text=white, minimum size=9mm, font=\bfseries] at (bl) {$b_\ell$};
    \node[font=\footnotesize\bfseries, text=purple!90!black] at (3, -2.2) {Pair $\ell$};

    \node[font=\bfseries, text=orange!80!black] at (-3, 2.5) {Query Set $S_{jk}$};
    \node[font=\bfseries, text=magenta!80!black] at (3, 2.5) {Query Set $S_{j\ell}$};

\end{tikzpicture}
\caption{Visualization of the complete co-occurrence lower bound instance. Solid, brightly colored regions represent the query sets $S_{jk}$ and $S_{j\ell}$, which intersect exactly at the hub $0$ and pair $j$.}
\label{fig:cute_solid_lower_bound}
\end{figure}
\end{proof}

\subsection{Faster Query Families: Hierarchical Comparability}\label{sec:hierarchical}
Having established the range in which the optimal sample complexity lies, 
we seek to understand for which \(\mathcal S\) the near-linear rate is attainable. Our main contribution is a general sufficient
condition that enables this rate and is based on the ability
to recursively decompose the target support into sets that are (implicitly) queryable. The following definitions are useful.

\begin{definition}[Balanced binary partition tree]
A \emph{binary partition tree} $\mathcal{T}$ over a finite set $U$ is a rooted binary tree whose nodes are nonempty subsets of $U$. Specifically, the root is $U$, the leaves are singletons $\{x\}$ for $x \in U$, and each internal node $C$ is partitioned by its two children $C_L, C_R$ (i.e., $C_L \cup C_R = C$ and $C_L \cap C_R = \emptyset$). We say $\mathcal{T}$ is \emph{balanced} if its depth is $O(\log |U|)$.
\end{definition}

\begin{definition}[Hierarchical Comparability]
\label{def:HC}
Let \(U\subseteq[n]\), and let \(\mathcal S\subseteq 2^U\setminus\{\emptyset\}\)
be a family of allowed query sets. We say that \(\mathcal S\) is
\emph{hierarchically comparable on $U$} 
 if there exists a balanced binary
partition tree \(\mathcal T\) over \(U\) such that for every internal node
\(C\in\mathcal T\) and every unordered pair \(\{x,y\}\subseteq C\), there
exists \(W_C(\{x,y\})\in\mathcal S\) satisfying
\[
\{x,y\}\subseteq W_C(\{x,y\})\subseteq C.
\]
Equivalently, every pair of elements inside every tree cell can be compared
using a query set that stays inside that cell.
\end{definition}
\begin{remark}
\label{rem:hc-examples}
We note that this property depends solely on $U, \sellers$ and not the target distribution. Many natural query families satisfy it, including
pairwise queries, hierarchical category menus, and geometric families such as
intervals, boxes, and subcubes. We provide more concrete examples in
\Cref{sec:hc-examples}.
\end{remark}

\begin{theorem}[Near-linear learning from tree-local completeness]
\label{thm:tree-local-main}
Let $U\subseteq[n]$ be nonempty with $|U| = m$ and suppose $\sellers$ is
hierarchically comparable on $U$. Then, for an absolute constant $C$,  $\sellers$ PAC learns $\mathcal P_U$ with
query complexity
\[
 q^*_{\sellers,\mathcal P_U}(\eps,\delta)
 \le
  C\,\frac{m(\log m)^2\log(m/\delta)}{\eps^2}.
\]
\end{theorem}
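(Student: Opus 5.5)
Our plan is to use the tree $\mathcal T$ as a device for sampling, in the same spirit as \Cref{alg:mc_sampling}, and to run the Markov chain of \Cref{thm:generic-upper-main} separately inside each cell. The goal is a sampler that produces one approximate sample from $p$ at cost $O((\log m)^2\log(m/\delta'))$ queries per sample, instead of the $O(m\log(1/\eps))$ cost of the global chain. We then finish with the empirical distribution lemma (\cref{lem:empirical}) on $M=\Theta((m+\log(1/\delta))/\eps^2)$ samples.

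\textbf{Descending the tree.} To sample from $p$, we start at the root $C=U$ and walk down the tree. At each internal node $C$ with children $C_L,C_R$, we decide which child to enter by drawing an approximate sample from $p(\cdot\mid C)$ and recording the child that contains it. The key observation is that hierarchical comparability makes $\co(\sellers\cap 2^C,C)$ complete, with witnesses $W_C(\{x,y\})\subseteq C$. The chain of \Cref{thm:generic-upper-main} restricted to $C$ therefore has stationary law $p(\cdot\mid C)$, and its responses are exactly $Q_{p,W}$ with $W\subseteq C$. However, this chain mixes only in $O(|C|)$ steps, which is too slow at the root.

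\textbf{Main obstacle and fix.} The main difficulty is that the coin $p(C_L\mid C)$ must be simulated cheaply. We would replace the uniform-challenger chain on $C$ by a two-state decision that uses recursion. Starting from a point $x\in C_L$ (obtained recursively as an approximate sample from $p(\cdot\mid C_L)$), we propose a challenger $y$ drawn recursively from $p(\cdot\mid C_R)$, query $W_C(\{x,y\})$, and accept if the response is $y$. We then analyse the chain on $C$ whose moves use independence-type proposals: from $x$, propose $y\sim p(\cdot\mid C_{\text{other}})$ and accept with probability $p_y/p(W)$.

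This kernel still satisfies detailed balance with respect to $p(\cdot\mid C)$, since the acceptance probability is symmetric in the pair because $W_{xy}=W_{yx}$. Its minorization constant, however, is no longer $1/|C|$. This is the step we expect to be hardest. We would try to show that $W\subseteq C$ forces $p(W)\le p(C)$, which gives acceptance at least $p(y\mid C)/p(y\mid C_{\text{other}})$. Integrating against the proposal then yields a Doeblin bound $K(x,\cdot)\ge \tfrac12 p(\cdot\mid C)$ after symmetrizing the side choice. If this works, $O(\log(\log m/\eps))$ steps per level suffice. Each step calls the sampler on a child recursively, and a naive recursion would blow up the cost. To avoid this, we amortise by drawing the children's samples from independent cached pools of approximate samples. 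The target is per-level cost $O(\log m\cdot\log(m/\delta))$ and depth $O(\log m)$, which gives the stated $(\log m)^2\log(m/\delta)$ factor per output sample.

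\textbf{Error accounting.} We allocate TV error $\eps/(2\,\mathrm{depth})$ to each level and sum the errors along the root-to-leaf path by the triangle inequality, obtaining a law $\mu$ with $\TV(\mu,p)\le\eps/2$. A union bound over the $O(m)$ cells costs the $\log(m/\delta)$ factor. Finally, \cref{lem:empirical} applied to i.i.d. draws from $\mu$ gives $\TV(\hat p,p)\le\eps$ with probability $1-\delta$. Zero-mass cells are handled exactly as in \Cref{thm:generic-upper-main}: whenever $p(W)=0$, both endpoints have zero mass, so detailed balance still holds.
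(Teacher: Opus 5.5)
Your proposal has a genuine gap, and it sits at the step you flagged as hardest. The claim that the independence-proposal kernel is reversible for $p(\cdot\mid C)$ is false. The symmetry $W_{xy}=W_{yx}$ gives detailed balance in \Cref{alg:mc_sampling} only because the uniform challenger proposal is itself symmetric. Write $C(z)\in\{C_L,C_R\}$ for the child containing $z$. For cross moves your kernel is $K(x,y)=p(y\mid C(y))\,p_y/p(W_{xy})$, so $p(W_{xy})$ cancels in $K(x,y)/K(y,x)$. The chain, with or without symmetrizing the side choice, is therefore reversible with respect to $\pi(z)\propto p_z\,p(z\mid C(z))$, not $p(\cdot\mid C)$. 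For example, take pairwise queries, $C_L=\{x_1,x_2\}$, $C_R=\{y\}$ and equal masses: the chain spends half its time in $C_L$ instead of $2/3$. The correct Hastings factor for a proposal drawn from $p(\cdot\mid C_{\mathrm{other}})$ is $p(C_{\mathrm{other}})/p(C_{\mathrm{own}})$. That is precisely the unknown split, so it cannot be realized by ``accept iff the oracle returns $y$.'' The minorization $K(x,\cdot)\ge\frac12 p(\cdot\mid C)$ also fails. The bound $p(W)\le p(C)$ only gives $K(x,y)$ of order $p(y\mid C(y))\,p(y\mid C)$, which is much smaller than $p(y\mid C)$ when $y$ is light within its child.

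More fundamentally, no repair of the node sampler can rescue the plan of producing independent approximate samples of $p$ at polylogarithmic cost each. Take pairwise queries, which are hierarchically comparable for any tree. Let $p^{(h)}$ put mass $1/2$ on an unknown $h$ and equal mass on the other points.
\begin{itemize}
\item Until some query contains $h$, every response is a fair coin whose law does not depend on $h$.
\item Hence a procedure making $q$ queries outputs $h$ with probability at most $(2q+1)/m$ on average over $h$.
\item A single draw whose law is within $\TV$ distance $1/4$ of $p^{(h)}$ must output $h$ with probability at least $1/4$, which forces $q=\Omega(m)$.
\end{itemize}
So $M=\Theta(m/\eps^2)$ independent draws cost $\Omega(m^2/\eps^2)$, which is just the rate of \Cref{thm:generic-upper-main}. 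Sharing cached pools across draws destroys the independence that \cref{lem:empirical} requires.

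The missing idea is to stop sampling and estimate only the scalar splits $\alpha_C=p(C_L\mid C)$. At each cell $C$ the paper runs the uniform-challenger chain $K_C$. This chain is reversible for $p(\cdot\mid C)$ and has spectral gap at least $1/(|C|-1)$, because $W_C(\{x,y\})\subseteq C$. The paper pays one burn-in of $O(|C|\log(1/\delta_C))$ steps. It then uses a single correlated trajectory of $O(|C|\log(1/\delta_C)/\tau)$ steps, with the Markov-chain Bernstein inequality (\Cref{prop:mc-bernstein}) and clipping, to get $\KL(\mathrm{Bern}(\alpha_C)\|\mathrm{Bern}(\widehat\alpha_C))=O(\tau)$. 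The tree chain rule (\Cref{lem:tree-kl}) with $\tau=\Theta(\eps^2/D)$, together with disjointness of the cells on each level, gives $\KL(p\|\widehat p)\le\eps^2/4$ at cost $O(mD\log(m/\delta)/\eps^2)$ per level. The factor $|C|$ from slow mixing is thus paid once per cell rather than once per sample.
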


The following corollary follows immediately from \cref{thm:tree-local-main}.
\begin{corollary}[Pairwise queries have a nearly linear PAC rate]
\label{thm:pairwise-main}
For the pairwise query family $\sellers = \binom{[n]}{2}$ on any nonempty support $U \subseteq [n]$ of size $m$, the PAC query complexity satisfies $q^*_{\sellers,\mathcal P_U}(\eps,\delta) \le C m(\log m)^2\log(m/\delta) / \eps^2$ for an absolute constant $C$.
\end{corollary}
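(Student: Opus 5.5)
The statement immediately above is the corollary for pairwise queries. It follows from \cref{thm:tree-local-main} once we check that $\sellers=\binom{[n]}{2}$, restricted to $U$, is hierarchically comparable on $U$ in the sense of \cref{def:HC}. So the plan has two steps: build a balanced binary partition tree over $U$, and exhibit the required local witnesses.

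\emph{Step 1 (tree).} If $m=1$, the learner outputs the point mass on the unique element of $U$ and the bound is trivial, so assume $m\ge 2$. Build $\mathcal T$ by recursive halving. The root is $U$. Each internal cell $C$ with $|C|\ge 2$ is split into two nonempty parts of sizes $\lceil |C|/2\rceil$ and $\lfloor |C|/2\rfloor$, for instance by listing the elements in increasing order. The recursion stops at singletons. Cell sizes at least halve (up to ceiling) at each level, so the depth is $\lceil\log_2 m\rceil=O(\log m)$ and $\mathcal T$ is balanced.

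\emph{Step 2 (witnesses).} Let $C$ be an internal cell and $\{x,y\}\subseteq C$ a pair of distinct elements. Set $W_C(\{x,y\}):=\{x,y\}$. This set lies in $\sellers$ and satisfies $\{x,y\}\subseteq W_C(\{x,y\})\subseteq C$. Hence $\sellers$ is hierarchically comparable on $U$.

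\emph{Formal mismatch and conclusion.} \cref{def:HC} takes $\sellers\subseteq 2^U$, whereas $\binom{[n]}{2}$ also contains pairs that leave $U$. I would handle this by applying \cref{thm:tree-local-main} to the subfamily $\sellers_U:=\binom{U}{2}\subseteq\sellers$. A learner that queries only sets in $\sellers_U$ is also a valid learner for $\sellers$, so $q^*_{\sellers,\mathcal P_U}\le q^*_{\sellers_U,\mathcal P_U}$. Applying \cref{thm:tree-local-main} to $\sellers_U$ then gives the stated bound $C\,m(\log m)^2\log(m/\delta)/\eps^2$.

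There is no real obstacle here. The only points needing care are the restriction to $\binom{U}{2}$ just described and the degenerate case $m=1$, where $\log m=0$ but the trivial learner needs no queries at all.
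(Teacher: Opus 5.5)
Your proposal is correct and follows the paper's argument: the corollary is deduced directly from \cref{thm:tree-local-main}, with hierarchical comparability witnessed (as in \cref{sec:hc-examples}) by $W_C(\{x,y\}):=\{x,y\}$ for any balanced binary partition tree. Your explicit halving tree of depth $\lceil\log_2 m\rceil$ and your passage to the subfamily $\binom{U}{2}$ (to meet the requirement $\sellers\subseteq 2^U$ in \cref{def:HC}) are bookkeeping details the paper leaves implicit, and you handle both correctly.
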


To get some intuition for the proof, fix a
balanced binary tree over the domain. Each internal node $C$ is split into two
children $C_L$ and $C_R$, and the goal at that node is to estimate the fraction
of mass in the left child,
$
    p(C_L\mid C).
$
Once these split probabilities are known for all internal nodes, the 
distribution is obtained by multiplying the corresponding split probabilities
along each root-to-leaf path.
The main challenge is that the learner cannot directly sample from
$p(\cdot\mid C)$. Fortunately, due to \cref{def:HC} it turns out we can simulate such local sampling
indirectly. Inside each node $C$, the learner runs a local random walk by querying an appropriate set $S \in \sellers$. This walk has stationary
distribution $p(\cdot\mid C)$, so the fraction of time it spends in $C_L$
estimates the desired split. The bulk of the technical work is to show that due to the nature of the estimation problem we do not need to restart this chain every time we obtain a sample; a Bernstein bound establishes the estimation error. 
Repeating this across all nodes of the tree gives
a recursive estimator for the entire distribution. The algorithm is presented in \Cref{alg:tree_pairwise} and \Cref{sec:linear-rates-proof} is devoted to the proof of this result.

\begin{algorithm}[ht]
\SetAlgoLined
\DontPrintSemicolon
\KwIn{Ambient set $U$ of size $m$, a supply network $\sellers$ hierarchically
comparable on $U$, target accuracy $\eps>0$, confidence $\delta>0$.}
\KwData{A witnessing balanced binary partition tree $\cT$ over $U$; witnesses
$W_C(\{x,y\})$ for all internal nodes $C$ and pairs $\{x,y\}\subseteq C$;
node error budget $\tau=\Theta(\eps^2/D)$; node confidence
$\delta_C=\delta/(2m)$.}
\For{\text{each internal node } $C\in\cT$}{
    Let $C_L,C_R$ be the children of $C$, and let $k=|C|$\;
    $B_C\gets \lceil (k-1)\log(4/\delta_C)\rceil$ \tcc*{Burn-in steps}
    $M_C\gets \Theta\!\left(k\log(4/\delta_C)/\tau\right)$
    \tcc*{Estimation steps}
    Choose an arbitrary starting state $x\in C$\;

    \tcc{Phase 1: Burn-in}
    \For{$t=1$ \KwTo $B_C$}{
        Draw challenger $y$ uniformly from $C\setminus\{x\}$\;
        Query $W_C(\{x,y\})$ and receive $z\sim Q_{p,W_C(\{x,y\})}$\;
        \lIf{$z=y$}{$x\gets y$}
    }

    \tcc{Phase 2: Estimate the left-child occupation}
    $hits\gets 0$\;
    \For{$t=1$ \KwTo $M_C$}{
        Draw challenger $y$ uniformly from $C\setminus\{x\}$\;
        Query $W_C(\{x,y\})$ and receive $z\sim Q_{p,W_C(\{x,y\})}$\;
        \lIf{$z=y$}{$x\gets y$}
        \lIf{$x\in C_L$}{$hits\gets hits+1$}
    }

    $\overline\alpha_C\gets hits/M_C$\;
    $\widehat\alpha_C
    \gets
    \min\left\{
        1-\frac{\tau}{4},
        \max\left\{
            \frac{\tau}{4},
            \overline\alpha_C
        \right\}
    \right\}$\;
}

\tcc{Reconstruct full distribution via path products}
\For{\text{each leaf } $v\in U$}{
    $\widehat p_v\gets 1$\;
    \For{\text{each internal node } $C$ \text{ on the root-to-}$v$
    \text{ path}}{
        \eIf{$v\in C_L$}{
            $\widehat p_v\gets \widehat p_v\cdot \widehat\alpha_C$\;
        }{
            $\widehat p_v\gets \widehat p_v\cdot (1-\widehat\alpha_C)$\;
        }
    }
}
\Return reconstructed distribution $\widehat p$\;
\caption{Tree-Based Recursive Learner under Hierarchical Comparability}
\label{alg:tree_pairwise}
\end{algorithm}

\subsubsection{Proof of \Cref{thm:tree-local-main}}\label{sec:linear-rates-proof}

The proof works for the support
class $\mathcal P_U$, so the target may have zero-mass points inside $U$.

Let $m:=|U|$. 
If $m=1$, the learner outputs the unique point mass, so assume $m\ge2$.
Fix a balanced binary partition tree $\cT$ over $U$ witnessing
hierarchical comparability. Let $D$ denote the depth of $\cT$, so
$D=O(\log m)$. For every internal node $C\in\cT$ and every unordered pair
$\{x,y\}\subseteq C$, fix a witness set
\[
    W_C(\{x,y\})\in\sellers
    \qquad\text{such that}\qquad
    \{x,y\}\subseteq W_C(\{x,y\})\subseteq C .
\]
The witness is symmetric in $x,y$ because it is indexed by the unordered pair
$\{x,y\}$.

For a node $C$ with $p(C)>0$, define
\[
    p_C(x):=p(x\mid C)=\frac{p_x}{p(C)},
    \qquad x\in C,
\]
and define the split parameter
\[
    \alpha_C:=p(C_L\mid C)=\frac{p(C_L)}{p(C)}.
\]
If $p(C)=0$, the value of $\alpha_C$ is irrelevant and may be chosen
arbitrarily. Such nodes contribute zero mass to the error decomposition below.

\begin{lemma}[Tree KL chain rule with zero-mass nodes]
\label{lem:tree-kl}
Suppose that for every internal node $C$ we have an estimate
$\widehat\alpha_C\in(0,1)$ and reconstruct a leaf distribution $\widehat p$
by multiplying the estimated split probabilities along each root-to-leaf path.
Then
\[
    D_{\mathrm{KL}}(p\|\widehat p)
    =
    \sum_{C\in\mathcal I(\cT):\,p(C)>0}
    p(C)\,
    D_{\mathrm{KL}}\!\left(
        \mathrm{Bern}(\alpha_C)
        \,\middle\|\,
        \mathrm{Bern}(\widehat\alpha_C)
    \right),
\]
where $\mathcal I(\cT)$ denotes the set of internal nodes of $\cT$.
Consequently,
\[
    d_{TV}(p,\widehat p)
    \le
    \sqrt{
        \frac12
        \sum_{C\in\mathcal I(\cT):\,p(C)>0}
        p(C)\,
        D_{\mathrm{KL}}\!\left(
            \mathrm{Bern}(\alpha_C)
            \,\middle\|\,
            \mathrm{Bern}(\widehat\alpha_C)
        \right)
    }.
\]
\end{lemma}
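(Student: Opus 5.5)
The plan is to prove the identity by writing both $p$ and $\widehat p$ as products along root-to-leaf paths and telescoping the log-ratio. For each leaf $v\in U$ with root-to-leaf path $U=C_0\supset C_1\supset\cdots\supset C_\ell=\{v\}$, let $\beta_{C_{i}}(C_{i+1})$ denote $\alpha_{C_i}$ if $C_{i+1}$ is the left child and $1-\alpha_{C_i}$ otherwise, and define $\widehat\beta$ analogously with $\widehat\alpha$. Then
\[
\widehat p_v=\prod_{i<\ell}\widehat\beta_{C_i}(C_{i+1}),
\]
and, whenever $p_v>0$, every node on the path has positive mass and
\[
p_v=\prod_{i<\ell}\beta_{C_i}(C_{i+1}),
\]
since the conditional masses telescope as $p(C_{i+1})/p(C_i)$. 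Since $\widehat\alpha_C\in(0,1)$, $\widehat p_v>0$ for every leaf, so $D_{\mathrm{KL}}(p\|\widehat p)$ is finite. Leaves with $p_v=0$ contribute nothing, by the usual convention.

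Next, I write
\[
D_{\mathrm{KL}}(p\|\widehat p)=\sum_{v:\,p_v>0}p_v\sum_{i<\ell(v)}\log\frac{\beta_{C_i}(C_{i+1})}{\widehat\beta_{C_i}(C_{i+1})}
\]
and exchange the order of summation. This groups terms by internal node $C$ and by which child $C'\in\{C_L,C_R\}$ the path goes through. For a fixed $C$ with $p(C)>0$ and child $C'$, the total weight is $\sum_{v\in C',\,p_v>0}p_v=p(C')=p(C)\beta_C(C')$. Hence the contribution of $C$ is
\[
p(C)\Bigl[\alpha_C\log\frac{\alpha_C}{\widehat\alpha_C}+(1-\alpha_C)\log\frac{1-\alpha_C}{1-\widehat\alpha_C}\Bigr]=p(C)\,D_{\mathrm{KL}}(\mathrm{Bern}(\alpha_C)\|\mathrm{Bern}(\widehat\alpha_C)).
\]

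Two bookkeeping points need care, and they are the only real obstacle. First, a child $C'$ with $p(C')=0$ yields a term $0\cdot\log(0/\widehat\beta)$, which is consistently $0$ on both sides. Second, nodes with $p(C)=0$ are never visited by a positive-mass leaf, so they drop out; this is exactly why the sum is restricted to $p(C)>0$ and why the arbitrary choice of $\alpha_C$ there is harmless. I would state explicitly that log-ratio terms are only formed for nodes on paths of positive-mass leaves, where all $\beta$'s are positive. This avoids any $\log 0$ or $\infty-\infty$ issue.

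Finally, the TV bound follows from Pinsker's inequality, $d_{TV}(p,\widehat p)\le\sqrt{\tfrac12 D_{\mathrm{KL}}(p\|\widehat p)}$, applied to the identity just established.
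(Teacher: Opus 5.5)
Your proposal is correct and follows essentially the same route as the paper. You write $p_v$ and $\widehat p_v$ as path products, expand the log-ratio, exchange the finite sums to group terms by internal node and child, and then apply Pinsker. Your explicit bookkeeping for zero-mass children and zero-mass nodes matches the paper's restriction to positive-mass leaves, only spelled out in more detail.
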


\begin{proof}
For a leaf $x$ with $p_x>0$, all nodes on the root-to-$x$ path have positive
$p$-mass. Write
\[
    p_x=\prod_{C\ni x}\beta_C(x),
    \qquad
    \widehat p_x=\prod_{C\ni x}\widehat\beta_C(x),
\]
where $\beta_C(x)=\alpha_C$ if $x\in C_L$ and
$\beta_C(x)=1-\alpha_C$ if $x\in C_R$, and similarly for
$\widehat\beta_C(x)$ using $\widehat\alpha_C$. Hence
\[
    \log\frac{p_x}{\widehat p_x}
    =
    \sum_{C\ni x}
    \log\frac{\beta_C(x)}{\widehat\beta_C(x)}.
\]
Taking expectation under $p$ and exchanging the finite sums gives
\[
\begin{aligned}
    D_{\mathrm{KL}}(p\|\widehat p)
    &=
    \sum_{C:\,p(C)>0}
    \sum_{x\in C}
    p_x
    \log\frac{\beta_C(x)}{\widehat\beta_C(x)}  \\
    &=
    \sum_{C:\,p(C)>0}
    p(C)
    \left[
        \alpha_C\log\frac{\alpha_C}{\widehat\alpha_C}
        +
        (1-\alpha_C)\log\frac{1-\alpha_C}{1-\widehat\alpha_C}
    \right].
\end{aligned}
\]
This is the desired identity. Pinsker's inequality gives the total-variation
bound.
\end{proof}

\paragraph{A local chain for estimating a split.} Fix an internal node $C$ with $p(C)>0$ and $k:=|C|\ge2$. We define a Markov
chain $K_C$ on the state space $C$. From current state $x\in C$, choose a
challenger $y$ uniformly from $C\setminus\{x\}$, query the tree-local witness
set  $W_C(\{x,y\}),$
and move to $y$ if and only if the oracle response is $y$. Otherwise, stay at
$x$.

Equivalently, for $x\neq y$,
\[
    K_C(x,y)
    =
    \frac{1}{k-1}Q_{p,W_C(\{x,y\})}(y),
\]
where $Q_{p,W_C(\{x,y\})}$ is the one-query response law from the
preliminaries. This definition is valid even when
$p(W_C(\{x,y\}))=0$.

If $k=2$, then the unique challenger is the other point of $C$, and a
direct calculation gives
$
    K_C(x,\cdot)=p_C(\cdot)$
 for every $x\in C.
$
Hence the mixing and spectral-gap conclusions hold immediately. Assume below
that $k>2$.

\begin{lemma}[Stationarity and minorization of the local chain]
\label{lem:local-chain-pu}
For every positive-mass node $C$, the conditional distribution $p_C$ is
stationary and reversible for $K_C$. Moreover, for every $x\in C$,
\[
    K_C(x,\cdot)\ge \frac{1}{k-1}p_C(\cdot)
\]
pointwise on $C$. Consequently,
\[
    d_{TV}(K_C^t(x,\cdot),p_C)
    \le
    e^{-t/(k-1)}
\]
for every starting state $x\in C$, and the spectral gap of $K_C$ is
at least $1/(k-1)$.
\end{lemma}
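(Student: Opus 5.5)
The proof will mirror the global argument in the proof of \Cref{thm:generic-upper-main}, now carried out inside the cell $C$ with the conditional target $p_C$. The one new ingredient is the tree-local property $W_C(\{x,y\})\subseteq C$. It guarantees that whenever $p(W_C(\{x,y\}))>0$ we have
\[
Q_{p,W_C(\{x,y\})}(z)=\frac{p_z}{p(W)}=\frac{p_C(z)}{p_C(W)}
\]
for $z\in W$, where $p_C(W)=p(W)/p(C)\le 1$. So every oracle response law that matters is a conditional of $p_C$.

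\textbf{Detailed balance.} First I verify detailed balance for distinct $x,y\in C$. If $p(W_C(\{x,y\}))=0$, then $p_x=p_y=0$ and both sides of $p_C(x)K_C(x,y)=p_C(y)K_C(y,x)$ vanish. Otherwise both sides equal
\[
\frac{1}{k-1}\cdot\frac{p_C(x)p_C(y)}{p_C(W)}.
\]
Here the symmetry of the witness, which is indexed by the unordered pair $\{x,y\}$, is what makes the same $W$ appear in both directions. Reversibility gives stationarity of $p_C$.

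\textbf{Minorization.} This step uses $p_C(W)\le 1$. For $y\neq x$, either $p_y=0$, which is trivial, or
\[
K_C(x,y)=\frac{1}{k-1}\cdot\frac{p_C(y)}{p_C(W)}\ge \frac{p_C(y)}{k-1}.
\]
For the diagonal, if $p_x>0$ then each rejection probability satisfies $1-Q_W(y)\ge Q_W(x)\ge p_C(x)$. Averaging over challengers gives $K_C(x,x)\ge p_C(x)\ge p_C(x)/(k-1)$. The case $p_x=0$ is trivial.

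\textbf{Mixing and spectral gap.} From the minorization I get the Doeblin decomposition
\[
K_C=\tfrac{1}{k-1}\Pi_{p_C}+\bigl(1-\tfrac{1}{k-1}\bigr)R,
\]
with $R$ a Markov kernel; $R$ is well defined because $k>2$. Either \cref{lem:doeblin-minorization} or a direct coupling then gives
\[
d_{TV}(K_C^t(x,\cdot),p_C)\le (1-1/(k-1))^t\le e^{-t/(k-1)}.
\]

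The spectral gap claim is the only step requiring a bit more care, and I expect it to be the main (mild) obstacle, since the chain may be lazy and $p_C$ may have zero-mass states. My plan is to work in $L^2(p_C)$ on the functions with $p_C$-mean zero. Since $K_C$ is reversible, it is self-adjoint there. For such $f$ we have $\Pi_{p_C}f=0$, so $K_Cf=(1-\frac{1}{k-1})Rf$.

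I first intend to show that $R$ is a contraction on $L^2(p_C)$. Since $K_C$ and $\Pi_{p_C}$ both fix $p_C$, so does $R$, and Jensen's inequality then gives $\|Rf\|_{p_C}\le\|f\|_{p_C}$. It follows that $\|K_Cf\|\le (1-\frac1{k-1})\|f\|$. Hence every eigenvalue of $K_C$ other than the trivial $1$ has absolute value at most $1-1/(k-1)$, which gives the gap of at least $1/(k-1)$.

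Zero-mass states carry no weight in $L^2(p_C)$, so they are harmless for this argument. The total-variation bound, however, is stated for every starting state $x\in C$, and it comes from the Doeblin decomposition rather than from the spectral argument. That decomposition holds pointwise for all $x$, including zero-mass states, so no separate treatment is needed.
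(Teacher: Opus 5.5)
Your proposal is correct and follows essentially the same route as the paper. The steps match: detailed balance via the symmetric witness in the two cases $p(W)=0$ and $p(W)>0$, then the off-diagonal and diagonal minorization using $W_C(\{x,y\})\subseteq C$, then the Doeblin decomposition for the total-variation bound from every starting state. The one difference is the spectral-gap claim, which the paper only asserts from reversibility plus the Doeblin contraction. You supply a valid argument for it: on $p_C$-mean-zero $f$ you have $K_Cf=(1-\tfrac{1}{k-1})Rf$, and $R$ contracts $L^2(p_C)$ by Jensen because $p_CR=p_C$.
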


\begin{proof}
We first prove reversibility. Fix distinct $x,y\in C$ and let
\[
    W:=W_C(\{x,y\}).
\]
By hierarchical comparability,
\[
    x,y\in W\subseteq C.
\]

If $p(W)=0$, then since $x,y\in W$ and $p$ is nonnegative, we have
$p_x=p_y=0$. Therefore
\[
    p_C(x)K_C(x,y)=p_C(y)K_C(y,x)=0.
\]

If $p(W)>0$, then the zero-mass convention is irrelevant and
\[
    Q_{p,W}(y)=\frac{p_y}{p(W)},
    \qquad
    Q_{p,W}(x)=\frac{p_x}{p(W)}.
\]
Hence
\[
\begin{aligned}
    p_C(x)K_C(x,y)
    &=
    \frac{p_x}{p(C)}
    \frac{1}{k-1}
    \frac{p_y}{p(W)}  \\
    &=
    \frac{p_y}{p(C)}
    \frac{1}{k-1}
    \frac{p_x}{p(W)}
    =
    p_C(y)K_C(y,x).
\end{aligned}
\]
Thus detailed balance holds for every pair $x\neq y$, and so $p_C$ is
stationary and reversible for $K_C$.

We next prove the minorization. Fix $x\in C$.

First consider $y\neq x$ and again write $W:=W_C(\{x,y\})$. If $p(W)>0$, then
\[
    K_C(x,y)
    =
    \frac{1}{k-1}\frac{p_y}{p(W)}
    \ge
    \frac{p_y}{(k-1)p(C)}
    =
    \frac{p_C(y)}{k-1},
\]
because $W\subseteq C$ implies $p(W)\le p(C)$. If $p(W)=0$, then $p_y=0$, so
the same inequality holds trivially.

It remains to check the diagonal coordinate. If $p_x=0$, then
\[
    K_C(x,x)\ge0=\frac{p_C(x)}{k-1}.
\]
If $p_x>0$, then for every challenger $y\neq x$, the witness set
$W_C(\{x,y\})$ has positive mass because it contains $x$. Therefore
\[
    1-Q_{p,W_C(\{x,y\})}(y)
    \ge
    Q_{p,W_C(\{x,y\})}(x)
    =
    \frac{p_x}{p(W_C(\{x,y\}))}
    \ge
    \frac{p_x}{p(C)}
    =
    p_C(x),
\]
where the last inequality uses $W_C(\{x,y\})\subseteq C$. Averaging over
$y\in C\setminus\{x\}$ gives
\[
    K_C(x,x)
    =
    \frac{1}{k-1}
    \sum_{y\neq x}
    \bigl(1-Q_{p,W_C(\{x,y\})}(y)\bigr)
    \ge
    p_C(x)
    \ge
    \frac{p_C(x)}{k-1}.
\]
Thus the pointwise minorization holds.

The minorization yields a Doeblin decomposition
\[
    K_C
    =
    \frac{1}{k-1}\Pi_C
    +
    \left(1-\frac{1}{k-1}\right)R_C,
\]
where $\Pi_C$ is the rank-one kernel whose every row is $p_C$, and $R_C$ is a
Markov kernel. Hence
\[
    d_{TV}(K_C^t(x,\cdot),p_C)
    \le
    \left(1-\frac{1}{k-1}\right)^t
    \le
    e^{-t/(k-1)}.
\]
Since $K_C$ is reversible, the same Doeblin contraction gives absolute
spectral gap at least $1/(k-1)$.
\end{proof}

We use the following standard Bernstein inequality for stationary reversible
chains with spectral gap; see, for example, \cite{JiangSunFan2018}.

\begin{proposition}[Bernstein inequality for the local chain]
\label{prop:mc-bernstein}
There is a universal constant $c_B>0$ such that if $(X_t)$ is stationary,
reversible, has invariant law $\pi$, and has absolute spectral gap at least
$\gamma$, then for every $0\le f\le1$ with mean $\mu$ and variance $\sigma^2$,
\[
    \Pr\!\left[
        \left|
        \frac1M\sum_{t=1}^M f(X_t)-\mu
        \right|>u
    \right]
    \le
    2\exp\!\left(
        -c_B\frac{M\gamma u^2}{\sigma^2+u}
    \right).
\]
\end{proposition}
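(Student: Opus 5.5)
The plan is the standard Chernoff--operator argument of Lezaud and Paulin, applied on $L^2(\pi)$. By symmetry it suffices to bound the upper tail for $g:=f-\mu$, which has $|g|\le 1$, $\mathbb E_\pi g=0$ and $\mathbb E_\pi g^2=\sigma^2$. The lower tail then follows by applying the same bound to $1-f$, which has the same variance; this accounts for the factor $2$.

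\textbf{Step 1 (operator form of the MGF).} For $\theta\ge 0$ let $E_\theta$ be multiplication by $e^{\theta g/2}$, and let $P$ be the transition operator, which is self-adjoint on $L^2(\pi)$ by reversibility. Stationarity gives
\[
\mathbb E\bigl[e^{\theta\sum_{t\le M} g(X_t)}\bigr]=\langle E_\theta\mathbf 1,(E_\theta P E_\theta)^{M-1}E_\theta\mathbf 1\rangle_\pi\le \|E_\theta\mathbf 1\|_\pi^2\,\lambda(\theta)^{M-1},
\]
where $\lambda(\theta)$ is the top of the spectrum of the self-adjoint operator $P_\theta:=E_\theta P E_\theta$. (A negative part of the spectrum only helps, since its modulus is also controlled by $\lambda(\theta)$ or by the absolute gap.) The prefactor satisfies $\|E_\theta\mathbf 1\|^2=\mathbb E_\pi e^{\theta g}\le e^{\theta^2\sigma^2}$ for $\theta\le1$, so it is harmless.

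\textbf{Step 2 (perturbation bound on $\lambda(\theta)$).} This is the main obstacle. I would take a unit vector $h=a\mathbf 1+h_\perp$ with $h_\perp\perp\mathbf 1$. By the gap assumption, $\langle h_\perp,Ph_\perp\rangle\le(1-\gamma)\|h_\perp\|^2$. Expanding $E_\theta=I+\theta g/2+R_\theta$ with $|R_\theta|\le C\theta^2 g^2$ for $\theta\le 1$, and using $\mathbb E_\pi g=0$ together with $\|g\mathbf 1\|=\sigma$ and $|g|\le1$ to control the cross terms, I aim for
\[
\lambda(\theta)\le 1+\sup_{a,b\ge0,\ a^2+b^2=1}\Bigl[c_1\theta^2\sigma^2a^2+c_2\theta\sigma ab-(\gamma-c_3\theta)b^2\Bigr].
\]
Maximizing this quadratic form in $(a,b)$ gives
\[
\log\lambda(\theta)\le \frac{C\theta^2\sigma^2}{\gamma\,(1-C'\theta/\gamma)}\qquad\text{for }0\le\theta<\gamma/C'.
\]
The delicate point is bookkeeping: the $\theta\sigma ab$ cross term must be absorbed using the gap $\gamma$ rather than by a crude norm bound. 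Otherwise $\sigma^2$ degrades to $1$ and the Bernstein (rather than Hoeffding) form is lost.

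\textbf{Step 3 (optimize).} Markov's inequality gives
\[
\Pr\Bigl[\tfrac1M\textstyle\sum_t g(X_t)>u\Bigr]\le \exp\Bigl(-\theta Mu+\frac{CM\theta^2\sigma^2}{\gamma(1-C'\theta/\gamma)}+\theta^2\sigma^2\Bigr).
\]
Choose $\theta=\gamma u/\bigl(c(\sigma^2+u)\bigr)$, which is at most $\gamma/(2C')$ for $c$ large. The exponent then becomes $-c_B M\gamma u^2/(\sigma^2+u)$; the lone $\theta^2\sigma^2$ term is dominated for $M\ge1$ after adjusting constants, or the bound is trivial otherwise. This yields the claimed inequality, consistent with \cite{JiangSunFan2018}.
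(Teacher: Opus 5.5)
Your proposal is correct. The paper gives no argument of its own: it imports the inequality from \cite{JiangSunFan2018}. That result holds for stationary chains that need not be reversible, assumes only an absolute spectral gap, and has explicit constants in which both the variance proxy and the range term scale like $1/\gamma$. You instead give the reversible-case Chernoff--operator argument in the style of Lezaud and Paulin. Reversibility makes $E_\theta PE_\theta$ self-adjoint on $L^2(\pi)$, so the moment generating function is controlled by its top eigenvalue, which you bound through a $2\times 2$ quadratic form along $\mathbf 1$ and $\mathbf 1^{\perp}$. The bookkeeping closes as you claim. Centering removes any $\theta a^2$ term, and the cross term is $O(\theta\sigma ab)$. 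The maximization gives $\lambda(\theta)-1\le C\theta^2\sigma^2/(\gamma-c_3\theta)$, using $\gamma\le 1$ to absorb the $a^2$ coefficient. With $\theta=\gamma u/(c(\sigma^2+u))$, both $M\theta^2\sigma^2/\gamma$ and the prefactor $\theta^2\sigma^2$ are at most $M\theta u/c$.

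Two steps can be stated more cleanly. In Step~1, Perron--Frobenius disposes of the negative spectrum: $E_\theta PE_\theta$ is entrywise nonnegative and self-adjoint, so its operator norm equals its top eigenvalue. In Step~2 you do not need to expand $E_\theta$. With $w=E_\theta h$, the gap gives $\langle w,Pw\rangle\le(1-\gamma)\|w\|^2+\gamma\langle w,\mathbf 1\rangle^2$. Then $\mathbb E_\pi g=0$ and $\|e^{\theta g}-1\|_{L^2(\pi)}\le(e-1)\theta\sigma$ (for $\theta\le 1$) produce your quadratic form directly.

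As for what each route buys: the citation gives explicit constants and covers non-reversible chains. Your argument is self-contained and uses only what the application provides, since the local chain is reversible by \Cref{lem:local-chain-pu} and only some universal $c_B$ is needed.
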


\paragraph{One-node split estimation.} At a positive-mass node $C$, the goal is to estimate
\[
    \alpha_C=p(C_L\mid C).
\]
Run the local chain for
\[
    B_C
    :=
    \left\lceil
        (k-1)\log\frac{4}{\delta_C}
    \right\rceil
\]
burn-in steps. Then, run it for $M_C$ additional steps and define
\[
    \overline\alpha_C
    :=
    \frac1{M_C}\sum_{t=1}^{M_C}\1\{X_t\in C_L\}.
\]
For a node KL budget $\tau_C\in(0,1/16]$, clip
\[
    \widehat\alpha_C
    :=
    \min\left\{
        1-\frac{\tau_C}{4},
        \max\left\{
            \frac{\tau_C}{4},
            \overline\alpha_C
        \right\}
    \right\}.
\]

\begin{lemma}[One-node KL estimate]
\label{lem:node-kl}
There are universal constants $A,C>0$ such that the following hold. Fix a
positive-mass internal node $C$ of size $k$, a node budget
$\tau_C\in(0,1/16]$, and a failure probability $\delta_C\in(0,1)$. If
\[
    M_C
    \ge
    C\,\frac{k\log(4/\delta_C)}{\tau_C},
\]
then the clipped estimator above satisfies
\[
    \Pr\!\left[
        D_{\mathrm{KL}}\!\left(
            \mathrm{Bern}(\alpha_C)
            \,\middle\|\,
            \mathrm{Bern}(\widehat\alpha_C)
        \right)
        >
        A\tau_C
    \right]
    \le
    \delta_C.
\]
The number of queries used at node $C$ is
\[
    O\!\left(
        k\log\frac{1}{\delta_C}
        +
        \frac{k\log(4/\delta_C)}{\tau_C}
    \right).
\]
\end{lemma}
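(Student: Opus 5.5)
The plan has three parts: reduce the chain average to a stationary one via the burn-in, apply the Bernstein bound of \Cref{prop:mc-bernstein} with a variance-sensitive deviation $u$, and convert the resulting additive error into a KL bound, with clipping handling the boundary.

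\textbf{Step 1: burn-in and the stationary reduction.} Write $\alpha:=\alpha_C$ and $p_C=p(\cdot\mid C)$. By \Cref{lem:local-chain-pu}, after $B_C=\lceil (k-1)\log(4/\delta_C)\rceil$ burn-in steps the law of the current state is within TV distance $\delta_C/4$ of $p_C$. Couple it maximally with a draw from $p_C$ and run both chains together afterward. Outside an event of probability at most $\delta_C/4$, the phase-2 trajectory then coincides with a stationary reversible chain started from $p_C$. That chain has absolute spectral gap $\gamma\ge 1/(k-1)$, again by \Cref{lem:local-chain-pu}.

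\textbf{Step 2: Bernstein with a variance-adapted deviation.} Apply \Cref{prop:mc-bernstein} with $f=\1\{\cdot\in C_L\}$, so $\mu=\alpha$ and $\sigma^2=\alpha(1-\alpha)$. Choose
\[
u:=\sqrt{\alpha(1-\alpha)\tau_C}+\tau_C .
\]
Then $u^2\ge \tau_C(\sigma^2+u)/2$ up to constants. Indeed, $u^2\ge \sigma^2\tau_C$ and $u^2\ge \tau_C u$, so $u^2\ge \tfrac12\tau_C(\sigma^2+u)$. Hence the exponent is at least $c_B M_C\tau_C/(2(k-1))$. Taking $M_C\ge C k\log(4/\delta_C)/\tau_C$ with $C$ large makes the failure probability at most $\delta_C/2$. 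On the good event, $|\overline\alpha_C-\alpha|\le u$.

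\textbf{Step 3: converting additive error into KL (the main obstacle).} I will use the chi-square bound
\[
\KL\bigl(\mathrm{Bern}(\alpha)\,\|\,\mathrm{Bern}(\beta)\bigr)\le \frac{(\alpha-\beta)^2}{\beta(1-\beta)}.
\]
The difficulty is that the denominator involves the estimate $\beta=\widehat\alpha_C$ rather than $\alpha$. Clipping preserves the error bound whenever $\alpha\in[\tau_C/4,1-\tau_C/4]$, because projection onto an interval containing $\alpha$ does not increase distance. By symmetry assume $\alpha\le 1/2$, and split into two cases.

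\begin{itemize}
\item \emph{Case $\alpha\ge c\tau_C$.} Here $|\widehat\alpha_C-\alpha|\le u\lesssim \sqrt{\alpha\tau_C}+\tau_C$, which is at most a small constant times $\alpha$ once $c$ is large relative to the constants in $u$. If $u$ is not small relative to $\alpha$, I instead bound using $\widehat\alpha_C\ge\tau_C/4$. In either case $\widehat\alpha_C\gtrsim\alpha$ and $1-\widehat\alpha_C\ge 1/4$. The chi-square bound then gives $\KL\lesssim u^2/\alpha\lesssim \tau_C+\tau_C^2/\alpha\lesssim\tau_C$.
\item \emph{Case $\alpha< c\tau_C$.} Here $\widehat\alpha_C\in[\tau_C/4,\,O(\tau_C)]$, so $\widehat\alpha_C=\Theta(\tau_C)$ and $1-\widehat\alpha_C\ge 1/2$. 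Writing $\beta=\widehat\alpha_C$, bound directly
\[
\KL\le \alpha\log\frac{\alpha}{\beta}+(1-\alpha)\log\frac{1-\alpha}{1-\beta}\le \alpha\cdot O(1)+O(\beta)=O(\tau_C).
\]
The first term uses $\alpha/\beta=O(1)$. The second uses $\log\frac{1-\alpha}{1-\beta}\le 2\beta$ for $\beta\le 1/2$, noting it is nonpositive when $\beta\le\alpha$.
\end{itemize}

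I expect the bookkeeping of the constant $c$ against the constants in $u$ to be the fiddly part of Step 3.

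\textbf{Combining and counting queries.} Union-bound the coupling failure from Step 1 ($\delta_C/4$) and the Bernstein failure from Step 2 ($\delta_C/2$); the total is at most $\delta_C$, so $\KL\le A\tau_C$ with the required probability. The query count is $B_C+M_C$, since each chain step uses exactly one query. This gives the stated $O\bigl(k\log(1/\delta_C)+k\log(4/\delta_C)/\tau_C\bigr)$.
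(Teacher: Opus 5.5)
Your proposal is correct and follows essentially the same route as the paper. Both arguments burn in to within $\delta_C/4$ of stationarity, apply the Markov-chain Bernstein bound with the variance-adapted deviation $\sqrt{\alpha_C(1-\alpha_C)\tau_C}+\tau_C$, and split into cases on $\alpha_C$ versus $\tau_C$, with clipping handling the boundary regime. The paper scales the deviation by $1/64$ and uses threshold $8\tau_C$, whereas you keep constant $1$ and absorb the slack in your threshold $c$.

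One piece of bookkeeping needs fixing. In the small-$\alpha_C$ case, $\widehat\alpha_C\le 1/2$ need not hold when $\tau_C$ is near $1/16$. Use instead $1-\widehat\alpha_C\ge 5/16$, which follows from $\alpha_C\le 1/2$ and $u\le 3/16$; this changes only constants.
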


\begin{proof}
By \Cref{lem:local-chain-pu}, after $B_C$ burn-in steps the current-state law
is within $\delta_C/4$ in total variation of stationarity. Since the subsequent
path is obtained by applying a Markov kernel to the initial state, the law of
the length-$M_C$ path is also within $\delta_C/4$ of the stationary path law.

For a stationary copy of the chain, apply \Cref{prop:mc-bernstein} with
\[
    f(x)=\1\{x\in C_L\},
    \qquad
    \mu=\alpha_C,
    \qquad
    \sigma^2=\alpha_C(1-\alpha_C),
    \qquad
    \gamma\ge\frac{1}{k-1}.
\]
With $C$ sufficiently large, the displayed choice of $M_C$ implies that, with
probability at least $1-\delta_C/2$ under the stationary path law,
\[
    |\overline\alpha_C-\alpha_C|
    \le
    \frac1{64}\sqrt{\alpha_C(1-\alpha_C)\tau_C}
    +
    \frac{\tau_C}{64}.
\]
Adding the burn-in discrepancy gives this event with probability at least
$1-\delta_C$ for the actual run.

It remains to convert the additive event into Bernoulli KL control. We record
the standard argument. If
$\alpha_C\in[8\tau_C,1-8\tau_C]$, then the additive error is at most a small
constant times
$\sqrt{\alpha_C(1-\alpha_C)\tau_C}$, no clipping occurs, and
\[
    D_{\mathrm{KL}}\!\left(
        \mathrm{Bern}(\alpha_C)
        \,\middle\|\,
        \mathrm{Bern}(\widehat\alpha_C)
    \right)
    \le
    O(\tau_C).
\]
If $\alpha_C<8\tau_C$, then the additive event implies
$\overline\alpha_C\le 9\tau_C$, so after clipping
\[
    \widehat\alpha_C\in[\tau_C/4,9\tau_C].
\]
Both Bernoulli parameters are then $O(\tau_C)$, and the Bernoulli KL is again
$O(\tau_C)$. The case $\alpha_C>1-8\tau_C$ is symmetric. Taking $A$ large
enough proves the lemma.
\end{proof}

\paragraph{The recursive estimator.} For a node $C$ at level $\ell$, set
\[
    \tau_C
    :=
    \frac{\eps^2}{4AD},
\]
where $A$ is the constant from \Cref{lem:node-kl}, and 
\[
    \delta_C:=\frac{\delta}{2m}.
\]
The learner runs the one-node procedure at every internal node of the tree.
If a node has zero mass, the output split is irrelevant for the analysis; the
algorithm need not know which nodes have zero mass. Reconstruct
$\widehat p$ from the estimated splits by multiplying estimated split
probabilities along root-to-leaf paths.

By \Cref{lem:node-kl} and a union bound over at most $m-1$ internal nodes, with
probability at least $1-\delta$, every positive-mass internal node $C$ satisfies
\[
    D_{\mathrm{KL}}\!\left(
        \mathrm{Bern}(\alpha_C)
        \,\middle\|\,
        \mathrm{Bern}(\widehat\alpha_C)
    \right)
    \le
    A\tau_C
    =
    \frac{\eps^2}{4D}.
\]
On this event, \Cref{lem:tree-kl} gives
\[
\begin{aligned}
    D_{\mathrm{KL}}(p\|\widehat p)
    &\le
    \sum_{\ell=0}^{D-1}
    \sum_{\substack{C\text{ at level }\ell\\ p(C)>0}}
    p(C)\,\frac{\eps^2}{4D}  \\
    &\le
    \sum_{\ell=0}^{D-1}
    \frac{\eps^2}{4D}
    =
    \frac{\eps^2}{4},
\end{aligned}
\]
because the internal nodes at a fixed level are pairwise disjoint, so their
total mass is at most one.
Pinsker's inequality yields
\[
    d_{TV}(p,\widehat p)
    \le
    \sqrt{\frac12D_{\mathrm{KL}}(p\|\widehat p)}
    \le
    \eps.
\]

It remains to count queries. The internal nodes at a fixed level are pairwise disjoint, so the sum of
their cardinalities is at most $m$. Since
\[
    \tau_C=\frac{\eps^2}{4AD}
\]
for all nodes, the total provider-query cost on one level is
\[
    O\!\left(
        \frac{m\log(m/\delta)}{\tau_C}
    \right)
    =
    O\!\left(
        \frac{mD\log(m/\delta)}{\eps^2}
    \right).
\]
Summing over the $D=O(\log m)$ levels gives
\[
    O\!\left(
        \frac{mD^2\log(m/\delta)}{\eps^2}
    \right)
    =
    O\!\left(
        \frac{m(\log m)^2\log(m/\delta)}{\eps^2}
    \right).
\]
This proves \Cref{thm:tree-local-main}.

\subsection{A Continuum of Optimal Sample Complexities}\label{sec:continuum}
So far we have shown that whenever PAC learning is possible the sample
complexity cannot be worse than $\tilde{O}(n^2/\epsilon^2)$. Moreover,
there are families $\sellers$ for which $\Omega(n^2/\epsilon^2)$ samples are necessary, and for more well-structured $\sellers$ the learner can achieve $\tilde{\Theta}(n/\epsilon^2)$ sample complexity.
Thus, it is natural to ask if intermediate sample complexities of the form $\tilde{\Theta}(n^{1+\gamma}/\epsilon^2)$
can arise as optimal bounds for certain $\sellers.$ Our next result shows that this is indeed the case. The proof relies on a generalization 
of our quadratic lower bound construction and is deferred to \Cref{app:intermediate}.

\begin{theorem}[Continuum of optimal polynomial rates]
\label{thm:intermediate-rates}
There exist universal constants $c_1, c_2, \varepsilon_0 > 0$ such that for every exponent $\gamma \in (0,1)$, there exists some $N_\gamma$ where for all $n \geq N_\gamma$, one can construct a query family $\sellers$ on $[n]$ satisfying
\[
    c_1 \frac{n^{1+\gamma}}{\varepsilon^2}
    \leq q^\star_{\sellers,\mathcal P_{[n]}}(\varepsilon, 1/3) \leq
    c_2 \frac{n^{1+\gamma} \log(1/\varepsilon) \log n}{\varepsilon^2}
\]
for all $0 < \varepsilon \le \varepsilon_0$. In particular, this yields an optimal sample complexity of
\[
    q^\star_{\sellers,\mathcal P_{[n]}}(\varepsilon, 1/3) = \widetilde{\Theta}\left( \frac{n^{1+\gamma}}{\varepsilon^2} \right).
\]
\end{theorem}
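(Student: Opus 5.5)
The plan is to generalize the hub construction of \Cref{thm:quadratic-lower-main}. The hub is kept, but each query now carries a block of $s:=\lceil n^{1-\gamma}\rceil$ light elements instead of a constant number. Concretely, write $[n]=\{0\}\cup L$ and partition the light set $L$ into $r\approx n/s=n^{\gamma}$ blocks $B_1,\dots,B_r$, each of size about $s$. The query family is
\[
\sellers=\{S_{jk}:1\le j<k\le r\},\qquad S_{jk}:=\{0\}\cup B_j\cup B_k .
\]
Every pair of elements co-occurs in some $S_{jk}$, including pairs inside a single block, so $\co(\sellers,[n])$ is complete. The heuristic is that one query costs one unit, yet it yields a light sample only with probability about $s/n=n^{-\gamma}$. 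So, compared with ordinary sampling, the effective price per informative sample is inflated by $n^{\gamma}$.

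\textbf{Lower bound.} I would repeat the proof of \Cref{thm:quadratic-lower-main} verbatim. Pair up the light elements as $(a_i,b_i)$, $i\le m\approx n/2$, with each pair inside a single block, and set
\[
p^\sigma(0)=\tfrac12,\qquad p^\sigma(a_i)=\tfrac{1+\alpha\sigma_i}{4m},\qquad p^\sigma(b_i)=\tfrac{1-\alpha\sigma_i}{4m}.
\]
Take $\sigma$ from the code of \Cref{lem:code} and $\alpha=\Theta(\eps)$; the code distributions are then $4\eps$-separated, exactly as before. Fix a query $S_{jk}$ and take as baseline the law in which each pair is balanced. Since pairs do not cross blocks, the hub probability is $\sigma$-independent, and it is $\tfrac{m}{m+\Theta(s)}$. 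Every pair inside $B_j\cup B_k$ contributes KL $O(\alpha^2/m)$, and there are $O(s)$ such pairs, so the total KL is $O(\alpha^2 s/n)$. \Cref{lem:mi-ref} and the chain rule then give $I(\Sigma;\Pi)\le O(q\alpha^2 s/n)$. Fano (\Cref{lem:fano}) forces this to be $\Omega(m)$, hence
\[
q=\Omega\!\left(\frac{n^2}{s\,\eps^2}\right)=\Omega\!\left(\frac{n^{1+\gamma}}{\eps^2}\right).
\]

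\textbf{Upper bound.} This is the main obstacle, because the learner must handle arbitrary $p$, including a hub of mass near $1$ or near $0$. I would use a two-level scheme.

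\emph{Step 1 (light samples).} Querying $S_{jk}$ and discarding hub outputs yields exact i.i.d. samples from $p(\cdot\mid B_j\cup B_k)$.

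\emph{Step 2 (light part).} Learn the conditional law on $L$ from these two-block samples. Take a balanced binary tree over the blocks and, inside each tree cell, run the local Markov chain of \Cref{lem:local-chain-pu} at element level. A move from $x$ proposes $y$, queries $B_{j(x)}\cup B_{j(y)}$, and re-queries after a hub output. Stationarity, the minorization and \Cref{lem:node-kl} should go through unchanged, since the witnesses stay inside the cell whenever $j(x)\neq j(y)$. Once the recursion reaches a single block $B_j$, I would instead learn $p(\cdot\mid B_j)$ empirically from the $B_j$-outputs of any $S_{jk}$; these are exact conditional samples, so \Cref{lem:empirical} applies. Combining both levels through \Cref{lem:tree-kl} should give KL error $\eps'^2$ on $L$ using $\tilde O(n/\eps'^2)$ light samples.

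\emph{Step 3 (hub mass).} Estimate the hub mass $h=p_0$ from the hub-versus-light frequencies in the same queries. This only requires estimating ratios $h/p(B_j\cup B_k)$, which can be folded in as one more split at the root of the tree.

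\emph{Cost accounting (the delicate step).} Write $\lambda:=1-h$ for the light mass. The TV error contributed by the light part is $\lambda\cdot\eps'$, so it suffices to take $\eps'=\eps/\lambda$. A light sample costs about $h n/(\lambda s)$ queries. The total is therefore
\[
\tilde O\!\left(\frac{n\lambda^2}{\eps^2}\cdot\frac{n}{\lambda s}\right)=\tilde O\!\left(\frac{n^{1+\gamma}\lambda}{\eps^2}\right)\le\tilde O\!\left(\frac{n^{1+\gamma}}{\eps^2}\right).
\]
Because $h$ is unknown, I would either use a doubling schedule or simply run a fixed query budget and read off the number of light samples actually obtained. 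The accompanying concentration and union-bound details are what produce the $\log n\log(1/\eps)$ factors in the statement, and getting them to come out in that form is where I expect most of the technical work.
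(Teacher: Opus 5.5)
Your lower bound is correct, and it rests on a different mechanism from the paper's. Since pairs stay inside blocks, the hub probability in $S_{jk}$ does not depend on $\sigma$. The per-query KL is therefore $O(s\alpha^2/m)$, and Fano gives $\Omega(n^2/(s\eps^2))=\Omega(n^{1+\gamma}/\eps^2)$; take blocks of even size, or leave one light element with fixed mass.

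The gap is in the upper bound. Because the hub lies in every query, one step of your rejection-sampled chain from $x$ to $y$ costs a geometric number of queries with mean $(h+\beta)/\beta$, where $\beta:=p(B_{j(x)}\cup B_{j(y)})$. Your estimate ``a light sample costs about $hn/(\lambda s)$'' holds only when the light mass is spread evenly over the blocks. For general $p\in\mathcal P_{[n]}$ it fails in two ways:
\begin{itemize}
\item If $\beta=0<h$, the re-query loop never terminates. The oracle returns $0$ forever, and UNIF0 never applies because $p(S_{jk})\ge h>0$.
\item If $\beta$ is merely tiny, the loop is unboundedly slow. \Cref{lem:node-kl} prescribes $M_C\asymp|C|\log(1/\delta_C)/\tau_C$ \emph{steps} with the same $\tau_C$ at every node. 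So a cell whose blocks each carry mass $\eta\ll h$ costs on the order of $|C|h\log(1/\delta_C)/(\eta\tau_C)$ queries, with no bound uniform over the class.
\end{itemize}
Rescuing the chain would need three things you do not supply: per-step timeouts, a mass-weighted allocation of node budgets, and a Bernstein bound for a chain stopped at a random, query-determined time. (Timeouts preserve reversibility, since the timeout probability depends only on $S_{jk}$, but they weaken the minorization.) A doubling schedule over $h$ does not help, because the blow-up is local to low-mass cells.

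The repair is to drop the block tree and use two stages.
\begin{itemize}
\item \emph{Coarse masses.} Project each output of $S_{jk}$ to its coarse atom. This gives conditional samples for the coarse family $\{\{0,\overline B_j,\overline B_k\}\}$ on $r+1$ atoms, whose co-occurrence graph is complete. \Cref{thm:generic-upper-main} then learns $w=(h,w_1,\dots,w_r)$, with $w_j:=p(B_j)$, to TV error $\eps/2$ using $O(r^2\log(1/\eps)/\eps^2)=O(n^{2\gamma}\log(1/\eps)/\eps^2)$ queries. Zero-mass witnesses are harmless there, because that proof uses only $p_x=p_y=0$ when $p(W_{xy})=0$.
\item \emph{Within-block laws.} For each $j$, fix $k(j)\ne j$ and query $S_{j,k(j)}$ exactly $Q$ times. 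The $B_j$-outputs are i.i.d.\ from $p(\cdot\mid B_j)$ and arrive at rate $w_j/p(S_{j,k(j)})\ge w_j$. Hence, for an absolute constant $c$,
\[
\mathbb E\Bigl[\sum_j w_j\,\TV(\widehat p_{B_j},p_{B_j})\Bigr]\le c\sum_j w_j\sqrt{\frac{s}{Qw_j}}\le c\sqrt{\frac{sr}{Q}}.
\]
Taking $Q\asymp n/\eps^2$ and applying Markov's inequality costs $O(nr/\eps^2)=O(n^{1+\gamma}/\eps^2)$ in total.
\end{itemize}
The decomposition $\TV(p,\widehat p)\le\TV(w,\widehat w)+\sum_j w_j\TV(p_{B_j},\widehat p_{B_j})$ from the paper's block-lift proof then finishes.

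This is the paper's two-stage upper bound with the roles reversed. The paper makes every block directly queryable and takes $H\asymp n^{(1+\gamma)/2}$ coarse atoms. Its hardness sits in the $H^2$ coarse problem, transferred to the lifted family by simulation, while within-block learning costs only $n/\eps^2$. In your family the coarse problem is cheap ($n^{2\gamma}$), and the $n^{1+\gamma}$ comes from harvesting fine samples through hub dilution, which is exactly what your direct Fano bound captures.
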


\begin{remark}[The structural parameter behind the intermediate rates]
\label{rem:intermediate-parameter}
The construction in \Cref{thm:intermediate-rates} has a natural two-scale
structure. Contract each internally queryable block $A_j$ and $B_j$ to a
single coarse atom, while keeping the hub $0$ as a singleton, and let $H$
denote the number of atoms in the resulting coarse domain. Thus, $H$ is the
size of the quotient obtained after contracting the fine blocks.

Learning the lifted distribution consists of two statistical tasks. First, the
learner must recover the conditional distributions within the blocks, whose
total dimension is $n$ and whose cost is
$\widetilde\Theta(n/\eps^2)$. Second, it must recover the masses of the $H$
coarse atoms. The induced query family on these atoms is the hub-diluted hard
family, whose cost is $\widetilde\Theta(H^2/\eps^2)$. Consequently, the
block-lifted family has optimal rate
\[
    \widetilde\Theta\left(\frac{n+H^2}{\eps^2}\right).
\]

For the family in \Cref{thm:intermediate-rates}, we choose $H\asymp n^{(1+\gamma)/2},$ which proves the result.
\end{remark}

\subsection{Distribution-Dependent Rates for Pointwise Consistency}\label{sec:pointwise-rates}
We conclude this section by providing a bound on the pointwise-consistency rates of our task. Our result here shows that whenever learning is possible, the error rate decays as $C/\sqrt{t},$ for some distribution-dependent constant $C$, where $t$ is the number of samples the learner is using.

\begin{theorem}[Distribution-dependent pointwise rate]
\label{thm:pointwise-rate}
Let $U\subseteq[n]$ be nonempty and suppose that
$\co(\sellers,U)$ is connected. There exists a single sequence of learners
$(A_t)_{t\ge 1}$ such that, for every target
$p\in\mathcal P_U^+$, there is a finite constant
$C_{p,\sellers,U}>0$ satisfying the following: for every
$\eps\in(0,1)$ and $\delta\in(0,1/2)$,
\[
    t
    \ge
    C_{p,\sellers,U}
    \frac{\log(1/\delta)}{\eps^2}
\]
implies
\[
    \Prob_p\!\left[\TV(A_t,p)\le \eps\right]
    \ge 1-\delta.
\]
\end{theorem}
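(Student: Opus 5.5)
We reuse the spanning-tree learner from the sufficiency part of \Cref{thm:market_pointwise}, with the budget $t$ split evenly over the $E_{\mathcal T}=|U|-1$ tree edges, so each witness set $S_e$ is queried $k_t=\lfloor t/E_{\mathcal T}\rfloor$ times. The goal is to make that almost-sure argument quantitative. The point is that for a fixed $p\in\mathcal P_U^+$ all relevant quantities are bounded away from zero by constants depending only on $p,\sellers,U$. Let
\[
\theta_{\min}:=\min_{e}\min_{y\in e}\theta_e(y)>0,
\]
which is positive because $p$ has exact support $U$. All constants below (in particular $\theta_{\min}$, the tree depth $L\le |U|-1$, and $p_{\min}:=\min_{x\in U}p_x$) are absorbed into $C_{p,\sellers,U}$. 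The learners themselves do not depend on $p$, as the statement requires.

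\textbf{Step 1: relative accuracy of each edge.} Fix a target relative accuracy $\eta\le \theta_{\min}/4$. By the Bernoulli concentration bound from the preliminaries and a union bound over the at most $2|U|$ pairs $(e,y)$, with probability at least $1-\delta$ we have, for every edge $e$ and endpoint $y$,
\[
|\widehat\theta_e^{(t)}(y)-\theta_e(y)|\le \eta\,\theta_{\min},
\]
provided $k_t\gtrsim \log(4|U|/\delta)/(\eta^2\theta_{\min}^2)$. On this event, $\widehat\theta_e^{(t)}(y)\ge \tfrac34\theta_{\min}$. Once also $1/k_t\le \theta_{\min}/2$, the truncation is inactive, and every truncated frequency is within a multiplicative factor $1\pm\eta$ of its true value. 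Hence each edge ratio $\widetilde\theta_e(v)/\widetilde\theta_e(u)$ lies within the factor $[(1-\eta)/(1+\eta),(1+\eta)/(1-\eta)]\subseteq[e^{-3\eta},e^{3\eta}]$ of $p_v/p_u$.

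\textbf{Step 2: propagation along paths.} Multiplying along a root-to-$x$ path of length at most $L$ gives
\[
\widehat\rho_x\in[e^{-3L\eta},e^{3L\eta}]\,\rho_x \quad\text{for all } x\in U.
\]
The same bound therefore holds for the normalizer $\widehat Z_t$, and so $\widehat p_x/p_x\in[e^{-6L\eta},e^{6L\eta}]$. It follows that
\[
\TV(\widehat p,p)\le \tfrac12\sum_x p_x\bigl(e^{6L\eta}-1\bigr)\le 6L\eta
\]
for $6L\eta\le1$. Choosing $\eta=\eps/(6L)$, the required sample count is
\[
t\ge E_{\mathcal T}\,k_t \asymp |U|\,L^2\log(4|U|/\delta)/(\eps^2\theta_{\min}^2).
\]

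\textbf{Step 3: absorbing the logarithmic factor.} Since $\delta<1/2$, we have $\log(4|U|/\delta)\le c\log(|U|)\log(1/\delta)$, so the bound has the claimed form $C_{p,\sellers,U}\log(1/\delta)/\eps^2$. The side conditions $\eta\le\theta_{\min}/4$ and $1/k_t\le\theta_{\min}/2$ are implied by $t\ge C_{p,\sellers,U}\log(1/\delta)/\eps^2$ once the constant is made large, because $\eps<1$ and $\log(1/\delta)\ge\log 2$.

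\textbf{Main obstacle.} The only delicate point is the multiplicative, rather than additive, control of the ratios. This needs the lower bound $\theta_{\min}$ and the check that the $1/k_t$ truncation is inactive on the good event. Both are handled by the distribution dependence of the constant.
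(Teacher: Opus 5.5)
Your proposal is correct and follows essentially the paper's route. Both use the same spanning-tree learner, Hoeffding plus a union bound over the $2(|U|-1)$ endpoint frequencies, and a local Lipschitz bound for the reconstruction map; you make the paper's abstract constants $r_p,L_p$ (which it gets from continuous differentiability of $F$) explicit through multiplicative error propagation, effectively $L_p\approx 6L/\theta_{\min}$.

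One small correction. The side condition your argument actually uses is only $\eta\le 1/4$, which gives $\widehat\theta_e^{(t)}(y)\ge\tfrac34\theta_{\min}$ and $(1+\eta)/(1-\eta)\le e^{3\eta}$, and this holds automatically for $\eta=\eps/(6L)<1/6$. The stronger requirement $\eta\le\theta_{\min}/4$ is a choice of $\eta$, not something implied by $t$, so it can simply be dropped. Alternatively, enforce it by taking $\eta=\min\{\eps/(6L),\theta_{\min}/4\}$, mirroring the paper's $u=\min\{r_p,\eps/L_p\}$.
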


\begin{proof}
We use the spanning-tree learner from the proof of
\Cref{thm:market_pointwise}. The case $|U|=1$ is immediate, so let
$m:=|U|\ge2$. Fix a rooted spanning tree $\mathcal T$ of
$\co(\sellers,U)$ and a witness set $S_e\in\sellers$ for every directed tree
edge $e=(u\to v)$.

For an endpoint $y\in\{u,v\}$ of an edge $e$, define
\[
    \theta_e(y):=p(y\mid S_e).
\]
Because $p\in\mathcal P_U^+$, every one of these quantities is strictly
positive. Collect the $2(m-1)$ endpoint probabilities into a vector
$\theta(p)$.

Let $F$ denote the deterministic reconstruction map used in the spanning-tree
learner: given positive endpoint probabilities, it forms the corresponding
edge ratios, multiplies them along root-to-vertex paths, and normalizes the
resulting vector. By construction,
\[
    F(\theta(p))=p.
\]
The map $F$ is continuously differentiable on the positive orthant. Therefore,
there exist constants $r_p>0$ and $L_p<\infty$ such that
\[
    \TV(F(z),p)
    \le
    L_p\|z-\theta(p)\|_\infty
\]
whenever
\[
    \|z-\theta(p)\|_\infty\le r_p.
\]

Given a total budget $t$, the learner uses
\[
    k_t:=\left\lfloor\frac{t}{m-1}\right\rfloor
\]
samples for every tree edge. Let $\widehat\theta^{(t)}$ be the vector of
empirical endpoint frequencies. Each coordinate is the empirical mean of
$k_t$ independent Bernoulli variables. Hence, by Hoeffding's inequality and a
union bound, for every $u>0$,
\[
    \Prob_p\!\left[
        \|\widehat\theta^{(t)}-\theta(p)\|_\infty>u
    \right]
    \le
    4(m-1)e^{-2k_tu^2}.
\]

The learner uses the same lower truncation as in the proof of
\Cref{thm:market_pointwise}. Since all coordinates of $\theta(p)$ are
positive, for all sufficiently large $k_t$---where the required threshold may
depend on $p$---this truncation is inactive whenever
$\widehat\theta^{(t)}$ lies in a sufficiently small neighborhood of
$\theta(p)$.

Let
\[
    u:=\min\left\{r_p,\frac{\eps}{L_p}\right\}.
\]
On the event
\[
    \|\widehat\theta^{(t)}-\theta(p)\|_\infty\le u,
\]
the reconstructed distribution therefore satisfies
\[
    \TV(A_t,p)
    \le L_pu
    \le\eps.
\]
It is enough to take
\[
    k_t
    \ge
    \frac{1}{2u^2}
    \log\frac{4(m-1)}{\delta}.
\]
Since $p$, $U$, and the chosen spanning tree are fixed,
\[
    \frac{1}{u^2}
    \le
    \frac{C'_p}{\eps^2}
\]
for some finite constant $C'_p$, for every $\eps\in(0,1)$. Absorbing the
number of tree edges, the fixed truncation threshold, and
$\log(4(m-1))$ into a constant $C_{p,\sellers,U}$ proves the result.
\end{proof}

\section{Conclusion}\label{sec:conclusion}
In this work, we studied the problem of learning discrete distributions from arbitrary restricted conditional query families $\sellers$. Our findings provide a complete qualitative characterization of learnability, as well as the range of the optimal sample complexity bounds for different $\sellers.$ 
This framework opens several promising avenues for future research. First, while our sample complexity bounds are tight up to logarithmic factors, closing the remaining polylogarithmic gaps is an immediate next step. Second, our current analysis focuses strictly on distributions with finite support; extending these structural insights to more general spaces, such as countably infinite or continuous domains, presents a natural and compelling theoretical challenge. 
Finally, it would be interesting to explore the graph-theoretic conditions that $\sellers$ must satisfy to enable other fundamental statistical tasks, such as distribution testing or property estimation from arbitrary $\sellers$.

\newpage

\bibliographystyle{alpha}
\bibliography{refs}

@article{NOS17,
author = {Negahban, Sahand and Oh, Sewoong and Shah, Devavrat},
title = {Rank Centrality: Ranking from Pairwise Comparisons},
journal = {Operations Research},
volume = {65},
number = {1},
pages = {266-287},
year = {2017},
doi = {10.1287/opre.2016.1534},

URL = { 
    
        https://doi.org/10.1287/opre.2016.1534
    
    

},
eprint = { 
    
        https://doi.org/10.1287/opre.2016.1534
    
    

}
}

@article{CanonneRonServedio2015,
author = {Canonne, Cl\'{e}ment L. and Ron, Dana and Servedio, Rocco A.},
title = {Testing Probability Distributions using Conditional Samples},
journal = {SIAM Journal on Computing},
volume = {44},
number = {3},
pages = {540-616},
year = {2015},
doi = {10.1137/130945508},

URL = { 
    
        https://doi.org/10.1137/130945508
    
    

},
eprint = { 
    
        https://doi.org/10.1137/130945508
    
    

}
}

@article{ChakrabortyFischerGoldhirshMatsliah2016,
author = {Chakraborty, Sourav and Fischer, Eldar and Goldhirsh, Yonatan and Matsliah, Arie},
title = {On the Power of Conditional Samples in Distribution Testing},
journal = {SIAM Journal on Computing},
volume = {45},
number = {4},
pages = {1261-1296},
year = {2016},
doi = {10.1137/140964199},

URL = { 
    
        https://doi.org/10.1137/140964199
    
    

},
eprint = { 
    
        https://doi.org/10.1137/140964199
    
    

}
}

@inproceedings{KamathTzamos2019,
author = {Kamath, Gautam and Tzamos, Christos},
title = {Anaconda: a non-adaptive conditional sampling algorithm for distribution testing},
year = {2019},
publisher = {Society for Industrial and Applied Mathematics},
address = {USA},
booktitle = {Proceedings of the Thirtieth Annual ACM-SIAM Symposium on Discrete Algorithms},
pages = {679–693},
numpages = {15},
location = {San Diego, California},
series = {SODA '19}
}

@book{Canonne2020Survey,
 author = {Canonne, Cl{\'{e}}ment L.},
 title = {A Survey on Distribution Testing: Your Data is Big. But is it Blue?},
 year = {2020},
 pages = {1--100},
 doi = {10.4086/toc.gs.2020.009},
 publisher = {Theory of Computing Library},
 number = {9},
 series = {Graduate Surveys},
 URL = {http://www.theoryofcomputing.org/library.html},
}

@inproceedings{AdarFischerLevi2025,
  author       = {Tomer Adar and
                  Eldar Fischer and
                  Amit Levi},
  editor       = {Kasper Green Larsen and
                  Barna Saha},
  title        = {Optimal mass estimation in the conditional sampling model},
  booktitle    = {Proceedings of the 2026 Annual {ACM-SIAM} Symposium on Discrete Algorithms,
                  {SODA} 2026, Vancouver, BC, Canada, January 11-14, 2026},
  pages        = {4105--4174},
  publisher    = {{SIAM}},
  year         = {2026},
  url          = {https://doi.org/10.1137/1.9781611978971.152},
  doi          = {10.1137/1.9781611978971.152},
  bibsource    = {dblp computer science bibliography, https://dblp.org}
}

@book{luce1959,
  title={Individual Choice Behavior: A Theoretical Analysis},
  author={Luce, R.D.},
  lccn={59009346},
  url={https://books.google.com/books?id=a80DAQAAIAAJ},
  year={1959},
  publisher={Wiley}
}

@article{AlosFerrerMihm2025,
title = {A characterization of the Luce choice rule for an arbitrary collection of menus},
journal = {Journal of Economic Theory},
volume = {223},
pages = {105941},
year = {2025},
issn = {0022-0531},
doi = {https://doi.org/10.1016/j.jet.2024.105941},
url = {https://www.sciencedirect.com/science/article/pii/S0022053124001479},
author = {Carlos Alós-Ferrer and Maximilian Mihm}
}

@inproceedings{MaystreGrossglauser2015,
author = {Maystre, Lucas and Grossglauser, Matthias},
title = {Fast and accurate inference of Plackett-Luce models},
year = {2015},
publisher = {MIT Press},
address = {Cambridge, MA, USA},
booktitle = {Proceedings of the 29th International Conference on Neural Information Processing Systems - Volume 1},
pages = {172–180},
numpages = {9},
location = {Montreal, Canada},
series = {NIPS'15}
}

@article{roberts2004general,
   title={General state space Markov chains and MCMC algorithms},
   volume={1},
   ISSN={1549-5787},
   url={http://dx.doi.org/10.1214/154957804100000024},
   DOI={10.1214/154957804100000024},
   number={none},
   journal={Probability Surveys},
   publisher={Institute of Mathematical Statistics},
   author={Roberts, Gareth O. and Rosenthal, Jeffrey S.},
   year={2004},
   month=Jan }

@book{shalev2014understanding,
  title={Understanding machine learning: From theory to algorithms},
  author={Shalev-Shwartz, Shai and Ben-David, Shai},
  year={2014},
  publisher={Cambridge university press}
}

@article{canonne2020short,
  title={A short note on learning discrete distributions},
  author={Canonne, Cl{\'e}ment L},
  journal={arXiv preprint arXiv:2002.11457},
  year={2020}
}

@article{rosenthal1995minorization,
  title={Minorization conditions and convergence rates for Markov chain Monte Carlo},
  author={Rosenthal, Jeffrey S},
  journal={Journal of the American Statistical Association},
  volume={90},
  number={430},
  pages={558--566},
  year={1995},
  publisher={Taylor \& Francis}
}

@InProceedings{ShahBalakrishnanBradleyParekhRamchandranWainwright2016,
  title = 	 {{Estimation from Pairwise Comparisons: Sharp Minimax Bounds with Topology Dependence}},
  author = 	 {Shah, Nihar and Balakrishnan, Sivaraman and Bradley, Joseph and Parekh, Abhay and Ramchandran, Kannan and Wainwright, Martin},
  booktitle = 	 {Proceedings of the Eighteenth International Conference on Artificial Intelligence and Statistics},
  pages = 	 {856--865},
  year = 	 {2015},
  editor = 	 {Lebanon, Guy and Vishwanathan, S. V. N.},
  volume = 	 {38},
  series = 	 {Proceedings of Machine Learning Research},
  address = 	 {San Diego, California, USA},
  month = 	 {09--12 May},
  publisher =    {PMLR},
  url = 	 {https://proceedings.mlr.press/v38/shah15.html}
}

@InProceedings{AgarwalPatilAgarwal2018,
  title = 	 {Accelerated Spectral Ranking},
  author =       {Agarwal, Arpit and Patil, Prathamesh and Agarwal, Shivani},
  booktitle = 	 {Proceedings of the 35th International Conference on Machine Learning},
  pages = 	 {70--79},
  year = 	 {2018},
  editor = 	 {Dy, Jennifer and Krause, Andreas},
  volume = 	 {80},
  series = 	 {Proceedings of Machine Learning Research},
  month = 	 {10--15 Jul},
  publisher =    {PMLR},
  url = 	 {https://proceedings.mlr.press/v80/agarwal18b.html}
}

@inproceedings{FotakisKalavasisTzamos2022,
author = {Fotakis, Dimitris and Kalavasis, Alkis and Tzamos, Christos},
title = {Perfect sampling from pairwise comparisons},
year = {2022},
isbn = {9781713871088},
publisher = {Curran Associates Inc.},
address = {Red Hook, NY, USA},
booktitle = {Proceedings of the 36th International Conference on Neural Information Processing Systems},
articleno = {1857},
numpages = {16},
location = {New Orleans, LA, USA},
series = {NIPS '22}
}

@inproceedings{SahaGopalan2019,
  author       = {Aadirupa Saha and
                  Aditya Gopalan},
  editor       = {Kamalika Chaudhuri and
                  Masashi Sugiyama},
  title        = {Active Ranking with Subset-wise Preferences},
  booktitle    = {The 22nd International Conference on Artificial Intelligence and Statistics,
                  {AISTATS} 2019, 16-18 April 2019, Naha, Okinawa, Japan},
  series       = {Proceedings of Machine Learning Research},
  pages        = {3312--3321},
  publisher    = {{PMLR}},
  year         = {2019},
  url          = {http://proceedings.mlr.press/v89/saha19a.html},
  bibsource    = {dblp computer science bibliography, https://dblp.org}
}

@article{JiangSunFan2018,
  title={Bernstein's inequalities for general Markov chains},
  author={Jiang, Bai and Sun, Qiang and Fan, Jianqing},
  journal={arXiv preprint arXiv:1805.10721},
  year={2018}
}

@inproceedings{VillalobosEtAl2024,
  author       = {Pablo Villalobos and
                  Anson Ho and
                  Jaime Sevilla and
                  Tamay Besiroglu and
                  Lennart Heim and
                  Marius Hobbhahn},
  editor       = {Ruslan Salakhutdinov and
                  Zico Kolter and
                  Katherine A. Heller and
                  Adrian Weller and
                  Nuria Oliver and
                  Jonathan Scarlett and
                  Felix Berkenkamp},
  title        = {Position: Will we run out of data? Limits of {LLM} scaling based on
                  human-generated data},
  booktitle    = {Forty-first International Conference on Machine Learning, {ICML} 2024,
                  Vienna, Austria, July 21-27, 2024},
  series       = {Proceedings of Machine Learning Research},
  pages        = {49523--49544},
  publisher    = {{PMLR} / OpenReview.net},
  year         = {2024},
  url          = {https://proceedings.mlr.press/v235/villalobos24a.html},
  bibsource    = {dblp computer science bibliography, https://dblp.org}
}

@book{Bretto2013Hypergraph,
author = {Bretto, Alain},
title = {Hypergraph Theory: An Introduction},
year = {2013},
isbn = {3319000799},
publisher = {Springer Publishing Company, Incorporated}
}

@article{BradleyTerry1952,
 ISSN = {00063444, 14643510},
 URL = {http://www.jstor.org/stable/2334029},
 author = {Ralph Allan Bradley and Milton E. Terry},
 journal = {Biometrika},
 number = {3/4},
 pages = {324--345},
 publisher = {[Oxford University Press, Biometrika Trust]},
 title = {Rank Analysis of Incomplete Block Designs: I. The Method of Paired Comparisons},
 urldate = {2026-05-07},
 volume = {39},
 year = {1952}
}

@inproceedings{
xie2023doremi,
title={DoReMi: Optimizing Data Mixtures Speeds Up Language Model Pretraining},
author={Sang Michael Xie and Hieu Pham and Xuanyi Dong and Nan Du and Hanxiao Liu and Yifeng Lu and Percy Liang and Quoc V Le and Tengyu Ma and Adams Wei Yu},
booktitle={Thirty-seventh Conference on Neural Information Processing Systems},
year={2023},
url={https://openreview.net/forum?id=lXuByUeHhd}
}

@inproceedings{
ye2025datamixing,
title={Data Mixing Laws: Optimizing Data Mixtures by Predicting Language Modeling Performance},
author={Jiasheng Ye and Peiju Liu and Tianxiang Sun and Jun Zhan and Yunhua Zhou and Xipeng Qiu},
booktitle={The Thirteenth International Conference on Learning Representations},
year={2025},
url={https://openreview.net/forum?id=jjCB27TMK3}
}

@inproceedings{
lu2024dataacquisition,
title={Data Acquisition via Experimental Design for Data Markets},
author={Charles Lu and Baihe Huang and Sai Praneeth Karimireddy and Praneeth Vepakomma and Michael Jordan and Ramesh Raskar},
booktitle={The Thirty-eighth Annual Conference on Neural Information Processing Systems},
year={2024},
url={https://openreview.net/forum?id=VXJVNdmXO4}
}

@article{huang2022skill,
author = {Huang, Shiyang and Xiong, Yan and Yang, Liyan},
title = {Skill Acquisition and Data Sales},
journal = {Management Science},
volume = {68},
number = {8},
pages = {6116-6144},
year = {2022},
doi = {10.1287/mnsc.2021.4117},

URL = { 
    
        https://doi.org/10.1287/mnsc.2021.4117
    
    

},
eprint = { 
    
        https://doi.org/10.1287/mnsc.2021.4117
    
    

}
}

@article{Var85,
 ISSN = {00905364, 21688966},
 URL = {http://www.jstor.org/stable/2241152},
 author = {Y. Vardi},
 journal = {The Annals of Statistics},
 number = {1},
 pages = {178--203},
 publisher = {Institute of Mathematical Statistics},
 title = {Empirical Distributions in Selection Bias Models},
 urldate = {2026-05-07},
 volume = {13},
 year = {1985}
}

@article{GVW88,
 ISSN = {00905364, 21688966},
 URL = {http://www.jstor.org/stable/2241619},
 author = {Richard D. Gill and Yehuda Vardi and Jon A. Wellner},
 journal = {The Annals of Statistics},
 number = {3},
 pages = {1069--1112},
 publisher = {Institute of Mathematical Statistics},
 title = {Large Sample Theory of Empirical Distributions in Biased Sampling Models},
 urldate = {2026-05-07},
 volume = {16},
 year = {1988}
}

@inproceedings{BHPQ17,
 author = {Blum, Avrim and Haghtalab, Nika and Procaccia, Ariel and Qiao, Mingda},
 booktitle = {Advances in Neural Information Processing Systems},
 editor = {I. Guyon and U. Von Luxburg and S. Bengio and H. Wallach and R. Fergus and S. Vishwanathan and R. Garnett},
 pages = {},
 publisher = {Curran Associates, Inc.},
 title = {Collaborative PAC Learning},
 url = {https://proceedings.neurips.cc/paper_files/paper/2017/file/186a157b2992e7daed3677ce8e9fe40f-Paper.pdf},
 volume = {30},
 year = {2017}
}

@inproceedings{HJZ22,
 author = {Haghtalab, Nika and Jordan, Michael and Zhao, Eric},
 booktitle = {Advances in Neural Information Processing Systems},
 doi = {10.52202/068431-0030},
 editor = {S. Koyejo and S. Mohamed and A. Agarwal and D. Belgrave and K. Cho and A. Oh},
 pages = {406--419},
 publisher = {Curran Associates, Inc.},
 title = {On-Demand Sampling: Learning Optimally from Multiple Distributions},
 url = {https://proceedings.neurips.cc/paper_files/paper/2022/file/02917acec264a52a729b99d9bc857909-Paper-Conference.pdf},
 volume = {35},
 year = {2022}
}

@article{ZZC+24,
author = {Zhang, Zihan and Zhan, Wenhao and Chen, Yuxin and Du, Simon S. and Lee, Jason},
title = {Optimal Multi-Distribution Learning},
year = {2025},
issue_date = {October 2025},
publisher = {Association for Computing Machinery},
address = {New York, NY, USA},
volume = {72},
number = {5},
issn = {0004-5411},
url = {https://doi.org/10.1145/3760256},
doi = {10.1145/3760256},
journal = {J. ACM},
month = oct,
articleno = {36},
numpages = {71}
}

@InProceedings{Pen24,
  title = 	 {The sample complexity of multi-distribution learning},
  author =       {Peng, Binghui},
  booktitle = 	 {Proceedings of Thirty Seventh Conference on Learning Theory},
  pages = 	 {4185--4204},
  year = 	 {2024},
  editor = 	 {Agrawal, Shipra and Roth, Aaron},
  volume = 	 {247},
  series = 	 {Proceedings of Machine Learning Research},
  month = 	 {30 Jun--03 Jul},
  publisher =    {PMLR},
  url = 	 {https://proceedings.mlr.press/v247/peng24b.html}
}

@InProceedings{HSJ26,
  title = 	 {Is {Multi-Distribution Learning} as Easy as {PAC Learning}: Sharp Rates with Bounded Label Noise},
  author =       {Hanashiro, Rafael and Shetty, Abhishek and Jaillet, Patrick},
  booktitle = 	 {Proceedings of Thirty Ninth Conference on Learning Theory},
  pages = 	 {3109--3142},
  year = 	 {2026},
  editor = 	 {Hanneke, Steve and Lattimore, Tor},
  volume = 	 {336},
  series = 	 {Proceedings of Machine Learning Research},
  month = 	 {29 Jun--03 Jul},
  publisher =    {PMLR},
  url = 	 {https://proceedings.mlr.press/v336/hanashiro26a.html}
}

\newpage

\crefalias{section}{appendix}
\appendix

\section{Technical Lemmas}
\label{app:tech-lemmas}
Below we state several technical lemmas that are useful for our main results.
The first folklore result shows the error rate of 
a simple learner that queries $N$ times the set $[n]$.

\begin{lemma}[Empirical distribution]
\label{lem:empirical}
Let $\Omega$ be a finite set of size $m$, let $p$ be an arbitrary distribution on $\Omega$, and let $X_1,\ldots,X_N$ be independent samples from $p$.  Define the empirical distribution
\[
  \widehat p(x):=\frac1N\sum_{t=1}^N \1\{X_t=x\},
  \qquad x\in\Omega .
\]
There is a universal constant $C>0$ such that, for every $\eps,\delta\in(0,1)$, if
$
  N\ge C\frac{m+
\log(1/\delta)}{\eps^2},
$
then
$
  \Prob\bigl[\TV(\widehat p,p)\le \eps\bigr]\ge 1-\delta .
$
\end{lemma}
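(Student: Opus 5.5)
The plan is to bound the expected total variation error and then upgrade the expectation bound to a high-probability statement using McDiarmid's bounded-differences inequality. Write
\[
  \TV(\widehat p,p)=\frac12\sum_{x\in\Omega}|\widehat p(x)-p(x)|=:F(X_1,\ldots,X_N).
\]
Each $N\widehat p(x)$ is $\mathrm{Bin}(N,p(x))$. By Jensen, $\mathbb E|\widehat p(x)-p(x)|\le\sqrt{p(x)(1-p(x))/N}\le\sqrt{p(x)/N}$. Summing over $x$ and applying Cauchy--Schwarz gives $\sum_x\sqrt{p(x)}\le\sqrt{m}$. Hence
\[
  \mathbb E\,\TV(\widehat p,p)\le\frac12\sqrt{\frac mN}.
\]

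Next I check bounded differences. Replacing a single sample $X_t$ by another value changes $\widehat p$ at at most two coordinates, each by $1/N$. So $F$ changes by at most $1/N$. McDiarmid's inequality then yields
\[
  \Prob\bigl[F\ge \mathbb E F+u\bigr]\le\exp(-2Nu^2).
\]
Take $u=\eps/2$, so the failure probability is at most $\exp(-N\eps^2/2)\le\delta$ once $N\ge 2\log(1/\delta)/\eps^2$. Also $\mathbb E F\le\eps/2$ once $N\ge m/\eps^2$. Both conditions hold when $N\ge C(m+\log(1/\delta))/\eps^2$ with, say, $C=2$. On the complementary event, $\TV(\widehat p,p)\le\eps$.

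No step here is a genuine obstacle. The only care needed is the bounded-difference constant: a single change moves mass $1/N$ from one atom to another, and the factor $\frac12$ in TV makes the Lipschitz constant exactly $1/N$. This can be stated directly without any grinding.

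An alternative route, if McDiarmid is to be avoided, uses the variational form $\TV=\max_{A\subseteq\Omega}(\widehat p(A)-p(A))$. Hoeffding bounds each fixed $A$ (using the Bernoulli bound from the preliminaries), and a union bound over the $2^m$ subsets gives $N\gtrsim(m\log 2+\log(1/\delta))/\eps^2$. This yields the same bound up to constants.
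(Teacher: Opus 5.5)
Your proof is correct. Your ``alternative route'' is exactly the paper's proof: it writes $\TV(\widehat p,p)=\sup_{A\subseteq\Omega}|\widehat p(A)-p(A)|$, applies the Bernoulli Hoeffding bound from the preliminaries to each fixed $A$, and union-bounds over the $2^m$ subsets, reaching the explicit threshold $N\ge((m+1)\log 2+\log(1/\delta))/(2\eps^2)$.

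Your main route is genuinely different. You first bound the mean via Jensen and Cauchy--Schwarz, $\mathbb{E}\,\TV(\widehat p,p)\le\frac{1}{2\sqrt N}\sum_x\sqrt{p(x)}\le\frac12\sqrt{m/N}$. You then control fluctuations with McDiarmid using bounded-difference constant $1/N$. Every step checks out, including the choice $C=2$.

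The paper's argument is more self-contained, since it needs only the elementary Bernoulli tail bound. It also shows that the union bound over exponentially many subsets costs nothing in order here, because $\log 2^m$ is exactly of order $m$.

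Your argument needs a martingale-type inequality, but it cleanly separates the two sources of error. With probability at least $1-\delta$, $\TV(\widehat p,p)\le\frac{1}{2\sqrt N}\sum_x\sqrt{p(x)}+\sqrt{\log(1/\delta)/(2N)}$. All the $m$-dependence sits in the mean term. That term is instance-dependent and can be much smaller than $\sqrt{m/N}$ when $p$ concentrates on few atoms, which the worst-case union bound cannot see. For the lemma as stated, both routes give the same $O((m+\log(1/\delta))/\eps^2)$ guarantee.
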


\begin{proof}
For two distributions $\mu,\nu$ on a finite set, by definition $
  \TV(\mu,\nu)=\sup_{A\subseteq\Omega}|\mu(A)-\nu(A)|$
holds.

It is therefore enough to control the empirical mass of every subset of $\Omega$.  Fix $A\subseteq\Omega$.  The random variables
\[
  B_t:=\1\{X_t\in A\},
  \qquad t=1,\ldots,N,
\]
are i.i.d. Bernoulli random variables with mean $p(A)$, and
$
  \widehat p(A)=\frac1N\sum_{t=1}^N B_t .
$
By the Bernoulli tail bound stated above,
\[
  \Prob\bigl[|\widehat p(A)-p(A)|>\eps\bigr]
  \le 2e^{-2N\eps^2} .
\]
There are at most $2^m$ subsets $A\subseteq\Omega$, so a union bound gives
\[
  \Prob\bigl[\TV(\widehat p,p)>\eps\bigr]
  =\Prob\left[\sup_{A\subseteq\Omega}|\widehat p(A)-p(A)|>\eps\right]
  \le 2^{m+1}e^{-2N\eps^2} .
\]
The right-hand side is at most $\delta$ whenever
\[
  N\ge \frac{(m+1)\log 2+
  \log(1/\delta)}{2\eps^2} .
\]
By choosing the appropriate absolute constant $C$ in the statement we get the claimed sufficient condition.
\end{proof}

The next result employs standard ideas
from coding theory to show that there is 
a large set of words that are pairwise far enough in Hamming distance. For completeness, we provide a proof.

\begin{lemma}[Large separated code]
\label{lem:code}
There are universal constants $c_0>0$ and $m_0\in\mathbb N$ such that the following holds for every integer $m\ge m_0$.  There exists a set $\cC\subseteq\bits^m$ of sign vectors with
$
  \log|\cC|\ge c_0 m
$
and such that any two distinct $\sigma,
\tau\in\cC$ have Hamming distance at least $m/4$:
\[
  d_H(\sigma,\tau):=\bigl|\{j\in[m]:\sigma_j\ne \tau_j\}\bigr|
  \ge \frac{m}{4} .
\]
\end{lemma}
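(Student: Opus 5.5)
The plan is the standard probabilistic (Gilbert--Varshamov style) argument. I will not try to build an explicit code. Instead, I will choose codewords uniformly at random and show that, with positive probability, a large subset is pairwise $m/4$-separated.

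\textbf{Step 1: a tail bound for a fixed pair.} Let $\sigma,\tau$ be independent and uniform on $\bits^m$. Then $d_H(\sigma,\tau)$ is a sum of $m$ independent Bernoulli$(1/2)$ indicators. Its mean is $m/2$. By the Bernoulli concentration bound from the preliminaries, applied with $u=1/4$,
\[
  \Prob\bigl[d_H(\sigma,\tau)<m/4\bigr]\le 2\exp(-m/8).
\]

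\textbf{Step 2: union bound over a random family.} Draw $K$ independent uniform vectors $\sigma^{(1)},\ldots,\sigma^{(K)}$. Call a pair $i<j$ \emph{bad} if $d_H(\sigma^{(i)},\sigma^{(j)})<m/4$. By Step 1, the expected number of bad pairs is at most $\binom{K}{2}\cdot 2e^{-m/8}\le K^2e^{-m/8}$.

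Choose $K:=\lfloor e^{m/16}\rfloor$. Then the expected number of bad pairs is at most $1$, which is strictly less than... here I need care, since "$<1$" is what gives a realization with zero bad pairs. To avoid the boundary case I will use the deletion variant instead. Take $K:=\lfloor e^{m/16}/2\rfloor$, so the expected number of bad pairs is at most $K^2e^{-m/8}\le K/2\cdot(1/2)\le K/2$.

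\textbf{Step 3: delete and count.} Fix a realization in which the number of bad pairs is at most its expectation, hence at most $K/2$. Delete one vector from each bad pair. At least $K/2$ vectors survive, and every surviving pair is at Hamming distance at least $m/4$.

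In particular the survivors are pairwise distinct, since distinct indices now have positive Hamming distance. Let $\cC$ be the set of survivors. Then
\[
  \log|\cC|\ge \log(K/2)\ge \frac{m}{16}-\log 4-1.
\]
For all $m\ge m_0$ (for a suitable absolute $m_0$) this is at least $m/32$. So the lemma holds with $c_0=1/32$.

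\textbf{Main obstacle.} There is no real obstacle. The only delicate points are bookkeeping: handling the floor in the choice of $K$, and ensuring distinctness of the codewords. The deletion step handles both cleanly, and the constants are absorbed by choosing $m_0$ large.
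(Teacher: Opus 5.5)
Your argument is correct and is essentially the paper's proof. Both draw uniform random sign vectors, use a Hoeffding bound of order $e^{-m/8}$ on a pair being closer than $m/4$, and count bad pairs by a first-moment argument. The only difference is in the last step. The paper takes $M=\lceil e^{m/100}\rceil$ so the expected number of bad pairs is already below one, giving $c_0=1/100$ with no deletion. Your alteration step with $K\approx e^{m/16}/2$ gives the slightly better constant $c_0=1/32$.
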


\begin{proof}
Let $M=\lceil e^{m/100}\rceil$ and draw
$\Sigma^{(1)},\ldots,\Sigma^{(M)}$ independently and uniformly from
$\{-1,+1\}^m$. For each $a<b$,
\[
d_H(\Sigma^{(a)},\Sigma^{(b)})
\sim \operatorname{Bin}(m,1/2),
\]
so Hoeffding's inequality gives
\[
\Pr\!\left[
d_H(\Sigma^{(a)},\Sigma^{(b)})<\frac m4
\right]
\le e^{-m/8}.
\]
Hence the expected number of bad pairs is at most
\[
\binom M2 e^{-m/8}<1
\]
for all sufficiently large $m$. Therefore there is a realization with no
bad pair. Taking $\mathcal C$ to be that realization proves the distance
property, and
\[
\log|\mathcal C|\ge \frac{m}{100}.
\]
\end{proof}

The following result shows that if an estimator of a random variable
has a constant probability of predicting it correctly, then the mutual information between the estimator and the underlying variable admits a non-trivial lower bound.

\begin{lemma}[Fano inequality]
\label{lem:fano}
Let $V$ be uniformly distributed on a finite set $\mathcal V$ with $|\mathcal V|\ge2$, and let $\widehat V$ be any estimator of $V$, possibly randomized and possibly constructed from arbitrary data.  If
$
  \Prob[\widehat V\ne V]\le \frac13,
$
then
$
  I(V;\widehat V)
  \ge \frac23\log|\mathcal V|-\log 2 .
$
More generally, if $\Prob[\widehat V\ne V]\le p_e$, then
$
  I(V;\widehat V)
  \ge (1-p_e)\log|\mathcal V|-\log 2 .
$
\end{lemma}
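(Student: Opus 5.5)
The argument is the standard proof of Fano's inequality via the data-processing inequality for mutual information. First reduce to the decoded estimator. $I(V;\widehat V)$ is the quantity that appears, so nothing needs to be said about the underlying data. If $\widehat V$ is randomized, its randomness is absorbed into the conditional law of $\widehat V$ given $V$, which is still a valid channel. We may also assume $\widehat V\in\mathcal V$: mapping any value outside $\mathcal V$ to a fixed element of $\mathcal V$ does not increase the error probability, and it cannot increase the mutual information by data processing. I prove the general statement with $p_e$; the case $p_e=1/3$ then gives $I(V;\widehat V)\ge \frac23\log|\mathcal V|-\log2$ directly.

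The main step bounds the conditional entropy. Let $E:=\1\{\widehat V\neq V\}$. Expanding $H(E,V\mid\widehat V)$ by the chain rule in two orders gives
\[
H(V\mid\widehat V)=H(E\mid\widehat V)+H(V\mid E,\widehat V)-H(E\mid V,\widehat V).
\]
The last term is zero because $E$ is a function of $(V,\widehat V)$. We have $H(E\mid\widehat V)\le H(E)\le\log2$. For the middle term, given $E=0$ we know $V=\widehat V$, so the entropy is $0$. Given $E=1$, $V$ ranges over at most $|\mathcal V|-1\le|\mathcal V|$ values, so the entropy is at most $\log|\mathcal V|$. Therefore
\[
H(V\mid E,\widehat V)\le \Prob[E=1]\log|\mathcal V|\le p_e\log|\mathcal V|.
\]

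Now combine. Since $V$ is uniform, $H(V)=\log|\mathcal V|$, and
\[
I(V;\widehat V)=H(V)-H(V\mid\widehat V)\ge \log|\mathcal V|-\log 2-p_e\log|\mathcal V|=(1-p_e)\log|\mathcal V|-\log2.
\]
I do not expect a real obstacle. The only point needing care is that the paper defines $I(V;X)$ as a KL divergence between finite-valued distributions, and the identity $I=H(V)-H(V\mid\widehat V)$ together with the chain rule are the standard consequences of that definition on finite spaces. When the data is a full transcript $\Pi$ used elsewhere, one additionally notes $I(V;\widehat V)\le I(V;\Pi)$ by data processing, since $\widehat V$ is a (randomized) function of $\Pi$.
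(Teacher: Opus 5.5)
Your proof is correct and follows the paper's argument essentially verbatim: the error indicator $E$, the chain-rule bound $H(V\mid\widehat V)\le H(E\mid\widehat V)+H(V\mid E,\widehat V)\le \log 2+p_e\log|\mathcal V|$, and then $I(V;\widehat V)=H(V)-H(V\mid\widehat V)$ with $H(V)=\log|\mathcal V|$. Your preliminary reduction to $\widehat V\in\mathcal V$ is harmless, and it conveniently makes $\widehat V$ finite-valued, but the conditional-entropy bound does not otherwise depend on it.
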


\begin{proof}
Let
$
  E:=\1\{\widehat V\ne V\} .
$
Since $V$ is uniform on $\mathcal V$,
$
  H(V)=\log|\mathcal V| .
$
We upper bound the conditional entropy of $V$ given $\widehat V$.  By the chain rule for entropy,
\[
  H(V\mid \widehat V)
  \le H(E,V\mid \widehat V)
  = H(E\mid \widehat V)+H(V\mid E,
\widehat V).
\]
The first term is at most $\log2$ because $E$ is binary.  For the second term, if $E=0$, then $V=\widehat V$, so the conditional entropy is zero.  If $E=1$, the variable $V$ can take at most $|\mathcal V|$ values, so the conditional entropy is at most $\log|\mathcal V|$.  Therefore
\[
  H(V\mid E,
\widehat V)
  \le \Prob[E=1]\log|\mathcal V|
  \le p_e\log|\mathcal V| .
\]
Combining the last two displays gives
\[
  H(V\mid \widehat V)
  \le \log2+p_e\log|\mathcal V|.
\]
Thus
\[
  I(V;\widehat V)
  =H(V)-H(V\mid \widehat V)
  \ge (1-p_e)\log|\mathcal V|-\log2 .
\]
The stated $p_e=1/3$ case follows immediately.
\end{proof}

The next result employs standard ideas from information theory.

\begin{lemma}[Average KL to a reference bounds mutual information]
\label{lem:mi-ref}
Let $V$ be a random variable taking values in a finite set $\mathcal V$.  Conditional on $V=v$, let $X$ have distribution $P_v$ on a finite set $\mathcal X$.  Let $\mu(v)=\Prob[V=v]$, and let
$
  \overline P:=\sum_{v\in\mathcal V}\mu(v)P_v
$
be the marginal distribution of $X$.  Then, for every distribution $Q$ on $\mathcal X$,
\[
  I(V;X)
  \le \sum_{v\in\mathcal V}\mu(v)\KL(P_v\|Q) \,.
\]

\end{lemma}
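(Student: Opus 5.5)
The plan is to compare $I(V;X)$ with the reference $Q$ through the standard ``golden formula'' decomposition. We may assume every $P_v$ with $\mu(v)>0$ satisfies $P_v\ll Q$; otherwise the right-hand side is $+\infty$ and there is nothing to prove. In that case $\overline P\ll Q$ as well, since $\overline P$ is a mixture of the $P_v$. The definition of mutual information gives
\[
  I(V;X)=\KL(P_{V,X}\|P_V\otimes P_X)=\sum_{v}\mu(v)\KL(P_v\|\overline P),
\]
because the joint law factors as $\mu(v)P_v(x)$ and the product law as $\mu(v)\overline P(x)$. Only terms with $\mu(v)P_v(x)>0$ contribute, and for those $\overline P(x)\ge \mu(v)P_v(x)>0$, so no summand is infinite.

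Next I would insert $Q$ into the logarithm. For each pair $(v,x)$ with $\mu(v)P_v(x)>0$, both $\overline P(x)$ and $Q(x)$ are positive, and
\[
  \log\frac{P_v(x)}{\overline P(x)}=\log\frac{P_v(x)}{Q(x)}-\log\frac{\overline P(x)}{Q(x)}.
\]
I would multiply by $\mu(v)P_v(x)$ and sum over $v$ and $x$. The first piece gives $\sum_v\mu(v)\KL(P_v\|Q)$. In the second piece, summing over $v$ first turns $\sum_v\mu(v)P_v(x)$ into $\overline P(x)$, so the second piece equals $\KL(\overline P\|Q)$. All sums are finite, so this rearrangement is legitimate. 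The result is the identity
\[
  I(V;X)=\sum_v\mu(v)\KL(P_v\|Q)-\KL(\overline P\|Q).
\]

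The inequality then follows from Gibbs' inequality, $\KL(\overline P\|Q)\ge0$. The only point needing care is the bookkeeping of zero-probability atoms and the $+\infty$ convention, which the absolute-continuity reduction in the first step handles. I do not expect any real obstacle.
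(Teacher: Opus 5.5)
Your proposal is correct and follows the paper's proof essentially step for step: reduce to the case where every $P_v$ with $\mu(v)>0$ is absolutely continuous with respect to $Q$, write $I(V;X)=\sum_v\mu(v)\KL(P_v\|\overline P)$, and insert $Q$ into the logarithm to get $\sum_v\mu(v)\KL(P_v\|Q)=I(V;X)+\KL(\overline P\|Q)$. Nonnegativity of $\KL(\overline P\|Q)$ then finishes the argument, and your check that $\overline P(x)\ge\mu(v)P_v(x)>0$ on contributing terms handles the zero-atom bookkeeping slightly more carefully than the paper does.
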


\begin{proof}
If the right-hand side is $+\infty$, the claim is trivial, so assume it is finite.  Then every $P_v$ with $\mu(v)>0$ is absolutely continuous with respect to $Q$, and so is $\overline P$.

By definition of mutual information for finite variables,
\[
  I(V;X)=\sum_{v\in\mathcal V}\mu(v)\KL(P_v\|\overline P).
\]
For each $x\in\mathcal X$ with $\overline P(x)>0$, insert the factor $\overline P(x)/Q(x)$ into the KL divergence:
\begin{align*}
  \sum_{v}\mu(v)\KL(P_v\|Q)
  &=\sum_{v}\mu(v)\sum_x P_v(x)
    \log\frac{P_v(x)}{Q(x)} \\
  &=\sum_{v}\mu(v)\sum_x P_v(x)
    \log\frac{P_v(x)}{\overline P(x)}
    +\sum_{v}\mu(v)\sum_x P_v(x)
    \log\frac{\overline P(x)}{Q(x)} \\
  &=I(V;X)+\sum_x \overline P(x)
    \log\frac{\overline P(x)}{Q(x)} \\
  &=I(V;X)+\KL(\overline P\|Q) .
\end{align*}
The last term is nonnegative, which proves the inequality.
\end{proof}

For a nonempty query set $S\subseteq[n]$, a distribution $p\in\Delta([n])$,
and a zero-mass convention $\star\in\{\fail,\unifzero\}$, define
$Q^\star_{p,S}$ as the oracle's one-query response law on the alphabet
$S\cup\{\fail\}$ as follows. If $p(S)>0$, then
\[
  Q^\star_{p,S}(x)=\frac{p_x}{p(S)}\1\{x\in S\},
  \qquad
  Q^\star_{p,S}(\fail)=0 .
\]
If $p(S)=0$ and $\star=\fail$, then $Q^\star_{p,S}=\delta_{\fail}$.
If $p(S)=0$ and $\star=\unifzero$, then $Q^\star_{p,S}$ is uniform on $S$
and assigns mass zero to $\fail$.

For the result below, the transcript may include the internal random seed,
queried sets, oracle responses, and stopping indicator. Early stopping is
interpreted by padding the remaining rounds with deterministic dummy symbols.
This result shows that if the laws
of the oracles are close in TV distance
for every query set in $\cS,$ then
the distributions of the transcripts of any adaptive learner
must also be close.

\begin{lemma}[Adaptive transcript hybrid]
\label{lem:hybrid}
Fix a finite domain $[n]$, an allowed query family
$\cS\subseteq 2^{[n]}\setminus\{\emptyset\}$, and a zero-mass convention
$\star\in\{\fail,\unifzero\}$. Let $p,q\in\Delta([n])$ satisfy
\[
  \sup_{S\in\cS}
  \TV\bigl(Q^\star_{p,S},Q^\star_{q,S}\bigr)
  \le \alpha .
\]
Then, for every randomized adaptive strategy making at most $T$ queries from
$\cS$, the total variation distance between its full transcript law under $p$
and its full transcript law under $q$ is at most $T\alpha$.
\end{lemma}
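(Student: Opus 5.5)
We will prove \Cref{lem:hybrid} by a standard hybrid (telescoping) argument over the $T$ rounds, coupling the transcript under $p$ and under $q$ round by round. First we fix the representation of the strategy. After padding early stopping with dummy symbols, the transcript is $\Pi=(R,S_1,Z_1,\ldots,S_T,Z_T)$. Here $R$ is the internal seed, whose law does not depend on the target. Each query $S_t\in\cS$ (or a dummy) is chosen by a target-independent kernel given $(R,S_1,Z_1,\ldots,S_{t-1},Z_{t-1})$. The response $Z_t$ is drawn from $Q^\star_{\cdot,S_t}$ given the history and $S_t$, or from a fixed dummy law if the learner has stopped. Because all spaces are finite, every law involved is a finite mixture, and TV is well defined.

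Next we define the hybrids. For $0\le j\le T$, let $H_j$ denote the transcript law in which responses in rounds $1,\ldots,j$ are drawn from $Q^\star_{q,S_t}$ and responses in rounds $j+1,\ldots,T$ from $Q^\star_{p,S_t}$, with the same strategy kernels throughout. Then $H_0$ is the transcript law under $p$ and $H_T$ the law under $q$, and the triangle inequality gives $\TV(H_0,H_T)\le\sum_{j=1}^T\TV(H_{j-1},H_j)$.

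It then suffices to show $\TV(H_{j-1},H_j)\le\alpha$. The two hybrids share the law of the prefix $h=(R,S_1,Z_1,\ldots,S_{j-1},Z_{j-1},S_j)$. Conditional on $h$, they differ only in the law of $Z_j$: it is $Q^\star_{p,S_j}$ versus $Q^\star_{q,S_j}$, or identical dummy laws if the learner has stopped. The suffix after $Z_j$ is generated by the same kernels, namely $p$-responses together with the strategy. Two joint laws of the form $\mu\otimes P_h\otimes K$ and $\mu\otimes P'_h\otimes K$ therefore satisfy $\TV\le\mathbb E_{h\sim\mu}\TV(P_h,P'_h)$. This follows from convexity of TV under mixtures and the data-processing inequality for the common Markov kernel $K$; it can also be checked directly by summing $|P_h(z)-P'_h(z)|\,K(\cdot\mid h,z)$ over the finite space. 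The hypothesis $\sup_{S\in\cS}\TV(Q^\star_{p,S},Q^\star_{q,S})\le\alpha$ bounds each conditional term by $\alpha$, with dummy rounds contributing zero, so $\TV(H_{j-1},H_j)\le\alpha$ and the total is at most $T\alpha$.

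The only delicate step is the conditional-to-joint TV bound under adaptivity. The point to make explicit is that the suffix kernel $K$ is the same in both hybrids, which is exactly why hybrids are used instead of a direct comparison. The zero-mass conventions need no special handling, since the hypothesis is stated directly for the $Q^\star$ laws.
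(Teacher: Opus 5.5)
Your proof is correct and is essentially the paper's argument. The paper proves the one-step bound $\TV(PK,QL)\le\TV(P,Q)+\alpha$ and inducts on the number of rounds, which unrolls to exactly your telescoping over hybrids (first $j$ responses under $q$, the rest under $p$); the only cosmetic difference is that the paper conditions on the seed and averages at the end, while you carry the seed as a target-independent prefix coordinate.
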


\begin{proof}
Encode all internal randomization of the learner by a random seed $R$,
independent of the target. We first condition on $R=r$. After this conditioning,
the strategy is deterministic: at each round, the next query is a deterministic
function of the previous transcript, unless the strategy has already stopped.
If it has stopped, all later rounds are padded by a fixed dummy query-response
symbol. Thus it suffices to prove the claim for deterministic fixed-seed
strategies and then average over $R$.

For a deterministic strategy, let $H_t$ denote the padded transcript after
$t$ rounds, including the queried sets and responses. Let $P_t$ and $Q_t$ be
the laws of $H_t$ under targets $p$ and $q$, respectively. We prove by induction
that
\[
  \TV(P_t,Q_t)\le t\alpha,
  \qquad t=0,1,\dots,T .
\]
The case $t=0$ is trivial because the empty transcript has the same law under
both targets.

We use a standard one-step fact about Markov kernels. Here a Markov kernel
$K$ from a finite set $\mathcal H$ to a finite set $\mathcal H'$ simply means
that, for each current history $h\in\mathcal H$, $K(h,\cdot)$ is a probability
distribution on possible next histories in $\mathcal H'$. If $P$ is a
distribution on $\mathcal H$, then $PK$ denotes the induced distribution on
$\mathcal H'$:
\[
  (PK)(A)=\sum_{h\in\mathcal H} P(h)K(h,A).
\]
In words, to sample from $PK$, first draw $h\sim P$, and then draw the next
history according to $K(h,\cdot)$.

Suppose $K$ and $L$ are two such kernels and
\[
  \sup_{h\in\mathcal H}\TV\bigl(K(h,\cdot),L(h,\cdot)\bigr)\le \alpha .
\]
Then, for any two past-history laws $P,Q$,
\[
  \TV(PK,QL)\le \TV(P,Q)+\alpha .
\]
Indeed, by the triangle inequality,
\[
  \TV(PK,QL)
  \le
  \TV(PK,QK)+\TV(QK,QL).
\]
The first term is at most $\TV(P,Q)$ because applying the same randomized
transition cannot increase total variation. In finite spaces this follows from
\[
  \TV(PK,QK)
  =
  \frac12\sum_{z\in\mathcal H'}
  \left|
    \sum_{h\in\mathcal H} (P(h)-Q(h))K(h,z)
  \right|
  \le
  \frac12\sum_{h\in\mathcal H}|P(h)-Q(h)|
  =
  \TV(P,Q).
\]
For the second term,
\[
\begin{aligned}
  \TV(QK,QL)
  &=
  \frac12\sum_{z\in\mathcal H'}
  \left|
    \sum_{h\in\mathcal H} Q(h)\bigl(K(h,z)-L(h,z)\bigr)
  \right|  \\
  &\le
  \sum_{h\in\mathcal H} Q(h)\,
  \TV\bigl(K(h,\cdot),L(h,\cdot)\bigr)
  \le \alpha .
\end{aligned}
\]
This proves the one-step inequality.

We now identify the relevant kernels for the adaptive transcript. Fix a past
history $h$ before round $t$. If the learner has already stopped, then the next
history is obtained by appending the same deterministic dummy symbol under both
targets, so the two next-step kernels are identical.

Otherwise, the deterministic learner chooses a query set
\[
  S_t(h)\in\cS .
\]
Under target $p$, the next oracle response has law
$Q^\star_{p,S_t(h)}$; under target $q$, it has law
$Q^\star_{q,S_t(h)}$. The next transcript is obtained by appending the same
query $S_t(h)$ and the realized response. Equivalently, the next-step kernel
is the distribution obtained from the corresponding oracle response law by the
deterministic map
\[
  y\longmapsto h\circ(S_t(h),y).
\]
Total variation cannot increase under a deterministic map. Therefore
\[
  \TV\bigl(K_p^{(t)}(h,\cdot),K_q^{(t)}(h,\cdot)\bigr)
  \le
  \TV\bigl(Q^\star_{p,S_t(h)},Q^\star_{q,S_t(h)}\bigr)
  \le \alpha .
\]
Thus the one-step inequality applies with
\[
  P_t=P_{t-1}K_p^{(t)},
  \qquad
  Q_t=Q_{t-1}K_q^{(t)},
\]
and gives
\[
  \TV(P_t,Q_t)
  \le
  \TV(P_{t-1},Q_{t-1})+\alpha .
\]
By induction,
\[
  \TV(P_T,Q_T)\le T\alpha
\]
for every fixed seed $r$.

It remains to remove the conditioning on the learner's random seed. Let
$P_T^r$ and $Q_T^r$ be the conditional laws of the padded query-response
transcript $H_T$ given seed $R=r$, under targets $p$ and $q$ respectively.
Let $\nu$ be the common distribution of $R$.

First suppose that the seed is included in the full transcript. Let
$\mathsf P_T$ and $\mathsf Q_T$ denote the laws of the pair $(R,H_T)$ under
targets $p$ and $q$. For any event $E$ in the joint seed-transcript space,
define its section at seed $r$ by
\[
  E_r:=\{h:(r,h)\in E\}.
\]
Then
\[
  \mathsf P_T(E)=\int P_T^r(E_r)\,d\nu(r),
  \qquad
  \mathsf Q_T(E)=\int Q_T^r(E_r)\,d\nu(r).
\]
Hence
\[
\begin{aligned}
  |\mathsf P_T(E)-\mathsf Q_T(E)|
  &=
  \left|
    \int \bigl(P_T^r(E_r)-Q_T^r(E_r)\bigr)\,d\nu(r)
  \right|  \\
  &\le
  \int
  \left|P_T^r(E_r)-Q_T^r(E_r)\right|
  \,d\nu(r)  \\
  &\le
  \int \TV(P_T^r,Q_T^r)\,d\nu(r)
  \le T\alpha .
\end{aligned}
\]
Taking the supremum over all events $E$ gives
\[
  \TV(\mathsf P_T,\mathsf Q_T)\le T\alpha .
\]

If the seed is not included in the transcript, then the observed transcript is
obtained from the full pair $(R,H_T)$ by applying the deterministic projection
that discards $R$. Total variation cannot increase under deterministic
post-processing, so the same bound holds for the transcript law without the
seed as well. This proves the lemma.
\end{proof}

We also use the following standard result about the mixing time of Markov chains. We provide a proof for completeness.
\begin{lemma}[Doeblin minorization \citep{rosenthal1995minorization,roberts2004general}]
\label{lem:doeblin-minorization}
Let $K$ be a Markov kernel on a finite state space $\Omega$, and let $\pi$ be
a stationary distribution for $K$. Suppose that for some $\eta\in(0,1]$,
\[
  K(x,\cdot)\ge \eta\,\pi(\cdot)
  \qquad\text{for every }x\in\Omega .
\]
Then, for every initial distribution $\mu$ on $\Omega$ and every $t\ge 0$,
\[
  \TV(\mu K^t,\pi)\le (1-\eta)^t\TV(\mu,\pi)
  \le (1-\eta)^t .
\]
In particular, if $t\ge \eta^{-1}\log(1/\varepsilon),$
then $\TV(\mu K^t,\pi)\le \varepsilon$.
\end{lemma}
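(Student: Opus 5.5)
The plan is the standard coupling argument for Doeblin minorization. First, I will use the minorization to write $K$ as a mixture. Since $K(x,\cdot)\ge\eta\pi(\cdot)$ pointwise and both sides have total masses $1$ and $\eta$, the residual $K(x,\cdot)-\eta\pi(\cdot)$ is a nonnegative measure of mass $1-\eta$. If $\eta=1$, this forces $K(x,\cdot)=\pi$ for every $x$, so $\mu K=\pi$ and the claim is immediate for $t\ge1$. For $\eta<1$, define
\[
R(x,\cdot):=\frac{K(x,\cdot)-\eta\pi(\cdot)}{1-\eta},
\]
which is a Markov kernel. This gives $K=\eta\Pi+(1-\eta)R$, where $\Pi$ is the rank-one kernel whose every row is $\pi$.

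Next, I will prove the one-step contraction $\TV(\nu K,\pi)\le(1-\eta)\TV(\nu,\pi)$ for every distribution $\nu$ on $\Omega$. By stationarity, $\pi=\pi K$. Both $\nu\Pi$ and $\pi\Pi$ equal $\pi$, since $\Pi$ sends any probability distribution to $\pi$. Hence
\[
\nu K-\pi=\nu K-\pi K=\eta(\nu\Pi-\pi\Pi)+(1-\eta)(\nu R-\pi R)=(1-\eta)(\nu-\pi)R.
\]
A Markov kernel does not increase the $\ell_1$ norm of a signed measure. This is the same finite-space computation used in the proof of \Cref{lem:hybrid}, and it gives $\|(\nu-\pi)R\|_1\le\|\nu-\pi\|_1$. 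Halving both sides yields the contraction.

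Then I will iterate by induction on $t$, applying the one-step bound to $\nu=\mu K^{t-1}$. This gives $\TV(\mu K^t,\pi)\le(1-\eta)^t\TV(\mu,\pi)$, and the second inequality follows from $\TV\le1$. For the final claim, I will use $(1-\eta)^t\le e^{-\eta t}$, which is at most $\varepsilon$ once $t\ge\eta^{-1}\log(1/\varepsilon)$.

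I do not expect a real obstacle here. The only points needing care are the degenerate case $\eta=1$ and the observation that $\nu\Pi=\pi\Pi$, which is what lets the $\Pi$ term cancel exactly.
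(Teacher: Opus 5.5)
Your proposal is correct and follows essentially the same route as the paper: the decomposition $K=\eta\Pi_\pi+(1-\eta)R$, the identity $\mu K-\pi=(1-\eta)(\mu R-\pi R)$ obtained from stationarity, contraction of $\TV$ under $R$, iteration over $t$, and $(1-\eta)^t\le e^{-\eta t}$. The paper records $\pi R=\pi$ as an intermediate step, whereas you cancel the $\Pi$ terms via $\nu\Pi=\pi\Pi$; both uses of stationarity are equivalent, and (despite your opening phrase) your argument is this algebraic decomposition rather than a coupling.
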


\begin{proof}
If $\eta=1$, then $K(x,\cdot)=\pi(\cdot)$ for every $x$, so the claim is
immediate. Assume $\eta<1$. Define
\[
  R(x,\cdot)
  :=
  \frac{K(x,\cdot)-\eta\pi(\cdot)}{1-\eta}.
\]
The minorization assumption implies that $R(x,\cdot)$ is a probability
distribution for every $x$, so $R$ is a Markov kernel. Thus
\[
  K=\eta\Pi_\pi+(1-\eta)R,
\]
where $\Pi_\pi$ is the rank-one kernel whose every row is $\pi$.

Since $\pi$ is stationary for $K$,
\[
  \pi
  =
  \pi K
  =
  \eta\pi+(1-\eta)\pi R,
\]
and hence $\pi R=\pi$. Therefore, for any distribution $\mu$,
\[
  \mu K-\pi
  =
  \eta\pi+(1-\eta)\mu R
  -
  \bigl(\eta\pi+(1-\eta)\pi R\bigr)
  =
  (1-\eta)(\mu R-\pi R).
\]
Taking total variation and using that Markov kernels contract total variation,
\[
  \TV(\mu K,\pi)
  \le
  (1-\eta)\TV(\mu,\pi).
\]
Iterating gives
\[
  \TV(\mu K^t,\pi)\le (1-\eta)^t\TV(\mu,\pi).
\]
Finally, $\TV(\mu,\pi)\le 1$ and
\[
  (1-\eta)^t\le e^{-\eta t},
\]
so $t\ge \eta^{-1}\log(1/\varepsilon)$ implies
$\TV(\mu K^t,\pi)\le\varepsilon$.
\end{proof}

\section{Examples of Hierarchically Comparable Query Families}
\label{sec:hc-examples}

We give several natural classes of query families satisfying
\Cref{def:HC}. The common theme is that, inside each cell of a balanced
partition tree, the family contains enough local queries to compare any two
points without leaving that cell.

\paragraph{Pairwise queries.}
The canonical example is the pairwise query family
\[
  \cS_{\mathrm{pair}}
  :=
  \bigl\{\{x,y\}:x,y\in U,\ x\neq y\bigr\}.
\]
This family is hierarchically comparable with respect to any balanced binary
partition tree over \(U\). Indeed, for every internal node \(C\) and every pair
\(x,y\in C\), the query set
\[
  W_C(\{x,y\}) := \{x,y\}
\]
belongs to \(\cS_{\mathrm{pair}}\) and satisfies
\[
  \{x,y\}\subseteq W_C(\{x,y\})\subseteq C.
\]
Thus pairwise access is the strongest and cleanest instance of local
comparability.

\paragraph{Hierarchical category menus.}
A less restrictive example arises when the domain is organized by a taxonomy or
category tree. Let \(\mathcal T\) be a balanced binary partition tree over
\(U\). For each internal category \(C\in\mathcal T\), suppose there is a local
menu family
\[
  \cS_C\subseteq 2^C\setminus\{\emptyset\}
\]
such that every pair of points in \(C\) appears together in some local menu:
for every \(x,y\in C\), there exists \(W\in\cS_C\) with
\[
  \{x,y\}\subseteq W\subseteq C.
\]
Then the union
\[
  \cS:=\bigcup_{C\in\mathcal T}\cS_C
\]
is hierarchically comparable on \(U\).

One concrete version is the following. Fix a constant \(K\ge 2\), and inside
each cell \(C\) allow all local menus of size \(\min\{K,|C|\}\):
\[
  \cS_C
  :=
  \bigl\{W\subseteq C: |W|=\min\{K,|C|\}\bigr\}.
\]
For any pair \(x,y\in C\), one can choose such a menu \(W\) containing
\(x,y\). When \(K=2\), this recovers pairwise queries; when \(K>2\), it gives
a model in which data providers expose small local menus rather than individual
pairs.

\paragraph{Intervals, boxes, and subcubes.}
Hierarchical comparability also holds for standard geometric query families.
If \(U=\{1,\ldots,n\}\) is ordered and \(\cS\) contains all intervals, take
\(\mathcal T\) to be the balanced dyadic interval tree. Each node \(C\) is an
interval. For any \(x,y\in C\), the interval
\[
  W_C(\{x,y\})
  :=
  [\min\{x,y\},\max\{x,y\}]
\]
is queryable, contains \(x,y\), and is contained in \(C\).

Similarly, if \(U\) is a finite grid and \(\cS\) contains all axis-aligned
boxes, take a balanced recursive spatial partition tree whose cells are boxes.
For any two points \(x,y\) in a box \(C\), the smallest axis-aligned box
containing \(x\) and \(y\) is contained in \(C\) and is queryable.

Finally, if \(U=\{0,1\}^d\) and \(\cS\) contains all subcubes, take a balanced
coordinate-splitting tree. Every node \(C\) is a subcube. For any
\(x,y\in C\), the smallest subcube containing \(x\) and \(y\) is contained in
\(C\) and is queryable. Hence the subcube family is hierarchically comparable.

These geometric examples should be interpreted as structural examples of the
condition. In cases where the full domain \(U\) itself is queryable, the
ordinary empirical learner already gives the near-linear rate; the value of
\Cref{def:HC} is that it also covers local-menu families where learning must
be assembled recursively from smaller comparisons

\section{Intermediate Exponents}\label{app:intermediate}

In this section we prove \Cref{thm:intermediate-rates}. The only difference from the equal-block
construction is that we allow the inflated blocks to have unequal sizes. This
removes the divisibility constraint \(n=1+(H-1)b\).

\begin{lemma}[Unequal block lift]
\label{thm:block-lift}
There exist universal constants \(H_0\ge 5\) and
\(c_0,\eps_0,C>0\) such that the following holds. Fix an odd integer
\[
  H=2m+1\ge H_0 .
\]
Let
\[
  A_1,B_1,\ldots,A_m,B_m
\]
be pairwise disjoint nonempty blocks, and set
\[
  U:=\{0\}\cup A_1\cup B_1\cup\cdots\cup A_m\cup B_m,
  \qquad
  n:=|U|,
\]
and
\[
\mathbf b
:=
\bigl(|A_1|,|B_1|,\ldots,|A_m|,|B_m|\bigr).
\]
There is a complete co-occurrence query family \(\cS_{H,\mathbf b}\) on \(U\)
such that
\[
q^\star_{\cS_{H,\mathbf b},\mathcal P_U}(\eps,1/3)
\ge
c_0\frac{n+H^2}{\eps^2}
  \qquad
  \text{for all }0<\eps\le \eps_0,
\]
while
\[
  q^\star_{\cS_{H,\mathbf b},\mathcal P_U}(\eps,\delta)
  \le
  C\,
  \frac{
    H(H+\log(1/\delta))\log(e/\eps)
    + n
    + H\log(eH/\delta)
  }{\eps^2}
  \qquad
  \text{for all }\eps,\delta\in(0,1).
\]
\end{lemma}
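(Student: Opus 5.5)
We take $\cS_{H,\mathbf b}$ to be the block-inflation of the hub-diluted family of \Cref{thm:quadratic-lower-main}:
\[
  \cS_{H,\mathbf b}:=\{S_{jk}:1\le j<k\le m\}\cup\{A_1,B_1,\dots,A_m,B_m\},
  \qquad
  S_{jk}:=\{0\}\cup A_j\cup B_j\cup A_k\cup B_k .
\]
The co-occurrence graph on $U$ is complete, since $m\ge2$ for $H\ge H_0\ge5$:
\begin{itemize}
\item two points lying in blocks of distinct indices $j\ne k$ share $S_{jk}$;
\item two points lying in blocks of the same index $j$ (including two points of the same block) share any $S_{jk}$ with $k\neq j$;
\item the hub $0$ lies in every $S_{jk}$.
\end{itemize}
The proof then has two parts, an upper bound and a lower bound. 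Each of them separates a \emph{coarse} problem on the $H$ atoms $\{0,A_1,B_1,\dots,A_m,B_m\}$ from a \emph{fine} problem inside the blocks.

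\textbf{Upper bound.} Write $\bar p$ for the coarse distribution on the $H$ atoms and $p(\cdot\mid A_j)$, $p(\cdot\mid B_j)$ for the within-block laws.
\begin{itemize}
\item \emph{Reduction to coarse queries.} A query to $S_{jk}$, followed by reporting the atom containing the response, is exactly a conditional sample of $\bar p$ on the coarse set $\{0,A_j,B_j,A_k,B_k\}$. These coarse sets induce a complete co-occurrence graph on the atoms. The zero-mass convention is also compatible: if $\bar p$ puts zero mass on a coarse set, the UNIF0 response on $S_{jk}$ is uniform on points, which gives some fixed law on atoms. The detailed-balance and minorization steps of \Cref{thm:generic-upper-main} only use that both endpoints then have mass zero, so they go through unchanged.
\item \emph{Coarse estimate.} Running \Cref{alg:mc_sampling} at accuracy $\eps/2$ and confidence $\delta/2$ gives $\hat{\bar p}$ using $O(H(H+\log(1/\delta))\log(e/\eps)/\eps^2)$ queries.
\item \emph{Fine estimates.} Query each block $A_j$ (resp.\ $B_j$) directly $N=C(|A_j|+\log(4H/\delta))/\eps^2$ times. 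By \Cref{lem:empirical}, each empirical within-block law is $\eps/2$-accurate with probability $1-\delta/(2H)$. Summed over blocks, this costs $O((n+H\log(eH/\delta))/\eps^2)$ queries. A zero-mass block returns uniform samples; this is harmless because its weight in the error bound below is $0$.
\item \emph{Combination.} Output $\hat p(x)=\hat{\bar p}(\text{block}(x))\,\hat p(x\mid\text{block}(x))$. The chain rule for TV gives
\[
  \TV(\hat p,p)\le\TV(\hat{\bar p},\bar p)+\sum_{\text{blocks }D}\bar p(D)\,\TV\bigl(\hat p(\cdot\mid D),p(\cdot\mid D)\bigr)\le \eps/2+\eps/2 .
\]
A union bound over the failure events completes the upper bound.
\end{itemize}

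\textbf{Lower bound.} It suffices to prove the two bounds $\Omega(H^2/\eps^2)$ and $\Omega(n/\eps^2)$ separately, since $\max\ge\frac12(\text{sum})$.
\begin{itemize}
\item \emph{The $H^2/\eps^2$ term.} Take $p^\sigma$ from \Cref{thm:quadratic-lower-main} on atoms, and spread each atom's mass uniformly within its block. A query to a block $A_j$ or $B_j$ returns a uniform point regardless of $\sigma$, so it has zero mutual information. A query to $S_{jk}$ factors as (coarse atom, uniform point within the atom). Its KL to the analogous baseline is exactly the coarse KL, at most $2\alpha^2/(m+2)$. The TV separation of codewords is also unchanged. \Cref{lem:mi-ref} and Fano (\Cref{lem:fano}) then give $q\gtrsim m^2/\alpha^2$, exactly as before.
\item \emph{The $n/\eps^2$ term, construction.} We may assume $n\ge 4H$; otherwise this term is dominated by the $H^2$ term. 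Fix the coarse masses, say $\bar p(0)=1/2$ and block $D$ with mass proportional to $|D|$, so that $p$ is roughly uniform, $\asymp 1/n$ per point outside the hub. Inside each block, pair up points (leaving at most one point unpaired) and perturb each pair by factors $(1\pm\alpha\sigma_i)$. Because the two perturbations in a pair cancel, every block's total mass is independent of $\sigma$. There are at least $(n-H)/2\ge n/4$ pairs. Take $\sigma$ in the code of \Cref{lem:code}; the codewords are then $\Omega(\alpha)$-separated in TV.
\item \emph{The $n/\eps^2$ term, information bound.} Every allowed query is a union of whole blocks (possibly with the hub), so its normalizer is $\sigma$-free. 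Its conditional law therefore has likelihood ratio in $[1-\alpha,1+\alpha]$ against the $\alpha=0$ law. The estimate $x\log x\le(x-1)+(x-1)^2$ used in \Cref{thm:full-cond-main} gives KL at most $\alpha^2$ per query. With $\alpha\asymp\eps$, Fano yields $q\gtrsim n/\eps^2$.
\end{itemize}

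\textbf{Main obstacle.} The delicate step is verifying that the two hard ensembles really decouple. In the fine ensemble, no query, including the hub-containing $S_{jk}$, may leak information through the normalizer; this is why the perturbations must be mass-preserving within each block. In the coarse ensemble, the block queries must be completely uninformative. Handling odd or size-one blocks in the pairing, while keeping $\Theta(n)$ perturbed pairs, needs the case split $n\ge 4H$ versus $n<4H$. Constants are then fixed so that $\eps\le\eps_0$ keeps all masses positive.
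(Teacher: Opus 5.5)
Your proof is correct. Your query family and your upper bound coincide with the paper's: the coarse Markov-chain learner runs on $T_{jk}$-projected responses, each block is estimated directly by its empirical law, and the two are combined with the mixture TV bound. You also point out that under UNIF0 a zero-mass $S_{jk}$ projects to a size-weighted law on atoms rather than a uniform one, and that the detailed-balance and minorization steps never use the response law of a zero-mass witness. That is correct, and the paper's proof passes over it.

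The routes differ in the lower bound.

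For the $H^2/\eps^2$ term, the paper does not redo Fano. It gives a black-box simulation: any learner for the lifted problem yields a coarse learner with no more queries. Block queries are answered with the learner's own randomness. A coarse response to $\overline T_{jk}$ is lifted to a uniform point of the returned atom. The output is projected to block masses, under which TV contracts. The coarse lower bound of \Cref{thm:quadratic-lower-main} then transfers verbatim. Your direct recomputation via the KL chain rule, where the within-atom uniform factor cancels, is equally valid. The reduction is more modular; yours is more explicit.

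For the $n/\eps^2$ term, the paper uses a one-line monotonicity argument. Since $\cS_{H,\mathbf b}\subseteq 2^U\setminus\{\emptyset\}$ and $\mathcal P_U^+\subseteq\mathcal P_U$, \Cref{thm:full-cond-main} applied on $U$ already gives $\Omega(n/\eps^2)$. Your mass-preserving pair perturbations, the pairing within blocks, and the case split $n\ge 4H$ versus $n<4H$ are all correct but unnecessary.

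The ``main obstacle'' you identify is also not really one. The proof of \Cref{thm:full-cond-main} already shows that, for an arbitrary query set, the likelihood ratio stays within $1\pm 4\alpha$ even though the normalizer depends on $\sigma$. That gives KL at most $16\alpha^2$ per query. Keeping block masses fixed only improves this constant to $\alpha^2$.
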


\begin{proof}
Let the coarse domain be
\[
  \overline U
  =
  \{0\}
  \cup
  \{\overline a_1,\overline b_1,\ldots,\overline a_m,\overline b_m\}.
\]
The point \(0\) is the hub. The coarse atoms
\(\overline a_j,\overline b_j\) are inflated into the blocks \(A_j,B_j\).

Define the block-query family
\[
  \mathcal B
  :=
  \{A_i:1\le i\le m\}
  \cup
  \{B_i:1\le i\le m\},
\]
and the lifted hard-query family
\[
  \mathcal T
  :=
  \{T_{jk}:1\le j<k\le m\},
  \qquad
  T_{jk}:=\{0\}\cup A_j\cup B_j\cup A_k\cup B_k .
\]
Set
\[
  \cS_{H,\mathbf b}:=\mathcal B\cup\mathcal T .
\]
The co-occurrence graph is complete. Points in the same block co-occur in that
block query. Points in different blocks co-occur in some \(T_{jk}\). The hub
co-occurs with every non-hub point in some \(T_{jk}\).

\paragraph{Lower bound.}
Let
\[
  \overline{\cS}_H
  :=
  \{\overline T_{jk}:1\le j<k\le m\},
  \qquad
  \overline T_{jk}
  :=
  \{0,\overline a_j,\overline b_j,\overline a_k,\overline b_k\}.
\]
This is the coarse hard family from \Cref{thm:quadratic-lower-main}. The quadratic
lower bound gives a full-support hard subfamily
\[
  \overline{\mathcal P}_{H}\subseteq \Delta_+(\overline U)
\]
such that every learner for the coarse problem requires at least
\[
  c_0\frac{H^2}{\eps^2}
\]
queries for all \(0<\eps\le\eps_0\), after adjusting universal constants if
necessary.

For a coarse distribution \(\overline p\in\Delta(\overline U)\), define its
lift to \(U\) by
\[
  \mathrm{Lift}(\overline p)(0):=\overline p(0),
\]
\[
  \mathrm{Lift}(\overline p)(x)
  :=
  \frac{\overline p(\overline a_j)}{|A_j|}
  \quad (x\in A_j),
  \qquad
  \mathrm{Lift}(\overline p)(x)
  :=
  \frac{\overline p(\overline b_j)}{|B_j|}
  \quad (x\in B_j).
\]
Thus a lifted hard distribution has the same coarse block masses as
\(\overline p\) and is uniform inside each block. Since
\(\overline{\mathcal P}_H\) is full support, every lifted hard distribution has
exact support \(U\).

Suppose there were a learner \(L\) for the lifted problem using \(q\) queries.
We construct a coarse learner \(\overline L\) with no larger query count by
simulating the transcript of \(L\). If \(L\) queries a block \(A_j\), then
\(\overline L\) uses its own randomness to return a uniformly random point of
\(A_j\). This is the correct conditional law under every lifted hard target,
because the lifted target is uniform inside \(A_j\). The same applies to
queries of \(B_j\).

If \(L\) queries \(T_{jk}\), then \(\overline L\) queries
\(\overline T_{jk}\). If the coarse oracle returns \(0\), the simulator returns
\(0\). If it returns \(\overline a_j\), the simulator returns a uniformly random
point of \(A_j\), and similarly for the other three non-hub coarse atoms. For
example, for \(x\in A_j\),
\[
  \Pr[\text{simulator returns }x]
  =
  \frac{\overline p(\overline a_j)}
       {\overline p(\overline T_{jk})}
  \cdot \frac1{|A_j|}
  =
  \frac{\mathrm{Lift}(\overline p)(x)}
       {\mathrm{Lift}(\overline p)(T_{jk})}.
\]
Thus the simulated response law is exactly the lifted conditional response law.
By induction over adaptive rounds, the entire simulated transcript has exactly
the same law as the transcript of \(L\) on the lifted target.

When \(L\) outputs \(\widehat p\) on \(U\), the coarse learner outputs the
block projection \(\Pi\widehat p\), defined by
\[
  (\Pi\widehat p)(0)=\widehat p(0),
  \qquad
  (\Pi\widehat p)(\overline a_j)=\widehat p(A_j),
  \qquad
  (\Pi\widehat p)(\overline b_j)=\widehat p(B_j).
\]
For every coarse target \(\overline p\),
\[
  \Pi\,\mathrm{Lift}(\overline p)=\overline p.
\]
Moreover, total variation contracts under this deterministic coarse-graining:
\[
  \TV\bigl(\Pi\widehat p,\overline p\bigr)
  =
  \TV\bigl(\Pi\widehat p,\Pi\,\mathrm{Lift}(\overline p)\bigr)
  \le
  \TV\bigl(\widehat p,\mathrm{Lift}(\overline p)\bigr).
\]
Hence any successful lifted learner would give a successful coarse learner with
no larger query count. The coarse lower bound transfers:
\[
  q^\star_{\cS_{H,\mathbf b},\mathcal P_U}(\eps,1/3)
  \ge
  c_0\frac{H^2}{\eps^2}.
\]

Independently, \Cref{thm:full-cond-main} implies
\[
q^\star_{\cS_{H,\mathbf b},\mathcal P_U}(\eps,1/3)
\ge c\,\frac{n}{\eps^2},
\]
because unrestricted conditional access is at least as informative as access
to $\cS_{H,\mathbf b}$, and $\mathcal P_U$ contains
$\mathcal P_U^+$. Combining this with the transferred coarse lower bound,
\[
q^\star_{\cS_{H,\mathbf b},\mathcal P_U}(\eps,1/3)
\ge c'\frac{n+H^2}{\eps^2}.
\]

\paragraph{Upper bound.}
Fix $p\in\mathcal P_U$. Define the block-mass vector $w$ as above. For each
block
$
D\in\{A_1,B_1,\ldots,A_m,B_m\},
$
define
\[
p_D :=
\begin{cases}
p(\cdot\mid D), & p(D)>0,\\
u_D, & p(D)=0.
\end{cases}
\]

The learner has two stages. First, using the lifted hard queries \(T_{jk}\) and
projecting each response to its coarse atom, the learner obtains conditional
samples from \(w(\cdot\mid \overline T_{jk})\). Since the coarse co-occurrence
graph is complete, \Cref{thm:generic-upper-main} learns \(w\) to TV error
\(\eps/4\) and failure probability \(\delta/2\) using
\[
  O\!\left(
    \frac{
      H(H+\log(1/\delta))\log(e/\eps)
    }{\eps^2}
  \right)
\]
queries.

Second, the learner estimates the internal conditional distribution inside
each block. Since each block is queryable, \Cref{lem:empirical} and a union
bound over the \(2m\) blocks give simultaneous TV error at most \(\eps/4\) in
every block with failure probability at most \(\delta/2\), using
\[
  O\!\left(
    \frac{
      \sum_{j=1}^m |A_j|+\sum_{j=1}^m |B_j|
      + H\log(eH/\delta)
    }{\eps^2}
  \right)
  =
  O\!\left(
    \frac{
      n+H\log(eH/\delta)
    }{\eps^2}
  \right)
\]
queries.

Let \(\widehat w\) be the block-mass estimate and \(\widehat p_D\) the
conditional estimate inside block \(D\). Define
\[
  \widehat p
  :=
  \widehat w_0\delta_0
  +
  \sum_{j=1}^m
  \widehat w_{\overline a_j}\widehat p_{A_j}
  +
  \sum_{j=1}^m
  \widehat w_{\overline b_j}\widehat p_{B_j}.
\]
Also define
\[
  \widetilde p
  :=
  w_0\delta_0
  +
  \sum_{j=1}^m
  w_{\overline a_j}\widehat p_{A_j}
  +
  \sum_{j=1}^m
  w_{\overline b_j}\widehat p_{B_j}.
\]
On the good event for the block estimates,
\[
  \TV(p,\widetilde p)
  \le
  \sum_{j=1}^m
  w_{\overline a_j}\TV(p_{A_j},\widehat p_{A_j})
  +
  \sum_{j=1}^m
  w_{\overline b_j}\TV(p_{B_j},\widehat p_{B_j})
  \le \frac{\eps}{4}.
\]
On the good event for the block-mass estimate,
\[
  \TV(\widetilde p,\widehat p)=\TV(w,\widehat w)\le \frac{\eps}{4}.
\]
Therefore
\[
  \TV(p,\widehat p)\le \frac{\eps}{2}<\eps.
\]
The total query bound is
\[
  O\!\left(
  \frac{
    H(H+\log(1/\delta))\log(e/\eps)
    + n
    + H\log(eH/\delta)
  }{\eps^2}
  \right).
\]
This proves the theorem.
\end{proof}
\begin{proof}[Proof of \Cref{thm:intermediate-rates}]
Fix \(\gamma\in(0,1)\), and set
\[
  a:=\frac{1+\gamma}{2}.
\]
For every sufficiently large \(n\), let \(H_n\) be the largest odd integer at
most \(n^a\). Write
\[
  H_n=2m_n+1.
\]
By increasing \(N_\gamma\) if necessary, we may assume that for all
\(n\ge N_\gamma\),
\[
  H_n\ge H_0,
  \qquad
  H_n\le n,
  \qquad
  H_n\ge \frac12 n^a .
\]
The last inequality holds for all large \(n\) because the largest odd integer
below \(n^a\) is at least \(n^a-2\).

Since \(H_n\le n\), we have
\[
  2m_n=H_n-1\le n-1.
\]
Thus we can partition the \(n-1\) non-hub elements of \([n]\) into
\(2m_n\) nonempty blocks
\[
  A_1,B_1,\ldots,A_{m_n},B_{m_n}.
\]
Apply \Cref{thm:block-lift} with \(H=H_n\) and these blocks, and let
\(\sellers=\cS_{H_n,\mathbf b}\). By construction, the co-occurrence graph of
\(\sellers\) on \([n]\) is complete.

For the lower bound, \Cref{thm:block-lift} gives
\[
  q^\star_{\sellers,\mathcal P_{[n]}}(\eps,1/3)
  \ge
  c_0\frac{H_n^2}{\eps^2}.
\]
Since \(H_n\ge \frac12 n^a\) and \(2a=1+\gamma\),
\[
  H_n^2\ge \frac14 n^{1+\gamma}.
\]
Therefore
\[
  q^\star_{\sellers,\mathcal P_{[n]}}(\eps,1/3)
  \ge
  \frac{c_0}{4}\frac{n^{1+\gamma}}{\eps^2}
\]
for all \(0<\eps\le\eps_0\). Set
\[
  c_1:=\frac{c_0}{4}.
\]

For the upper bound, apply \Cref{thm:block-lift} with \(\delta=1/3\):
\[
  q^\star_{\sellers,\mathcal P_{[n]}}(\eps,1/3)
  \le
  C\,
  \frac{
    H_n(H_n+1)\log(e/\eps)
    + n
    + H_n\log(eH_n)
  }{\eps^2},
\]
where we absorbed constants depending on \(\delta=1/3\).

Using
\[
  H_n^2\le n^{1+\gamma},
  \qquad
  H_n\le n,
  \qquad
  n\le n^{1+\gamma},
\]
and \(n\ge 3\), the numerator is at most
\[
  C'\,n^{1+\gamma}\log(e/\eps)\log n
\]
for a universal constant \(C'\). Finally, shrink \(\eps_0\) if necessary so
that \(\eps_0\le e^{-1}\). Then for \(0<\eps\le\eps_0\),
\[
  \log(e/\eps)\le 2\log(1/\eps).
\]
Thus
\[
  q^\star_{\sellers,\mathcal P_{[n]}}(\eps,1/3)
  \le
  c_2\,
  \frac{
    n^{1+\gamma}\log(1/\eps)\log n
  }{\eps^2}
\]
for a universal constant \(c_2>0\).

Combining the lower and upper bounds proves the theorem.
\end{proof}

\end{document}